\documentclass[11pt]{article}
\usepackage{amsmath,amsthm,amssymb,amscd,fancybox,ifthen,float,epsfig,subcaption}
\usepackage[all]{xy}
\usepackage{times}
\usepackage{color}
\usepackage{hyperref}
\setlength{\marginparwidth}{0.72in}

\floatstyle{plain}
\restylefloat{figure}

\usepackage{color}

\definecolor{gr}{rgb}   {0.,   0.8,   0. }
\definecolor{bl}{rgb}   {0.,   0.5,   1. }
\definecolor{mg}{rgb}   {0.7,  0.,    0.7}

\newcommand{\Bk}{\color{black}}
\newcommand{\Rd}{\color{black}}

\newcommand\rad{\operatorname{rad}}

\newcommand\loc{\operatorname{loc}}
\newcommand\comp{\operatorname{comp}}
\newcommand\Dom{\operatorname{Dom}}

\newcommand\dist{\operatorname{dist}}

\newcommand\tot{\operatorname{tot}}\newcommand\scat{\operatorname{sc}}
\newcommand\In{\operatorname{in}}
\newcommand\Int{\operatorname{int}}

\newcommand\out{\operatorname{out}}

\newcommand\cU{\mathcal U}

\newcommand\fw{\mathfrak w}
\newcommand\fW{\mathfrak W}
\newcommand\fA{\mathfrak A}
\newcommand\fB{\mathfrak B}
\newcommand\fC{\mathfrak C}
\newcommand\fE{\mathfrak E}

\newcommand\cB{\mathcal B}
\newcommand\cR{\mathcal R}

\newcommand\bkappa{\boldsymbol \kappa}

\newcommand\tbkappa{\tilde{\bkappa}}

\newcommand\tsigma{\widetilde \sigma}
\newcommand\ttau{\widetilde \tau}

\newcommand\tk{\tilde k}

\newcommand\tA{\widetilde A}
\newcommand\tB{\widetilde B}
\newcommand\tu{\widetilde u}

\newcommand\bx{\boldsymbol x}

\newcommand\cF{\mathcal F}
\newcommand\cE{\mathcal E}
\newcommand\cW{\mathcal W}

\newcommand\cC{\mathcal{C}}

\newcommand\cS{\mathcal{S}}

\newcommand\cD{\mathcal{D}}

\newcommand\tf{\tilde{f}}
\newcommand\tw{\tilde w_{0+}}


\renewcommand\Re{\operatorname{Re}}
\renewcommand\Im{\operatorname{Im}}

\newcommand\fD{\mathfrak D}

\newcommand\bbC{\mathbb C}

\newcommand\bbN{\mathbb N}

\newcommand\bbR{\mathbb R}
\newcommand\bbZ{\mathbb Z}

\newcommand\pa{\partial}

\newcommand\restrictedto{\upharpoonright}

\newcommand\supp{\operatorname{supp}}










\newcommand\Id{\operatorname{Id}}
\newcommand\Sgn{\operatorname{sgn}}




































\newtheorem{theorem}{Theorem}
\newtheorem*{theorem*}{Theorem}
\newtheorem{proposition}{Proposition}
\newtheorem{corollary}{Corollary}
\newtheorem{lemma}{Lemma}

\theoremstyle{definition}
\newtheorem{definition}{Definition}

\theoremstyle{remark}
\newtheorem{remark}{Remark}

%
\begin{document}

\title{Solving the Scattering Problem for Open Wave-Guide Networks,
  I\\Fundamental Solutions and Integral Equations}

\author{Charles L. Epstein\footnote{ Center for Computational Mathematics, 
  Flatiron Institute, 162 Fifth Avenue, New York, NY 10010. E-mail:
  {cepstein@flatironinstitute.org}. }}
\date{November 6, 2025}
\maketitle

\begin{abstract} We introduce a  layer potential representation for the solution
  of the scattering problem defined by two dielectric channels, or open
  wave-guides, meeting along a straight-line, orthogonal to both channels, which
  is well adapted to numerical implementation. This is a simple example of a
  wave-guide network. The main observation is that the outgoing fundamental
  solution for the operator $\pa_{x_1}^2+\pa_{x_2}^2 +k_1^2+q(x_2),$ acting on
  functions defined in $\bbR^2,$ is easily constructed using the Fourier
  transform in the $x_1$-variable and the classical theory of ordinary
  differential equations. These fundamental solutions can then be used to
  represent the solution to the open wave-guide network scattering problem in half
  planes. The $H^2_{\loc}$-regularity of the solution to the scattering problem
  imposes transmission boundary conditions along the common boundary, which then
  leads to integral equations along the intersection of the half planes.  We
  show that, in appropriate Banach spaces, these integral equations are Fredholm
  equations of index zero, which are therefore generically solvable.  We also
  analyze the representation of the guided modes in our formulation.
\end{abstract}

\tableofcontents

\section{Introduction}
This paper is the first part of a three part series on scattering problems for
open, dielectric wave-guide \Rd networks.\footnote{\Rd In the Applied Math,
Engineering and Physics literature an ``open wave-guide'' usually refers to a
translationally invariant device.  We call these {\em bi-infinite wave-guides.}
The main point of our work is that we consider an assemblage of devices that are
asymptotically modeled by bi-infinite wave-guides, which we call a {\em
  wave-guide network.} \Bk} \Bk We refer to these papers as Parts I, II, and
III. Parts II, and III are the papers~\cite{EpWG2023_2}, and~\cite{EpMaSR2023}.

 In these papers we focus on the scalar case and work in the time harmonic
 setting, i.e. we consider solutions of the form $U(x,t)=e^{-i\omega t}u(x),$
 where $\omega>0.$ \Rd In the setting of electromagnetism,  a scalar
 model for an open wave-guide network is specified by perturbations of the
 background permittivity,  \Bk $\epsilon(x),$ and $u$ solves a
 Schr\"odinger equation
\begin{equation}\label{eqn1.700}
  (\Delta+\omega^2\epsilon(x))u(x)=0.
\end{equation}
The function $\epsilon(x)$ is strictly positive and is assumed to be a constant
outside of a compact set, union with tubular neighborhoods of a finite
collection of semi-infinite rays, $\{\ell_1,\dots,\ell_N\},$ see
Figure~\ref{fig3}. \Rd We call the directions of these rays the {\em wave-guide
  directions}. \Bk In each tubular neighborhood the permittivity is independent
of the distance along the ray, depending only on variables in the orthogonal
hyperplane. \Rd This is also a standard model for acoustic scattering with
$\epsilon(x)$ equal to the reciprocal of the squared sound speed.\Bk

In $d$-dimensions the solution to~\eqref{eqn1.700} is assumed to belong to
$H^2_{\loc}(\bbR^d),$ which means that it is continuous, and its gradient is
continuous (in a trace sense) across jumps in the permittivity. What makes the
scattering problem non-standard is that the set where $\epsilon(x)$ differs from
the constant background value is non-compact. The fact that this set is
localized near rays extending to infinity, makes this problem quite similar,
conceptually, to the classical quantum-mechanical $N$-body problem,
see~\cite{PPS1981,Isozaki94,Melrose94,Vasy2000}.

\Rd
\subsection{Scattering problems for wave-guide networks}
In a scattering problem, one typically specifies an ``incoming field,''
$u^{\In},$ which is a solution of~\eqref{eqn1.700} in the exterior of the
scatterer, i.e. where $\epsilon(x)$ assumes its constant, background value. The
incoming data is cut-off near the scatterer, with a smooth cut-off function, $\psi,$ 
introducing an error term
\begin{equation}\label{eqn2.702}
  w(x)=(\Delta+\omega^2\epsilon(x))[\psi(x)u^{\In}(x)].
\end{equation}
The scattered field is then defined as the ``outgoing'' solution, $u^{\scat}$ to
\begin{equation}
  (\Delta+\omega^2\epsilon(x))u^{\scat}(x)=w(x),
\end{equation}
so that $u^{\tot}=u^{\In}-u^{\scat}$ satisfies
\begin{equation}\label{eqn4.702}
   (\Delta+\omega^2\epsilon(x))u^{\tot}(x)=0\text{ for all }x\in\bbR^d.
\end{equation}

If the scatterer lies in a bounded region, and $\omega^2\epsilon(x)=k_1^2,$ for
large $|x|,$ then the correct notion of outgoing solution is that provided by the
Sommerfeld radiation condition
\begin{equation}
  |(\pa_r-ik_1)u^{\scat}(r\eta)|\leq Cr^{-(\frac{d-1}{2}+\delta)},
\end{equation}
for some fixed $\delta>0$, and all $\eta\in S^{d-1}.$ This solution can be found
using the limiting absorption principle: The operator
$\Delta+\omega^2\epsilon(x)$ is an unbounded self adjoint operator acting on
$H^2(\bbR^d).$ Hence  the resolvent operator
$$R(\delta)=(\Delta+\omega^2\epsilon(x)+i\delta)^{-1}:L^2(\bbR^d)\to H^2(\bbR^d)$$
is a bounded operator for $\delta\neq 0.$ For $\omega^2\epsilon(x)-k_1^2$
compactly supported, it is a classical result,
see~\cite{Agmon}, that the limits
\begin{equation}
  \lim_{\delta\to 0^{\pm}}R(\delta)=R(\pm 0)
\end{equation}
are well defined as bounded maps
$$R(\pm 0):r^{-(\frac 12+\nu)}L^2(\bbR^d)\to r^{(\frac  12+\nu)}L^2(\bbR^d),$$
for any $\nu>0.$ Provided that $w\in r^{-(\frac 12+\nu)}L^2(\bbR^d),$ for some
$\nu>0,$ we can define
\begin{equation}
  u^{\scat}=R(+0)w,
\end{equation}
and this gives the desired outgoing solution, which represents the scattered field.

If the scatterer is not bounded, as in the case of wave-guide networks, then several
issues arise. The most obvious one is that the analogue of the Sommerfeld
radiation condition, which specifies the outgoing solution, is not to be found
in the wave-guide literature. While the limiting absorption principle has been
established in some cases, see~\cite{DeBevierePravica1,DeBevierePravica2}, the
error terms, $w,$ that arise in this context often do not belong to $r^{-(\frac
  12+\nu)}L^2(\bbR^d)$ for any $\nu>0.$ This, in turn, is connected to the difficulty in
specifying the incoming field. We return to the problem of admissible data in
Section~\ref{sec_adm_data} of Part I, and Section 5.3 of Part III.

The limiting absorption principle and  physically motivated radiation
conditions, which imply uniqueness, for wave-guide networks, were obtained in a
paper of Andras Vasy on the 3-body problem in quantum mechanics,
see~\cite{Vasy2000}. Similar radiation conditions were earlier obtained by
H. Isozaki, though his hypotheses exclude the class of potentials that arise in
our setting, see~\cite{Isozaki94}. The radiation condition away from the
wave-guide directions is simply the classical Sommerfeld condition, but the
radiation condition in the wave-guide directions is a good deal more complicated
to state. Radiation conditions for wave-guide networks are the main focus of Part III,
see~\cite{EpMaSR2023}.

Ultimately we will solve a scattering problem like that described in
equations~\eqref{eqn2.702}--\eqref{eqn4.702}. We do not use the limiting
absorption principle to solve this problem, and considerable care is required to
define the incoming field. In this paper we reformulate the scattering problem
for the simplest type of {\em non-trivial} wave-guide network in $\bbR^2,$
illustrated in Fig.~\ref{fig0}, as a transmission problem across an artificial
boundary (the $x_2$-axis) separating the two ends of the wave-guide network, see
equations~\eqref{eqn10.702}--\eqref{eqn14.702}. The solution of the transmission
problem is then reduced to solving a system of integral equations on this
artificial boundary.

In Sections~\ref{sec4} and~\ref{sec3.55} we construct
fundamental solutions for the translation invariant operators that define the
two ends of this network. Using the kernel functions for these operators we
obtain integral equations along the $x_2$-axis,~\eqref{eqn143.35}, that enforce
the transmission boundary conditions. In Section~\ref{sec8} we introduce these
integral equations and function spaces in which they are shown to be Fredholm of
index 0. Hence the integral equations are solvable for all data in these spaces
if it can be shown that the null-space is trivial. To prove such a result
requires a uniqueness theorem for the underlying PDE: i.e.~that an outgoing
solution to~\eqref{eqn1.700} is zero.  As noted above this in turn requires an
outgoing radiation condition, which is only provided in Part III, along with the
desired uniqueness theorem for the integral equations.

In order to show that solutions to our integral equations are outgoing, in the
sense defined in Part III, we need to obtain very precise asymptotics for the
fields defined by these solutions. The nature of the asymptotics of
$u^{\scat}(r\eta)$ depends on whether or not $\eta$ is parallel to a wave-guide
direction, which in our case are the directions $\{(\pm 1,0)\}.$ To further
complicate matters, we need to prove asymptotics away from these directions,
that are uniformly correct as $\eta$ approaches $(\pm 1, 0).$ Obtaining these
asymptotic expansions is the focus of Part II. Using these results, in Part III
we show that the solutions to the scattering problem obtained using our methods
agree with those obtained via the limiting absorption principle.

In the rest of this introduction we give a detailed description of the contents
of Part I and a brief description of the contents of Parts II and III.

\Bk
\subsection{The simplest non-trivial case}
The simplest type of a 2-dimensional wave-guide network arises in the so-called
layered medium problem where $\epsilon(x)$ is a function of a single variable,
say $x_2.$ As noted earlier, we call these bi-infinite wave-guides. \Rd For a
wave-guide it is usually assumed that $\epsilon(x_2)=k_1,$ a constant, outside a bounded
interval. \Bk Under these
conditions this problem is very effectively analyzed by applying the Fourier
transform in the $x_1$-variable. A summary of this analysis is presented in
Appendix~\ref{sec1}. An excellent reference for this problem and its
perturbations is~\cite{Christiansen}.  In this and the following part we analyze
the simplest problem for which such an approach is no longer possible, which is
obtained by having two such wave-guides meeting along a common perpendicular
line.

\begin{figure}
  \centering \includegraphics[width= 10cm]{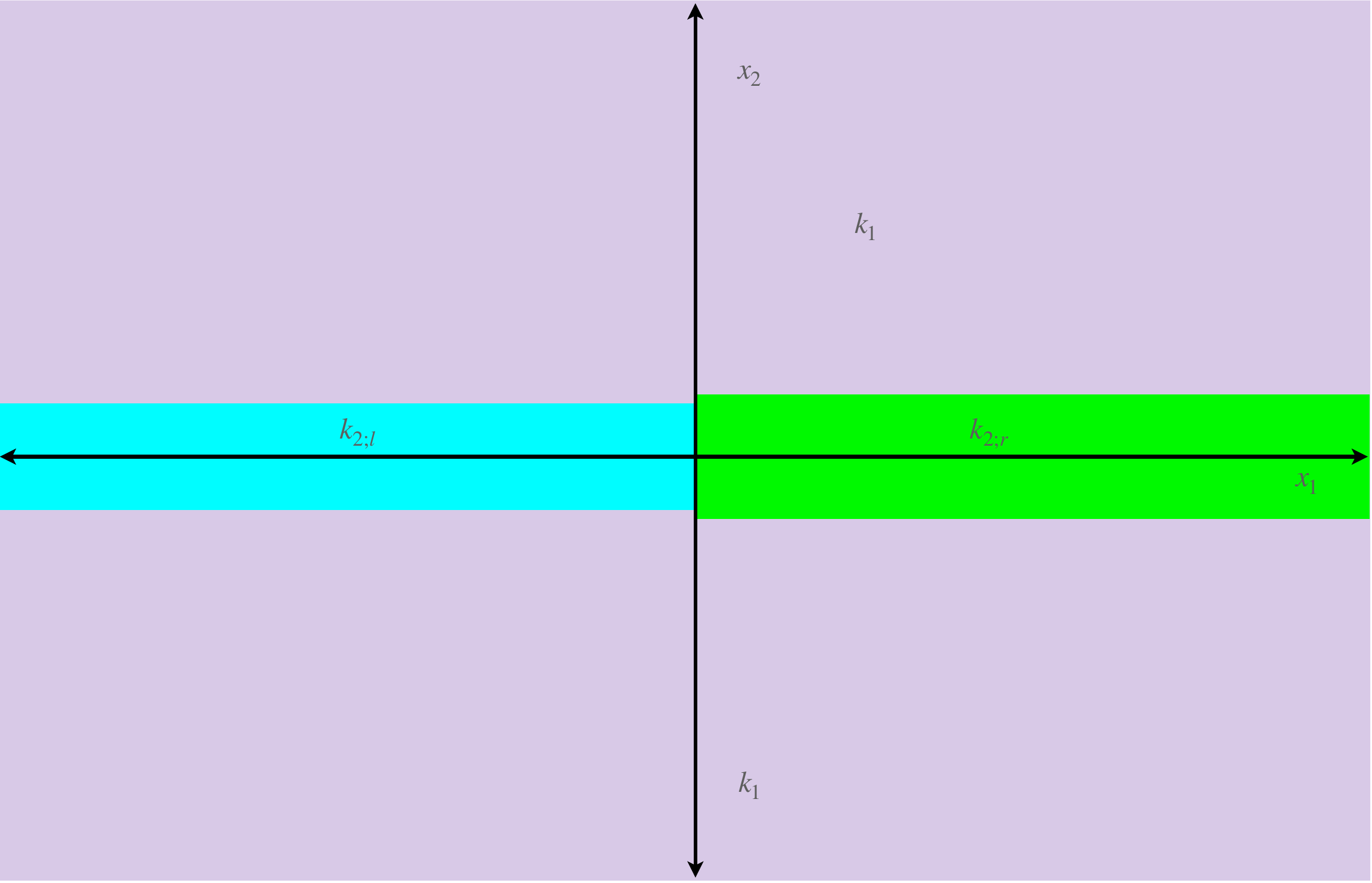}
    \caption{Two dielectric channels meeting along a straight interface. The
      $x_3$-axis is orthogonal to the plane of the image.}  
   \label{fig0}
\end{figure}

These problems have already received a lot of attention in the work of many
authors, for example, see~\cite{DeSantoMartin1997, MagnaniniSantosa, Marcuse,
  Nosich1994,JolyPoirier,BonnetBendhia_etal,
  BonnetTillequin2001,KNH_SIAM_2005,ChandlerMonkThomas2007,ChandlerZhang}.
For the case of two semi-infinite, dielectric channels meeting along a common
perpendicular line, the precise formulation is in terms of a
pair operators:
\begin{equation}
  \Delta +k_1^2+q_{l,r}(x_2),
\end{equation}
see Figure~\ref{fig0}.
Here, and throughout the paper, $l$ refers to $\{x_1<0\},$ and  $r$ refers to
$\{x_1>0\}.$ Though our method of solution applies  to
piecewise continuous potentials with bounded support, for simplicity, we usually assume that
\begin{equation}\label{eqn2.32}
 q_{l,r}(x_2)=(k^2_{2;l,r}-k^2_1)\chi_{[d^-_{l,r},d^+_{l,r}]}(x_2),
\end{equation}
with $k_1=\omega\sqrt{\epsilon_1},$ $k_{2;l,r}=\omega\sqrt{\epsilon_{2;l,r}}.$
We assume the permittivities are positive, real numbers.

\Rd A large part of the interest of wave-guides stems from the fact that they
support {\em wave-guide} modes. In the bi-infinite case these are traveling wave
solutions to the equation $(\Delta+k_1^2+q(x_2))u(x_2)=0$ of the form
\begin{equation}
  u(x_1,x_2)=e^{i\xi x_1}v(x_2),\text{ with }\xi\in\bbR \text{ where }(\pa_{x_2}^2+k_1^2-\xi^2+q(x_2))v(x_2)=0,
\end{equation}
and $v\in L^2(\bbR).$ If $\xi>0,$ then the wave guide mode is right-ward moving;
if $\xi<0,$ it is left-ward moving. For such solutions to exist the potential
has to be positive on part of its support.  For the potentials
in~\eqref{eqn2.32} to support wave-guide modes the wave numbers , $k_{2;l,r},$
within the channels must be larger than $k_1,$ and we usually assume that this
is the case. Our method applies if a wave number in a channel is smaller than $k_1,$
the analysis is simpler and we have not included it.  \Bk

The problem we consider is similar to that considered
in~\cite{BonnetBendhia_etal}, but rather different from that considered
in~\cite{ChandlerZhang}. Chandler-Wilde and Zhang allow more general
potentials, but make assumptions that preclude the existence of wave-guide
modes. They also focus on incoming data similar to plane waves, whereas we
consider incoming fields that decay as $|x_2|\to\infty,$ which includes
wave-guide modes as well as point sources and smooth wave packets formed from
plane waves, see Section~\ref{sec_adm_data}.

We use a symmetric formulation of the transmission problem, which
is suggested by the symmetry of the operator itself.
Data for the scattering problem is defined by incoming solutions
$u^{\In}_{l,r},$ which satisfy the equations
\begin{equation}\label{eqn10.702}
  (\Delta+k_1^2+q_{l,r}(x_2))u^{\In}_{l,r}(x)=0.
\end{equation}
We then look for `outgoing' solutions $u^l,u^r,$ to the equations
\begin{equation}\label{eqn5}
  (\Delta+k^2_1+q_{l,r})u^{l,r}=0\text{ where }x_1< 0, \text{
    resp. } x_1>0,
\end{equation}
with $u^l\in H^2_{\loc}( (-\infty,0]\times \bbR)$ and $u^r\in
  H^2_{\loc}([0,\infty) \times\bbR),$ which satisfy the transmission
    boundary conditions
\begin{equation}\label{eqn5.30}
  \begin{split}
    &g(x_2)\overset{d}{=}
    u^r(0^+,x_2)-u^l(0^-,x_2)=u^{\In}_l(0,x_2)-u^{\In}_r(0,x_2),\\ &h(x_2)
    \overset{d}{=}
    \pa_{x_1}u^r(0^+,x_2)-\pa_{x_1}u^l(0^-,x_2)=\pa_{x_1}u^{\In}(0,x_2)-\pa_{x_1}u^{\In}_r(0,x_2).
  \end{split}
\end{equation}\Rd
For the applications of most immediate interest,  the incoming field is
a sum of \emph{wave-guide modes}. A single incoming wave-guide mode for
$\Delta+k_1^2+q_l(x_2),$ is given by $u^{\In}_l=e^{i\xi_0 x_1}v_0(x_2),$
where $\xi_0>0$ and $v_0(x_2)$ is an $H^2(\bbR)$--solution to
\begin{equation}\label{eqn10.210}
  [\pa_{x_2}^2-\xi_0^2+k_1^2+q_{l}(x_2)]v_0(x_2)=0.
\end{equation}
Such solutions are exponentially decaying outside the channel, which is the
$\supp q_l.$ For such a single mode, with $u^{\In}_r=0,$ the jump data is given
by
\begin{equation}\label{eqn15.570}
  g(x_2)=v_0(x_2),\quad h(x_2)=i\xi_0 v_0(x_2).
\end{equation}

The solution to the transmission problem is constructed using the outgoing
fundamental solutions defined by the operators $(\Delta+k_1^2+q_{l,r}(x_2)).$ As
we prove in Section~\ref{sec8}, the transmission problem has a solution for data $(g,h)$ much
more general than that in~\eqref{eqn15.570}.
Because the domain of integration used to represent the fields, $u^{l,r},$
extends to infinity, it is by no means obvious that these fields will be
outgoing in the sense given in Part III. In fact, this turns out to depend on
the data $(g(x_2),h(x_2)),$ which must itself be outgoing in a certain
sense. This is considered in detail in Section 2 of Part II.\Bk

The regularity and boundary conditions then  imply that  the field
\begin{equation}
  u^{\tot}(x_1,x_2)=
  \begin{cases} &u^l(x_1,x_2)+u^{\In}_l(x_1,x_2)\text{ for }x_1\leq 0,\\
  &u^r(x_1,x_2)+u^{\In}_r(x_1,x_2)\text{ for }x_1>0,
  \end{cases}
\end{equation}
is a weak solution to
\begin{multline}\label{eqn14.702}
  (\Delta+q(x_1,x_2)+k_1^2)u^{\tot}=0,\\\text{ where }q(x_1,x_2)=
  q_l(x_2)\chi_{(-\infty,0]}(x_1)+q_r(x_2)\chi_{(0,\infty)}(x_1),
\end{multline}
which belongs to $H^2_{\loc}(\bbR^2).$ In Part III we show that, for admissible
incoming data, this solution agrees with the limiting absorption solution  and
therefore ours is an explicit method to solve the scattering problem.

Our solutions are outgoing in a sense rather different from that in earlier
papers on {\em this} problem. The outgoing condition we establish in Part III requires the scattered
fields to satisfy the classical Sommerfeld radiation conditions outside
the channels, with appropriate outgoing conditions within the channels. Among
other things this implies that along the line $\{x_1=0\}$ the solutions must
satisfy an estimate like
\begin{equation}
  (\pa_{x_2}\mp ik_1)u^{l,r}(0,x_2)=O((\pm x_2)^{-\frac 32}).
\end{equation}
The transmission boundary condition then implies that
\begin{equation}\label{eqn14.300}
    (\pa_{x_2}\mp ik_1)g(x_2)=O((\pm x_2)^{-\frac 32}),
\end{equation}
which indicates the sorts of conditions that are required on the data for the
existence of an outgoing solution to the transmission problem. 

As noted above, in Part III we introduce radiation conditions for open
wave-guide networks that imply uniqueness.  Under a different definition of
outgoing, the existence and uniqueness of solutions to this problem is
established in~\cite{BonnetBendhia_etal}. We give a different formulation of the
solution that we believe is better suited to numerical approximation. \Rd As shown
in~\cite{EpGo2024, EGHQR2025} our approach provides the basis for an efficient
numerical method to accurately solve this class of scattering problems. \Bk That said,
we have not considered the relationship between our solution and that found
in~\cite{BonnetBendhia_etal}, \Rd  though they are likely to be the same.\Bk

For data of this type we show, in Part II, that the solution found using our
method satisfies the Sommerfeld radiation conditions outside of the channel.
Within the channels we also have contributions from the guided modes, which do
not decay, but are outgoing in an appropriate sense: on the left a guided mode
is outgoing if it is proportional to $v(x_2)e^{-i\xi_0 x_1},$ with $\xi_0>0,$
and outgoing to the right if is proportional to $v(x_2)e^{i\xi_0 x_1},$ with
$\xi_0>0. $ \Rd A radiation condition similar to this for bi-infinite wave-guides is
given in~\cite{CiraoloMagnanini2008}.\Bk

Section~\ref{sec4} introduces our
approach, which uses the explicit fundamental solutions for the perturbed
operators, $\Delta+k_1^2+q_{l,r}(x_2).$ In Sections~\ref{sec3.55} and~\ref{sec6}
we provide details for the construction of, and estimates for the kernels
introduced in Section~\ref{sec4}.  In Section~\ref{sec8} we examine the integral
equations, \eqref{eqn143.35}, that we need to solve and prove mapping results on
the operators defined by these kernels in Banach spaces of continuous functions
with specified rates of decay: For $0\leq\alpha,$ let $\cC_{\alpha}(\bbR)$
denote the subspace of functions $f\in\cC^0(\bbR)$ with
\begin{equation}\label{eqn11.97}
  |f|_{\alpha}=\sup\{(1+|x|)^{\alpha}|f(x)|:\: x\in\bbR\}<\infty.
\end{equation}
We show that the integral equations obtained in Section~\ref{sec8} make sense in
the spaces $\cC_{\alpha}(\bbR)\oplus\cC_{\alpha+\frac 12}(\bbR)$ for
$0<\alpha<\frac 12.$ We next show that, on these spaces, these equations are
Fredholm of index zero, which are therefore generically solvable. \Rd We have not
shown that they are of the form $\Id$+compact.\Bk

In Section~\ref{sec_adm_data} we consider different kinds of physically
interesting data that belong to $\cC_{\alpha}(\bbR)\oplus\cC_{\alpha+\frac
  12}(\bbR)$ for an $0<\alpha<\frac 12.$ In Section~\ref{sec8.91} we derive
equations that allow for the approximate determination of the scattering
relation from incoming wave-guide modes to transmitted and reflected modes,
without solving the full system of equations. In this section we show that the
projections of the solutions onto the wave-guide modes are determined by the
projections of the source terms onto these modes.  Several appendices give
proofs of theorems and ancillary results used  in the main body of the paper.

\subsection{Material in Parts II and III}
We briefly summarize the contents of Parts II and III.

\begin{enumerate}

\item In Part II of the series we obtain precise asymptotics for our solutions,
  which are used to show that they satisfy outgoing radiation conditions
  introduced in Part III. In the complement of the channels this entails showing
  that the solutions satisfy the classical Sommerfeld radiation conditions,
  uniformly `down to the horizon:' If the channel is $\bbR\times [-d,0],$ then
  we need to show that the solution, $u,$ satisfies
 \begin{equation}
   |(\pa_r-ik_1)u(r(\cos\theta,\sin\theta))|\leq\frac{C}{r^{\frac 32}},
 \end{equation}
 uniformly as $\theta\to 0,\pi,$ with a similar estimate for $x_2<-d.$

 As part of this analysis we characterize sources, $\tau, \sigma,$ so that the
 single and double layer potentials,
 \begin{equation}
   \begin{split}
   \cS_k\tau(x)&=\int_{-\infty}^{\infty}g_k(x;0,y_2)\tau(y_2)dy_2,\\
   \cD_k\sigma(x)&=\int_{-\infty}^{\infty}\pa_{y_1}g_k(x;0,y_2)\sigma(y_2)dy_2,
   \end{split}
 \end{equation}
 integrated over the  line $\{x_1=0\},$
 uniformly satisfy Sommerfeld radiation conditions in the half planes, $\{x_1\geq 0\},
 \{x_1\leq 0\}.$ These estimates hold if $\sigma, \tau$ themselves have
 asymptotic expansions
 \begin{equation}
   \begin{split}
     \sigma(y_2)&\sim
     \frac{e^{i|y_2|}}{\sqrt{|y_2|}}\sum_{j=0}^{N}\frac{a_j^{\pm}}{|y_2|^j},\\
         \tau(y_2)&\sim \frac{e^{i|y_2|}}{|y_2|^{\frac 32}}\sum_{j=0}^{N}\frac{b_j^{\pm}}{|y_2|^j},
   \end{split}
 \end{equation}
 for a sufficiently large $N.$ The proofs of these estimates use stationary
 phase, but require  non-conventional contour deformations to obtain estimates
 uniform down to the horizon. 
  
\item In Part III of this series, which is joint with Rafe Mazzeo, we formulate
  physically motivated radiation conditions for the general $d$-dimensional,
  scalar open wave-guide network scattering problem. These conditions imply uniqueness
  for the solution of the PDE, and are satisfied
  by the limiting absorption solution. Indeed these results are implicit
  in~\cite{Vasy2000}, which treats scattering theory for the quantum mechanical
  3-body problem. We  show that the solutions found using our method
  satisfy these conditions, and therefore agree with the limiting absorption
  solutions.  This in turn implies that the integral equations found in Part I
  have a trivial null-space, and are therefore always solvable for data in
  appropriate Banach spaces. We finally show that the channel-to-channel
  scattering coefficients are well defined for this class of problems.
\end{enumerate}

{\small
  
\centerline{\sc Acknowledgments} I would like to thank Leslie Greengard for
suggesting this problem and for many interesting conversations along the way. I
would also like to thank Manas Rachh, Shidong Jiang, Felipe Vico, and Alex Barnett for many
helpful discussions of this material and pointers to the literature on this
problem. I am very grateful to Manas for carefully reading this manuscript
and providing very useful comments. I would like to thank David Jerison, Rafe
Mazzeo and Andras Vasy for useful discussions about these issues. I am also
very grateful for the support of the Flatiron Institute of the Simons
Foundation, and for the support of Stanford University through the Bergman
Visiting Scholarship.  I would finally like the thank the referees for Parts I,
II and III whose careful reading and thoughtful comments have lead to very substantial
improvements in all 3 papers. }

\section{A Layer Potential Approach}\label{sec4}
In this section we reformulate the transmission problem, introduced above, in
terms of integral equations on $\{x_1=0\}.$ The starting point for our approach
is the formula for the solution of the classical transmission problem: find a
function $u$ that solves
$$(\Delta+k^2)u=0,\text{ in }\bbR^2\setminus \{x_1=0\},$$
and the transmission boundary condition
\begin{equation}
  \begin{split}
    u(0^+,x_2)-u(0^-,x_2)&=g(x_2)\\
     \pa_{x_1}u(0^+,x_2)-\pa_{x_1}u(0^-,x_2)&=h(x_2).
  \end{split}
\end{equation}
The `outgoing' solution to this problem is given by
\begin{equation}\label{eqn28.35}
  u(x)=-\cD_k g+\cS_k h,
\end{equation}
where the single and double layers are given by
\begin{equation}
  \cS_kf(x)=\int\limits_{\{y_1=0\}}g_k(x-y)f(y_2)dy_2,\text{ and }
  \cD_kf(x)=\int\limits_{\{y_1=0\}}\pa_{y_1}g_k(x-y)f(y_2)dy_2,
\end{equation}
with
$$g_k(x-y)=\frac{i}{4}H^{(1)}_0(k|x-y|),$$
the outgoing fundamental solution to $\Delta+k^2.$
That this gives a solution follows from the classical jump formul{\ae}
for layer potentials, see~\cite{ColtonKress}.  Whether or not $u$ satisfies the
necessary radiation condition for the uniqueness conditions in~\cite{Odeh1963} to
apply depends on the data, $(g,h).$ This question is discussed in Part II.

Our method for solving the scattering problem for 2 wave-guides~\eqref{eqn5},
and~\eqref{eqn5.30}, uses this general approach.  Let $\fE^{l,r}(x;y)$ denote
the outgoing fundamental solutions for the operators
$\Delta+k_1^2+q_{l,r}(x_2),$ acting on the \emph{whole plane}. The kernels of
these operators take a rather special form:
\begin{equation}
  \fE^{l,r}(x;y)=\frac{i}{4}H^{(1)}_0(k_1|x-y|)+w^{l,r}(x;y),
\end{equation}
where $w^{l,r}$ satisfies the equation
\begin{equation}\label{eqn32.10}
  (\Delta_x+k_1^2+q_{l,r}(x_2))w^{l,r}(x;y)=-\frac{i}{4}q_{l,r}(x_2)H^{(1)}_0(k_1|x-y|).
\end{equation}
\begin{remark}
  Similar ideas for constructing the outgoing fundamental solution appear
  in~\cite{Nosich1994}. The arguments in that paper are of a more physical
  character.
\end{remark}

 The right hand sides of~\eqref{eqn32.10} are compactly supported in the
 $x_2$-variable. As we shall see, these equations can be solved quite
 explicitly by taking the Fourier transform in the $x_1$-variable, and
 solving frequency-by-frequency. The correction terms, $w^{l,r}(x;y),$
 are smooth away from the diagonal, and 2 orders smoother along the diagonal, as
 distributions, than $H^{(1)}_0(k|x-y|).$ Suppose that $\supp q_{l,r}\subset
 [-d,d],$ then, where $x_1=y_1=0,$ the singularities of $w^{l,r}(0,x_2;0,y_2)$
 are contained within the compact set $B_d\cap\{x_2=y_2\},$ where, for $d>0,$
  \begin{equation}
     B_d\overset{d}{=}[-d,d]\times[-d,d].
  \end{equation}

Once the fundamental solutions are constructed we can express the right and left
portions of the solution as
\begin{equation}\label{eqn36.50}
  u^{l,r}=-\cE^{l,r\,'}\sigma+\cE^{l,r}\tau,
\end{equation}
where
\begin{equation}\label{eqn36.34}
  \begin{split}
  &\cE^{l,r}
  f(x)=\int_{-\infty}^{\infty}\fE^{l,r}(x;0,y_2)f(y_2)dy_2=\cS_{k_1}f+\cW^{l,r}f(x)\\
  &\cE^{l,r\, '}
  f(x)=\int_{-\infty}^{\infty}[\pa_{y_1}\fE^{l,r}](x;0,y_2)f(y_2)dy_2=\cD_{k_1}f+\cW^{l,r\,'}f(x),
  \end{split}
\end{equation}
with
\begin{equation}
  \begin{split}
 \cW^{l,r}f(x)=& \int_{-\infty}^{\infty}w^{l,r}(x;0,y_2)f(y_2)dy_2,  \\
  \cW^{l,r\,'}f(x)=&\int_{-\infty}^{\infty}[\pa_{y_1}w^{l,r}](x;0,y_2)f(y_2)dy_2.
  \end{split}
\end{equation}
Using this representation we apply~\eqref{eqn5.30} to derive a system of equations for $(\sigma,\tau)$
in~\eqref{eqn36.50}. These equations are somewhat better behaved than usual in so far as
the singularities of $w^{l,r}$ behave like $|x-y|^2\log|x-y|.$ The restrictions
to the boundary, $x_1=y_1=0,$ are given by
\begin{equation}\label{eqn37.24}
  \begin{split}
    u^r(0,x_2)-u^l(0,x_2)&=\sigma(x_2)+\\ &\int_{-\infty}^{\infty}[\pa_{y_1}w^l-\pa_{y_1}w^r]{(0,x_2;0;y_2)}\sigma(y_2)dy_2+\\
    &\int_{-\infty}^{\infty}[w^r-w^l]{(0,x_2;0;y_2)}\tau(y_2)dy_2=g(x_2),\\
    \pa_{x_1}u^r(0,x_2)-\pa_{x_1}u^l(0,x_2)&=\tau(x_2)+\\ &\int_{-\infty}^{\infty}[\pa_{x_1}w^r-\pa_{x_1}w^l]{(0,x_2;0;y_2)}\tau(y_2)dy_2+\\
    &\int_{-\infty}^{\infty}[\pa^2_{x_1y_1}w^l-\pa^2_{x_1y_1}w^r]{(0,x_2;0;y_2)}\sigma(y_2)dy_2=h(x_2).
  \end{split}
\end{equation}
Only the $H^{(1)}_0$-terms have jumps across $\{x_1=0\};$ as we show in Section~\ref{sec4.1.205}
\begin{equation}
  \pa_{y_1}w^{l,r}(0,x_2;0,y_2)=\pa_{x_1}w^{l,r}(0,x_2;0,y_2)=0,
\end{equation}
so these equations take the very simple form
\begin{equation}
  \left(
  \begin{matrix}
    \Id & D\\C&\Id
  \end{matrix}\right) \left(
  \begin{matrix}
    \sigma\\\tau
  \end{matrix}\right)=  \left(\begin{matrix}
    g\\ h
  \end{matrix}\right).
\end{equation}
The behavior of these equations hinges on the analytic properties of the functions
$w^{l,r}$ along the plane where $x_1=y_1=0,$ which we analyze in the following 2
sections.  It is an interesting feature of this approach, via the fundamental
solutions of $\Delta+k_1^2+q_{l,r}(x_2),$ that if $q_l=q_r,$ then these
equations reduce to
\begin{equation}
   \begin{split}
    u^r(0,x_2)-u^l(0,x_2)&=\sigma(x_2)=g(x_2)\\
    \pa_{x_1}u^r(0,x_2)-\pa_{x_1}u^l(0,x_2)&=\tau(x_2)=h(x_2),
  \end{split}
\end{equation}
exactly as in~\eqref{eqn28.35}.

\section{The Structure of the Perturbed Green's Function}\label{sec3.55}
In Sections~\ref{sec3.55}--\ref{sec6} we simplify notation by dropping the $l,r$ sub- and
super-scripts.  We use the resolvent kernel for $\Delta+k_1^2+q(x_2)+i\delta$ to find the
kernel functions needed to solve the transmission problem above; these functions
are found by solving the equation
      \begin{equation}
        (\Delta_x+k_1^2+q(x_2)+i\delta)w_{\delta}(x;y)=-\frac{i}{4}q(x_2)
        H^{(1)}_0(\sqrt{k^2_1+i\delta}|x-y|),\text{
          for }\delta>0.
      \end{equation}
If $\delta>0,$ then, for fixed $y,$ the right hand side belongs to
$L^2(\bbR^2).$ We then let $\delta\to 0^+,$ and denote this `limiting absorption
solution' by $w_{0+}(x;y).$ This insures that we get the desired
outgoing fundamental solution. To solve the limiting equation we simply take
the Fourier transform in the $x_1$-variable, and use the fact that, as
$\delta\to 0^+,$ we get
\begin{equation}\label{eqn26.51}
  \cF_{x_1}[(i/4)H^{(1)}_0(k_1|x-y|)](\xi)=\frac{ie^{i|x_2-y_2|\sqrt{k_1^2-\xi^2}}e^{-iy_1\xi}}{2\sqrt{k_1^2-\xi^2}};
\end{equation}
in general
$$\sqrt{k_1^2-\xi^2}=i\sqrt{\xi^2-k_1^2},\text{ if }|\xi|>k_1.$$
Let $\tw(\xi,x_2;y)$ denote the Fourier transform of $w_{0^+}$ in the
$x_1$-variable. For $\xi\in\bbR,$ it is the outgoing solution to the ordinary
differential equation
\begin{equation}
  L_{\xi}\tw=(\pa_{x_2}^2-\xi^2+k_1^2+q(x_2))\tw=-q(x_2)\frac{ie^{i|x_2-y_2|\sqrt{k_1^2-\xi^2}}e^{-iy_1\xi}}{2\sqrt{k_1^2-\xi^2}}.
\end{equation}
The spectral theory of $\Delta+q(x_2)$ is reviewed in Appendix~\ref{sec1}.  Let
$R_{\xi,0+}(x_2,z_2)$ denote the `outgoing' resolvent kernels for the
1-dimensional operators $L_{\xi}.$ They are the limits of the resolvent kernels
for $(L_{\xi}+i\delta)^{-1}$ as $\delta\to 0^+,$ constructed out of the basic
solutions, $\tu_{\pm}(\xi,0+;x_2),$ of $L_{\xi}$ and their Wronskian, $W(\xi),$
see~\eqref{eqn209.81}, \eqref{eqn7}, \eqref{eqn18.02}.  Using this kernel we can
write:
\begin{equation}\label{eqn28.51}
  \tw(\xi,x_2;y)=-\frac{ie^{-iy_1\xi}}{2\sqrt{k_1^2-\xi^2}}\int_{-d}^{d}
  R_{\xi,0+}(x_2,z_2)q(z_2)e^{i|z_2-y_2|\sqrt{k_1^2-\xi^2}}dz_2.
\end{equation}

The integral in~\eqref{eqn28.51} extends over the
\emph{finite} interval $ [-d,d]\supset \supp q,$ which is extremely useful from the
perspective of numerical solutions.  Away from the diagonal in $B_d,$ it also decays exponentially as
$|\xi|\to\infty.$ We reconstruct $w_{0+}$ as a
contribution from the continuous spectrum of $L_{\xi},$ and a
contribution from the wave-guide modes. The continuous spectrum
contributes
\begin{equation}\label{eqn81.21} 
  w^c_{0+}(x;y)=\frac{1}{2\pi}\int_{\Gamma_{\nu}^{+}}\tw(\xi,x_2;y)e^{ix_1\xi}d\xi\text{
      for }x_1>0.
\end{equation}
The integral in~\eqref{eqn81.21} is over the contour
$\Gamma_{\nu}^{+},$  which is defined below, see
Figure~\ref{fig2}. In order to be able to deform the contour of integration and
use~\eqref{eqn81.21} to represent $w_{0+},$ it is necessary to assume that $\pm
k_1$ are not roots of Wronskian, $W(\xi),$ of $L_{\xi},$
 see~\eqref{eqn48.51} and~\eqref{eqn55.52}--\eqref{eqn225.81}. If
\begin{equation}\label{eqn45.32}
  q(x_2)=(k_2^2-k_1^2)\chi_{[-d,d]}(x_2),
\end{equation}
then this amounts to the requirement that
\begin{equation}\label{eqn29.96}
  2d\sqrt{k_2^2-k_1^2}\neq n\pi\text{ for }n\in\bbN.
\end{equation}
The details of this construction are in Proposition~\ref{prop5.230} in Appendix~\ref{sec1}.

In this case, as shown in Theorem~\ref{thm0}, the roots of the Wronskian, $\{\pm \xi_n:\:
n=1,\dots N\},$ lie in $(-k_2,-k_1)\cup (k_1,k_2).$ The contour
    $\Gamma_{\nu}^+$ is defined by replacing the intervals
    $\{[\pm\xi_n-\nu,\pm\xi_n+\nu]:\: n=1,\dots, N\}$ in $\bbR$ with
    the semi-circles in the upper half plane
    \begin{equation}
    \{\pm \xi_n+\nu e^{i\theta}:\:\theta\in
           [0,\pi]\},
    \end{equation}
    with a clockwise orientation.  We assume that $\nu>0$ is small
    enough so these semi-circles are disjoint and intersect $\bbR$ within
    $(-k_2,-k_1)\cup (k_1,k_2).$ The contour $\Gamma_{\nu}^-$ is
    obtained by reflecting $\Gamma_{\nu}^+$ in the real axis.
\begin{figure}
  \centering
    \includegraphics[width= 10cm]{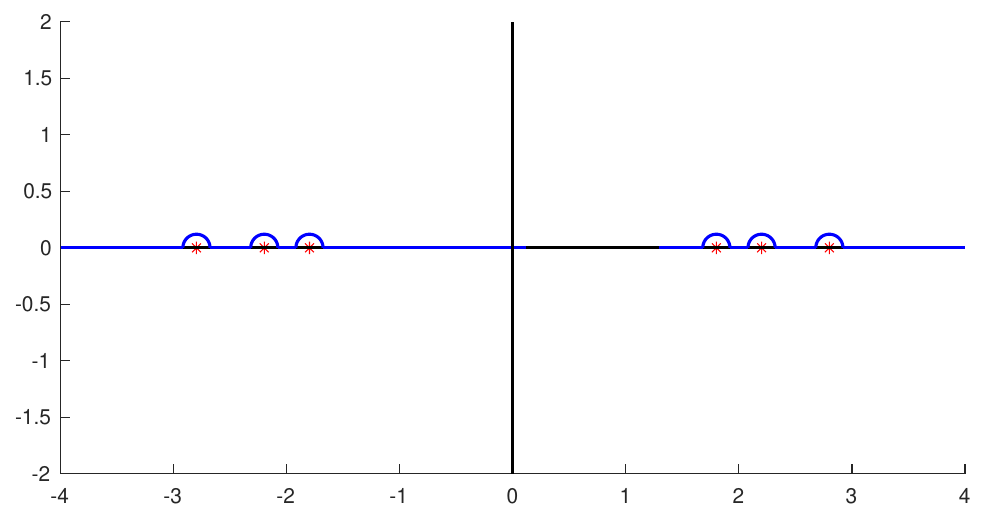}
    \caption{The contour $\Gamma^+_{\nu}$ shown in blue. The roots of
      Wronskian $\{\pm\xi_n\}$ are shown as red asterisks.}
   \label{fig2}
\end{figure}

To $w^c_{0+}(x;y)$ we add a contribution from the wave-guide modes:
\begin{equation}\label{eqn43.34.1}
  w^g_{0+}(x;y)=\sum_{n=1}^Nv_n(x_2)a_n(y_2)e^{i\xi_n(x_1-y_1)}
    \text{
      for }x_1>0,
\end{equation}
which is $i$ times the sum of the residues of $\tw(\xi,x_2;y)e^{ix_1\xi}$ at the
$\{\xi_n\}.$ The $\{v_n(x_2)\}$ are normalized to have $L^2$-norm 1. The
coefficients of the wave-guide mode terms are given by
\begin{equation}
  a_n(y_2)=-\frac{1}{2}\int_{-d}^{d}\frac{e^{-\sqrt{\xi_n^2-k_1^2}|y_2-z_2|}q(z_2)v_n(z_2)dz_2}{\sqrt{\xi_n^2-k_1^2}}.
\end{equation}
These contributions and their $\pa_{x_1}$-derivatives are in
$\cC^1(\bbR).$  Using the fact that
$(\pa_{x_2}^2+k_1^2-\xi_n^2)v_n(x_2)=-q(x_2)v_n(x_2),$ and integration by parts,
we show that
\begin{equation}
  a_n(y_2)=v_n(y_2),
\end{equation}
and therefore
\begin{equation}\label{eqn43.34}
  w^g_{0+}(x;y)=\sum_{n=1}^Nv_n(x_2)v_n(y_2)e^{i\xi_n(x_1-y_1)}
    \text{ for }x_1>0.
\end{equation}
Hence this term and its $x_1$-derivatives decay exponentially as
$|x_2|+|y_2|\to\infty.$ The fundamental solution for $\Delta+k_1^2+q(x_2)$ is
given by
\begin{equation}\label{eqn35.91}
  \fE(x;y)=g_{k_1}(|x-y|)+w_{0+}^{c}(x;y)+w_{0+}^{g}(x;y);
\end{equation}
the guided modes are entirely captured by $w^g_{0+}.$ This and the following
section describes the construction in the right half plane; the left half plane
is obtained by replacing $(x_1,y_1)$ with $(-x_1,-y_1),$ and $\Gamma_{\nu}^+$
with $\Gamma_{\nu}^-,$ its reflection in the $x_1$-axis.

We need to analyze $w_{0+},$ and certain of its derivatives,
along the set $x_1=y_1=0.$ First observe that outside the strip
$|x_2|\leq d,$ this function satisfies the homogeneous elliptic
equation $(\Delta_x+k_1^2)w_{0+}=0,$ and is therefore a
$\cC^{\infty}$-function of $x.$ From~\eqref{eqn26.51} and~\eqref{eqn81.21}  it is
clear that
$\pa_{x_1}w^c_{0+}(x;y)=-\pa_{y_1}w^c_{0+}(x;y),$ hence it suffices to
analyze the smoothness and decay properties of
$\pa_{x_1}^jw^c_{0+}(x;y),$ for $j=0,1,2.$

As noted, away from $x_2=y_2,$ the functions $\pa_{x_1}^jw_{0+}(0,x_2;0,y_2)$
are $\cC^{\infty}$-functions provided $x_2\neq \pm d$ or $y_2\neq \pm d.$ The
function $w_{0+}$  is  $\cC^1$ in a neighborhood of these points with higher
regularity determined by the regularity of $q.$ If $q$ were a smooth
function, then it would follow that
\begin{multline}\label{eqn87.24}
  (\Delta_x+k_1^2+q(x_2))q(x_2)|x-y|^2H^{(1)}_0(k_1|x-y|)=
  \\4 q(x_2)H^{(1)}_0(k_1|x-y|)+O(1+|x-y|\log|x-y|),
\end{multline}
showing that the principal singularity along the diagonal would be given by
\begin{equation}
  w_{0+}(x;y)=-i\frac{q(x_2)}{16}|x-y|^2H^{(1)}_0(k_1|x-y|)+O(|x-y|^3\log|x-y|).
\end{equation}
Even with $q$ given by~\eqref{eqn45.32} this is essentially correct.

\section{Estimates for the Boundary Kernel}\label{sec6}

 The kernels for the integral equations~\eqref{eqn37.24} are constructed from the functions
\begin{equation}\label{eqn87.23}
  \fw^{[j]}(x_2,y_2)=\left[\frac{1}{i}\pa_{x_1}\right]^jw^c_{0+}(x_1,x_2;y_1,y_2)
  \restrictedto_{x_1=y_1=0},\,
  j=0,1,2.
\end{equation}
In this section we state a theorem describing the behavior of these functions.

\begin{theorem}\label{thm00}
  The kernels, $\fw^{[j]}(x_2,y_2),\, j=0,1,2,$ are infinitely differentiable
  outside of $B_d=[-d,d]\times [-d,d].$ Within $B_d$ they are singular along the
  diagonal, where the kernel $\fw^{[j]}(x_2,y_2)$ has an
  $|x_2-y_2|^{2-j}\log|x_2-y_2|$-singularity.

   \begin{enumerate}
   \item If both $|x_2|>d$ and $|y_2|>d,$ then \Rd the kernels are functions of
     $|x_2|+|y_2|,$ and \Bk the following asymptotic expansions hold for
     $j=0,1,2:$
     \begin{equation}\label{eqn171.65}
        \fw^{[j]}(x_2,y_2)\sim
        \frac{e^{ik_1(|x_2|+|y_2|)}}{(|x_2|+|y_2|)^{\frac{j+1}{2}}}\left[M^{\pm,\pm}_{j0}+\sum_{l=1}^{\infty}
          \frac{M^{\pm,\pm}_{jl}}{(|x_2|+|y_2|)^l}\right],\text{ as }|x_2|+|y_2|\to\infty.
     \end{equation}
 \Rd In this set, these kernels are infinitely differentiable and their
 derivatives have asymptotic expansions obtained by differentiating the
 expansions in~\eqref{eqn171.65}. The error terms satisfy uniform estimates
 where $|x_2|+|y_2|>2d+\epsilon,$ for any $\epsilon>0.$\Bk
  \item If one of $|x_2|>d$ or $|y_2|>d,$ then the following asymptotic
     expansions hold for $j=0,1,2:$ 
     \begin{equation}\label{eqn122.56}
       \begin{split}
        \fw^{[j]}(x_2,y_2)&\sim \frac{e^{ik_1|y_2|}}
           {|y_2|^{\frac{j+1}{2}}}\left[\sum_{l=0}^{\infty}\frac{b^{\pm}_{jl}(x_2)}{|y_2|^l}\right],
           \text{
           where } \pm y_2\to \infty,\, |x_2|<d;\\
              \fw^{[j]}(x_2,y_2)&\sim
       \frac{e^{ik_1|x_2|}}{|x_2|^{\frac{j+1}{2}}}\left[\sum_{l=0}^{\infty}
       \frac{c^{\pm}_{jl}(y_2)}{|x_2|^l}\right],       \text{
           where }\pm x_2\to \infty,\, |y_2|<d.
       \end{split}
     \end{equation}
  \Rd In these sets, the kernels are infinitely differentiable in the variable
  whose absolute value is restricted to be greater than $d;$ their derivatives
  in this variable have asymptotic expansions obtained by differentiating the
  expansions in~\eqref{eqn122.56}. The error terms satisfy uniform estimates
  where $|x_2|>d+\epsilon,$ or $|y_2|>d+\epsilon,$ as appropriate, for any
  $\epsilon>0.$\Bk\Bk
     \item
       The kernels
       $$\pa_{x_2}\fw^{[j]}(x_2,y_2) \mp
     ik_1\fw^{[j]}(x_2,y_2),\quad\pa_{y_2}\fw^{[j]}(x_2,y_2) \mp
     ik_1\fw^{[j]}(x_2,y_2)$$
     have asymptotic expansions obtained by applying
     $\pa_{x_2}\mp ik_1,$ or $\pa_{y_2}\mp ik_1$ to the appropriate expansion
     in~\eqref{eqn171.65}, or~\eqref{eqn122.56}. In particular, they satisfy
     \begin{multline}\label{eqn173.65}
               \pa_{x_2}\fw^{[j]}(x_2,y_2) \mp
               ik_1\fw^{[j]}(x_2,y_2)=\\
               \begin{cases} O\left((|x_2|+|y_2|)^{-\frac{j+3}{2}}\right)\text{
                 as } \pm x_2+|y_2|\to\infty,\\
               O\left((|x_2|)^{-\frac{j+3}{2}}\right)\text{
                 as } \pm x_2 \to\infty \text{ with }|y_2|\leq d.
               \end{cases}
     \end{multline}
     \begin{multline}\label{eqn174.65}
               \pa_{y_2}\fw^{[j]}(x_2,y_2) \mp
               ik_1\fw^{[j]}(x_2,y_2)=\\
               \begin{cases} O\left((|x_2|+|y_2|)^{-\frac{j+3}{2}}\right)\text{
                 as }  |x_2|+\pm y_2\to\infty,\\
               O\left((|x_2|)^{-\frac{j+3}{2}}\right)\text{
                 as } \pm y_2 \to\infty \text{ with }|x_2|\leq d.
               \end{cases}
     \end{multline}
   \item The kernel
     \begin{equation}\label{eqn51.300}
          i\fw^{[1]}(x_2,y_2)+\pa_{x_1}w_{0+}^g(0,x_2;0,y_2)=0.
     \end{equation}
          \end{enumerate}
\end{theorem}
\noindent

\begin{remark}
\Rd More complete descriptions of the singularities, within $B_d,$ of the
kernels $\fw^{[j]}(x_2,y_2)$ are given in~\eqref{eqn271.678}. \Bk
\end{remark}
The proof of this theorem requires tedious, but rather standard analysis, using
the properties of solutions to second order ODEs, stationary phase and
integration by parts. \Rd The statements in parts (1) and (2) regarding 
asymptotic expansions for derivatives of the kernel follow, largely from a
classical result found in Coddington and Levinson,~\cite{CoddingtonLevinson}, which
states
     \begin{theorem*}[ Theorem 3.2 (b), Chapter 5 of~\cite{CoddingtonLevinson}]
       If $f(t)\sim \sum_{k=0}^{\infty}p_kt^{-k},$ $f(t)$ is continuously
       differentiable for $t>t_0$ and $f'(t)$ has an asymptotic expansion, then
       \begin{equation}
         f'(t)\sim -\sum_{k=1}^{\infty}kp_kt^{-(k+1)}.
       \end{equation}
     \end{theorem*}

     \noindent
In light of this result all that is needed is a proof that the derivatives of the kernels have
asymptotic expansions, and estimates for the remainder terms. \Bk The details of the
proof of Theorem~\ref{thm00} are given in Appendix~\ref{AppWEsts}.

Similar estimates hold for
$w_{0+}(x_1,x_2;0,y_2),$ and its derivatives
$$\pa_{y_1}w_{0+}(x_1,x_2;0,y_2), \pa_{x_1}w_{0+}(x_1,x_2;0,y_2),\text{ and
}\pa_{x_1}\pa_{y_1}w_{0+}(x_1,x_2;0,y_2),$$ where $x_1>0$ is bounded. Using~\eqref{eqn81.21}
   we express the derivatives of $w_{0+}^c$ as the contour integrals:
   \begin{equation}\label{eqn81.210} 
     \pa_{x_1}^jw^c_{0+}(x;0,y_2)=
     \frac{1}{2\pi}\int_{\Gamma_{\nu}^{+}}(i\xi)^j\tw(\xi,x_2;0,y_2)e^{ix_1\xi}d\xi\text{
    for }x_1>0.
   \end{equation}
  The only difference between these integrals and those estimated in
   Appendix~\ref{AppWEsts} is the factor of $e^{i\xi x_1}$ in the integrand. Recalling
   that
   $$\pa_{x_1}^jw^c_{0+}(x;y)=(-1)^j\pa_{y_1}^jw^c_{0+}(x;y),$$
   it suffices to
  consider these expressions to estimate $\pa_{x_1}w^c_{0+},$
  $\pa_{y_1}w^c_{0+}$ and $\pa_{x_1}\pa_{y_1}w^c_{0+}.$ These estimates are
  essentially the same as those stated in Theorem~\ref{thm00}, as the principal
  term is a stationary phase contribution arising from $\xi=0.$ The contribution
  from the guided modes is clearly infinitely differentiable in the
  $x_1$-variable for all $x_1,$ and satisfies the same estimates as for $x_1=0.$

   It is easy to see that the estimates derived in Section~\ref{sec6.1} for
   $\fw^{[j]},j=0,1,2$ hold equally well for bounded $x_1.$ The addition of the
   factor $e^{i\xi x_1}$ does not change the analysis of the contribution from
   the semi-circular components, $\{C^{\pm}_{j,\nu}\},$ as $\Im\xi\geq 0,$ and
   therefore $|e^{ix_1\xi}|\leq 1$ on this part of the contour. Where
   $|x_2|,|y_2|>d,$ for bounded $x_1,$ the extension to the right half plane has
   the asymptotic expansion
   \begin{equation}
      \pa_{x_1}^jw^c_{0+}(x_1,x_2;0,y_2)=
      C_j\frac{e^{ik_1(|x_2|+|y_2|)}}{(|x_2|+|y_2|)^{\frac{j+1}{2}}}+
      O\left((|x_2|+|y_2|)^{-{\frac{j+3}{2}}}\right)
      \text{ for }j=0,1,2.
   \end{equation}
  
   It is similarly straightforward to handle the estimates where either
   $|x_2|<d,$ or $|y_2|<d.$  Again, since the principal
   contribution comes from $\xi=0,$  for bounded $x_1>0,$ 
   we have
    \begin{equation}
       \begin{split}
         \pa_{x_1}^jw^c_{0+}(x_1,x_2;0,y_2)&=
         \frac{e^{ik_1|y_2|}b^{[j]}_{\pm}(x_2)}{|y_2|^{\frac{j+1}{2}}}+O(|y_2|^{-\frac{j+3}{2}})\text{
             where }\pm y_2>d,\,
         |x_2|<d,\\ \pa_{x_1}^jw^c_{0+}(x_1,x_2;0,y_2)&=\frac{e^{ik_1|x_2|}c^{[j]}_{\pm}(y_2)}
         {|x_2|^{\frac{j+1}{2}}}
           +O(|x_2|^{-\frac{j+3}{2}})\text{ where }\pm x_2>d,\, |y_2|<d.
       \end{split}
    \end{equation}
 In Section~\ref{sec8} we show that the sources $(\sigma,\tau)$ appearing
 in~\eqref{eqn36.50} belong to $\cC_{\alpha}(\bbR)\oplus\cC_{\alpha+\frac
   12}(\bbR),$ for an $0<\alpha<\frac 12,$ see~\eqref{eqn11.97}.  The estimates
 of $w_c(x;0,y_2), \pa_{y_1}w_c(x;0,y_2),$ for bounded $x_1,$ suffice to show
 that the representations of $u^{l,r}$ with such $(\sigma,\tau)$ are given by
 absolutely convergent integrals.

From ellipticity it follows that $w^c_{0+}(x_1,x_2;0,y_2)$ is a 
$\cC^{\infty}$-function if $x_1>0,$ away from $x_2,y_2=\pm d.$  It clear that if $f(y_2)$
is a bounded continuous function, then for any finite $L$ we have
\begin{multline}
  \lim_{x_1\to
    0^+}\pa_{x_1}^j\int_{-L}^{L}w^c_{0+}(x_1,x_2;0,y_2)f(y_2)dy_2=\\
  \int_{-L}^{L}\pa_{x_1}^jw^c_{0+}(0,x_2;0,y_2)f(y_2)dy_2\text{ for }j=0,1,2.
\end{multline}
From the Fourier representations it is clear that the asymptotics, for
$|x_2|+|y_2|$ large, hold uniformly as $x_1\to 0^+.$ Hence if $f$ is
continuous and
\begin{equation}
  |f(y_2)|\leq M\frac{(1+|y_2|)^{\frac j2}}{(1+|y_2|)^{\frac 12+\epsilon}},
\end{equation}
for an $\epsilon>0,$ then,  for $j=0,1,2,$
\begin{equation}
  \begin{split}
  \lim_{x_1\to
    0^+}\pa_{x_1}^j\int_{-\infty}^{\infty}&w^c_{0+}(x_1,x_2;0,y_2)f(y_2)dy_2=\\
&\lim_{x_1\to 0^+}
\int_{-\infty}^{\infty}\pa_{x_1}^jw^c_{0+}(x_1,x_2;0,y_2)f(y_2)dy_2\\
&=\int_{-\infty}^{\infty}\pa_{x_1}^jw^c_{0+}(0,x_2;0,y_2)f(y_2)dy_2\\
&=\left[\pa_{x_1}^j\int_{-\infty}^{\infty}w^c_{0+}(x_1,x_2;0,y_2)f(y_2)dy_2\right]_{x_1=0}.
  \end{split}
\end{equation}
With these observations we can now show that the representation
in~\eqref{eqn36.50} can be used with solutions of the corresponding
boundary integral equations to find and represent the scattered
fields, $u^{l,r}(x).$

   \section{The Integral Equations}\label{sec8}
  Using the computations in Section~\ref{sec6} and Theorem~\ref{thm00} we now
  express the kernels appearing in the integral equations, \eqref{eqn37.24}, in
  terms of $\fw^{[j]}$ and the guided modes for the relevant equations. We
  reintroduce the $l,r$ sub- and superscripts, letting $\fw^{[j]}_{l,r}$ denote
  the kernels defined by $q_{l,r},$ and $w^g_{l,r}$ the contributions of the
  guided modes $\{v_n^{l,r}(x_2)e^{\pm i\xi_n^{l,r}}:\:n=1,\dots, N^{l,r}\},$
  from~\eqref{eqn43.34}. With this notation we have
   \begin{equation}
     \begin{split}
       W^0_{l,r}=w^{l,r}(0,x_2;0,y_2)&=\fw^{[0]}_{l,r}(x_2,y_2)+w^g_{l,r}(0,x_2;0;y_2),\\
       \pa_{x_1}w^{l,r}(0,x_2;0,y_2)&=\pa_{y_1}w^{l,r}(0,x_2;0,y_2)=0,\\
       W^{2}_{l,r}=\pa^2_{x_1y_1}w^{l,r}(0,x_2;0,y_2)&=\fw^{[2]}_{l,r}(x_2,y_2)+
       \pa^2_{x_1y_1}w^g_{l,r}(0,x_2;0;y_2).
       \end{split}
   \end{equation}
   We assume that $\supp q_{l,r}\subset (-d,d).$ 

   The integral equations therefore can be written
   \begin{equation}\label{eqn141.34}
    \left(\begin{matrix}\Id &W^0_r-W^0_l\\W^2_l-W^2_r&\Id\end{matrix}\right)
              \left(\begin{matrix}\sigma\\\tau\end{matrix}\right)=
        \left(\begin{matrix}g\\h\end{matrix}\right),
   \end{equation}
   which, to simplify notation, we rewrite as
     \begin{equation}\label{eqn143.35}
     \left(\begin{matrix}\Id & D\\C&\Id\end{matrix}\right)
              \left(\begin{matrix}\sigma\\\tau\end{matrix}\right)=
        \left(\begin{matrix}g\\h\end{matrix}\right).
     \end{equation}
The main results of this section provide a natural functional analytic setting where
this is  a Fredholm integral equation of index zero, which implies that it is solvable
subject to finitely many linear conditions on $(g,h).$ In Part III we prove that
the null-space is trivial and therefore the equation is always solvable.

   The analysis of $\fw^{[2]}_{l,r}$ shows that the operator $W^2_l-W^2_r$
   is compact on $L^2(\bbR),$ amongst other spaces. While $W^0_r-W^0_l$ is
   smoothing, the  $(|x_2|+|y_2|)^{-\frac 12}$ asymptotic behavior of
   $\fw^{[0]}_{l,r}$ prevents it from being defined on $L^2(\bbR),$ let alone
   compact.   We work instead with  the following subspaces of $\cC^0(\bbR):$
   \begin{definition}
      For $\alpha\in\bbR,$  let $\cC_{\alpha}(\bbR)$ denote
      continuous functions on $\bbR$ with
   \begin{equation}
     |f|_{\alpha}=\sup\{(1+|x|)^{\alpha}|f(x)|:\:x\in\bbR\}<\infty.
   \end{equation}
   \end{definition}
   
   These spaces are somewhat like H\"older spaces, in that, 
   $\cC^{\infty}_{c}(\bbR)$ is not dense in $\cC_{\alpha}(\bbR)$ with respect to
   the $|\cdot|_{\alpha}$-norm. A usable replacement is the fact that
   $\cC^{\infty}_{c}(\bbR)$ is dense in $\cC_{\alpha}(\bbR)$ with respect to the
   $|\cdot|_{\alpha'}$-norm, for any $0<\alpha'<\alpha.$ It is important to have
   a criterion for when a bounded linear operator $A:\cC_{\alpha}(\bbR)\to \cC_{\alpha}(\bbR)$
   is compact. We give a simple sufficient condition.
   \begin{proposition}\label{prop3.97}
     Let $0<\alpha<\beta,$ and let $A:\cC_{\alpha}(\bbR)\to \cC_{\beta}(\bbR)$
     be a bounded linear operator. Let
     $B_r=\{f\in\cC_{\alpha}(\bbR):\:|f|_{\alpha}<r\}.$ If, for any $0<X,$ the
     image $AB_r$ restricted to $[-X,X]$ is a uniformly equicontinuous family of
     functions, then $A:\cC_{\alpha}(\bbR)\to \cC_{\alpha}(\bbR)$ is a compact
     operator.
   \end{proposition}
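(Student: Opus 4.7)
The plan is an Arzel\`a--Ascoli argument, where the gap $\beta-\alpha>0$ in decay exponents supplies the uniform tail control needed to upgrade convergence on compacta to convergence in the $|\cdot|_\alpha$-norm. Let $\{f_n\}\subset B_r$ be an arbitrary bounded sequence in $\cC_\alpha(\bbR)$, and set $g_n=Af_n$. By the boundedness of $A$ we have $|g_n|_\beta\leq \|A\|\,r$, and by hypothesis, for each $X>0$ the restrictions $\{g_n\restrictedto_{[-X,X]}\}$ are uniformly bounded and uniformly equicontinuous. A standard application of Arzel\`a--Ascoli on each interval $[-X,X]$, combined with a Cantor diagonal extraction over $X=1,2,3,\dots$, produces a subsequence (still denoted $\{g_n\}$) converging uniformly on every compact subset of $\bbR$ to some continuous function $g$.

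Passing to pointwise limits in the inequality $(1+|x|)^\beta|g_n(x)|\leq \|A\|\,r$ shows that $g\in\cC_\beta(\bbR)\subset \cC_\alpha(\bbR)$, so the candidate limit lies in the target space. It remains to verify convergence in $|\cdot|_\alpha$. Given $\ep>0$, choose $X$ so large that $2\|A\|\,r\,(1+X)^{\alpha-\beta}<\ep/2$; this is possible since $\alpha-\beta<0$. For $|x|>X$ we have uniformly in $n$,
\begin{equation}
(1+|x|)^\alpha|g_n(x)-g(x)|\leq (1+|x|)^{\alpha-\beta}\bigl(|g_n|_\beta+|g|_\beta\bigr)\leq 2\|A\|\,r\,(1+X)^{\alpha-\beta}<\ep/2.
\end{equation}
On $[-X,X]$ we have the crude bound $(1+|x|)^\alpha\leq(1+X)^\alpha$, so uniform convergence $g_n\to g$ on this compact set gives $\sup_{|x|\leq X}(1+|x|)^\alpha|g_n(x)-g(x)|<\ep/2$ for all $n$ sufficiently large. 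Combining the two estimates yields $|g_n-g|_\alpha<\ep$, proving that every bounded sequence in the domain has an image with an $|\cdot|_\alpha$-convergent subsequence, which is the definition of compactness.

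There is essentially no obstacle beyond bookkeeping: the nontrivial input is the hypothesis of uniform equicontinuity of $AB_r$ on each compact interval (which replaces what one would ordinarily obtain from a modulus-of-continuity estimate on the kernel), and the nontrivial structural ingredient is the strict inequality $\alpha<\beta$, which makes the tails $|x|>X$ negligible \emph{uniformly} in $n$. Without this gap, the sup in $(1+|x|)^\alpha|g_n(x)|$ need not be achieved on a bounded set, and one could not localize to $[-X,X]$ for the Arzel\`a--Ascoli step.
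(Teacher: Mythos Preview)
Your proof is correct and follows essentially the same route as the paper: Arzel\`a--Ascoli plus a diagonal extraction to get locally uniform convergence, then the decay gap $\beta>\alpha$ for uniform tail control to upgrade to $|\cdot|_\alpha$-convergence. The structure and estimates match the paper's argument almost line for line.
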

   \begin{proof}
     To prove the proposition we need to show that if $\{f_n\}\subset
     \cC_{\alpha}(\bbR)$ is a bounded sequence, then $\{Af_n\}$ has a
     $\cC_{\alpha}(\bbR)$-convergent subsequence. The hypotheses of the
     proposition imply that there are positive constants $M,r$ so that
     $\{f_n\}\subset B_r$ and 
     \begin{equation}
       |Af|_{\beta}\leq M|f|_{\alpha}\text{ for all }f\in\cC_{\alpha}(\bbR).
     \end{equation}
         The restriction of $\{Af_n\}$ to any interval $[-X,X]$ is a bounded,
         uniformly equicontinuous family. Hence a simple diagonal argument using
         the Arzela-Ascoli theorem produces a subsequence $\{f_{n_j}\}$ so that
         $\{Af_{n_j}\}$ converges to $g\in\cC^0(\bbR)$ uniformly on any interval
         $[-X,X].$ In fact this sequence also converges in $\cC_{\alpha}(\bbR).$

       For the terms of the sequence we have the estimate
       \begin{equation}
         |Af_{n_j}(x)|(1+|x|)^{\beta}\leq Mr.
       \end{equation}
       Letting $j\to\infty$ shows that
       \begin{equation}
         |g(x)|(1+|x|)^{\beta}\leq Mr
       \end{equation}
       as well. These estimates and the triangle inequality show that 
       \begin{equation}
         |Af_{n_j}(x)-g(x)|(1+|x|)^{\alpha}\leq 2Mr\frac{(1+|x|)^{\alpha}}{(1+|x|)^{\beta}}.
       \end{equation}
        For an $\epsilon>0,$ we can therefore choose $X$ so that
       \begin{equation}
           |Af_{n_j}(x)-g(x)|(1+|x|)^{\alpha}\leq \epsilon\text{ if }|x|>X.
       \end{equation}
       As $\{Af_{n_j}\restrictedto_{[-X,X]}\}$ converges uniformly to
       $g\restrictedto_{[-X,X]},$ there is a $J$ so that if $j>J,$ then
       \begin{equation}
         |Af_{n_j}(x)-g(x)|(1+|x|)^{\alpha}\leq \epsilon\text{ if }|x|\leq X.
       \end{equation}
       Together these estimates show that
       \begin{equation}
         |Af_{n_j}-g|_{\alpha}<\epsilon\text{ if }j>J,
       \end{equation}
       which completes the proof of the proposition.
   \end{proof}
  
   In the estimates below the function $m(q_l,q_r)$ is a continuous function of
   the norms $\|q_l\|_{L^{\infty}},\|q_r\|_{L^{\infty}},$ and
   $\|q_l-q_r\|_{L^{\infty}},$ which satisfies:
   \begin{equation}
     m(q,q)=0.
   \end{equation}
   The estimates on the kernels follow from Theorem~\ref{thm00}.
   The kernel, $k_D(x_2,y_2),$ of $D$ is at least $\cC^1$ and satisfies an
   estimate of the form,
   \begin{equation}
     |k_D(x_2,y_2)|\leq \frac{m(q_l,q_r)}{(1+|x_2|+|y_2|)^{\frac 12}}.
   \end{equation}
   The kernel, $k_C(x_2,y_2),$ of $C$ is singular on the diagonal in $B_d,$ with
   a singularity of the form $\log|x_2-y_2|\chi_{B_d}(x_2,y_2)$ and
   \begin{equation}
     |(1-\varphi(x_2,y_2))k_C(x_2,y_2)|\leq
     \frac{m(q_l,q_r)}{(1+|x_2|+|y_2|)^{\frac 32}}.
   \end{equation}
   Here $\varphi\in\cC^{\infty}_{c}(B_{d+2\epsilon})$ for an $\epsilon>0,$ with
   \begin{equation}
     \varphi(x_2,y_2)=
           1\text{ for }(x_2,y_2)\in B_{d+\epsilon}.
   \end{equation}

  We begin with the following boundedness result for the operator
   appearing in~\eqref{eqn143.35}.
 \begin{proposition}\label{prop1}
   For $0<\alpha<\frac 12,$ there is a constant $M_{\alpha}$ so that
   if $(\sigma,\tau)\in\cC_{\alpha}(\bbR)\oplus\cC_{\alpha+\frac 12}(\bbR),$ then
   \begin{equation}\label{eqn153.06}
     |D\tau|_{\alpha}+|C\sigma|_{\alpha+\frac 12}\leq M_{\alpha}
     m(q_l,q_r)\left[ |\tau|_{\alpha+\frac 12}+|\sigma|_{\alpha}\right].
   \end{equation}
  \Rd  If the supports of $q_l, q_r$ are contained in $(-d,d),$ then the functions
  $C\sigma, D\tau$ belong to $\cC^{\infty}((-\infty,-d]\cup [d,\infty));$
      $C\sigma\in\cC^1([-d,d]),$ and $D\tau$ is H\"older continuous in $[-d,d].$ \Bk
 \end{proposition}

 The proof relies on the following lemma
 \begin{lemma}\label{lem1.0}
   If $0<\alpha<1,$ and $\alpha+\beta>1,$ then, for $x_2>0,$ we have
   the estimate
   \begin{equation}
     \int_{0}^{\infty}\frac{dy_2}{y_2^{\alpha}(x_2+y_2)^{\beta}}\leq \frac{M_{\alpha,\beta}}{x_2^{\alpha+\beta-1}}.
   \end{equation}
   \end{lemma}
   \begin{proof}[Proof of Lemma]
     If we let $y_2=x_2 t,$ then the integral becomes
     \begin{equation}
       \frac{1}{x_2^{\alpha+\beta-1}}\int_{0}^{\infty}\frac{dt}{t^{\alpha}(1+t)^{\beta}}
       \leq \frac{M_{\alpha,\beta}}{x_2^{\alpha+\beta-1}}.
     \end{equation}
   \end{proof}

   \begin{proof}[Proof of Proposition]

     We first consider $C\sigma,$ splitting it into a compactly supported
     part, $C_0,$ whose kernel is given by $\varphi(x_2,y_2) k_C(x_2,y_2),$ and
     an unbounded part, $C_1,$ with kernel $(1-\varphi(x_2,y_2))k_C(x_2,y_2).$ It is clear
     that the compactly supported part satisfies
   \begin{equation}
     |C_0\sigma(x_2)|\leq m(q_l,q_r) |\sigma|_{\alpha}\chi_{[-(d+2\epsilon),d+2\epsilon]}(x_2).
   \end{equation}

   To estimate the other part we observe that, if $0<\alpha<1,$ then
   applying the lemma gives
   \begin{equation}
     \begin{split}
       |C_1\sigma(x_2)|&\leq\int_{\bbR}(1-\varphi(x_2,y_2))|k_C(x_2,y_2)||\sigma(y_2)|dy_2\\
       &\leq \int_{\bbR}\frac{m(q_l,q_r)|\sigma|_{\alpha} 
         dy_2}{|y_2|^{\alpha}(|x_2|+|y_2|)^{\frac 32}}\\
            &\leq K'_{\alpha}\frac{m(q_l,q_r)|\sigma|_{\alpha}}{|x_2|^{\alpha+\frac 12}}.
     \end{split}
   \end{equation}
     Which shows that
    \begin{equation}\label{eqn156.06}
     |C\sigma|_{\alpha+\frac 12}\leq M m(q_l,q_r)|\sigma|_{\alpha}.
   \end{equation}
   We now estimate $|D\tau|_{\alpha},$ assuming that
   $0<\alpha<\frac 12;$ applying the lemma gives
   \begin{equation}
     \begin{split}
       |D\tau(x_2)|&\leq
       m(q_l,q_r)|\tau|_{\alpha+\frac 12}\int_{\bbR}\frac{dy_2}{(1+|y_2|)^{\alpha+\frac
           12}(1+|x_2|+|y_2|)^{\frac 12}}\\
       &\leq  K_{\alpha}\frac{m(q_l,q_r)|\tau|_{\alpha+\frac 12}}{|x_2|^{\alpha}}.
     \end{split}
   \end{equation}
   The estimate in~\eqref{eqn153.06} follows from this and~\eqref{eqn156.06}.

     \Rd The functions $C\sigma(x_2)$ and $D\tau(x_2)$ are infinitely
     differentiable outside the supports of $q_l$ and $q_r.$ This follows from
     the estimates on the kernels of $k_C, k_D$ in Theorem~\ref{thm00} and the
     facts that these kernels are infinitely differentiable in $x_2$ where
     $|x_2|>d,$ and that the derivatives of the kernels have asymptotic
     expansions obtained by differentiating the expansions of $k_C, k_D$ term by
     term.  That $C\sigma\in\cC^1(\bbR)$ follows from the estimates in
     Theorem~\ref{thm00}, which show that $k_C(x_2,y_2)$ is a continuously
     differentiable in $x_2,$ and $\pa_{x_2}k_C(x_2,y_2)$ decays at least as
     rapidly as $k_C(x_2,y_2).$ That $D\tau(x_2)$ is H\"older continuous for
     $|x_2|<d+\epsilon,$ for any $\epsilon>0,$ follows from the description of
     the singularity of $k_D(x_2,y_2)$ in~\eqref{eqn271.678}. \Bk

 \end{proof}

 To prove the solvability of~\eqref{eqn143.35} we observe that
 \begin{equation}\label{eqn80.300}
   \left(\begin{matrix}\Id & -D\\0&\Id\end{matrix}\right) \left(\begin{matrix}\Id & D\\C&\Id\end{matrix}\right)\left(\begin{matrix}\Id & 0\\-C&\Id\end{matrix}\right)=
         \left(\begin{matrix}\Id-DC & 0\\0&\Id\end{matrix}\right).
 \end{equation}
 Hence to show that the operator in~\eqref{eqn143.35} is Fredholm of
 index zero it suffices to prove the compactness of the composition,
 \begin{equation}
   DC:\cC_{\alpha}(\bbR)\longrightarrow \cC_{\alpha}(\bbR)\text{ for }0<\alpha<\frac 12.
 \end{equation}

   In Table~\ref{tab1} we show the leading order
   asymptotics for the kernels of $W_{l,r}^j,$ for $j=0,2,$ assuming the
   channel is centered at $x_2=\gamma.$ The differences between $W_{l,r}^j,$
   and the leading terms, shown in Table~\ref{tab1}, are
   $O((|x_2|+|y_2|)^{-\frac{j+3}{2}});$ these differences are smoothing and
   improve decay and therefore Proposition~\ref{prop3.97} implies that they
   define compact operators from $\cC_{\alpha+\frac
     12}(\bbR)\to\cC_{\alpha}(\bbR),$ ($j=0$)
   $\cC_{\alpha}(\bbR)\to\cC_{\alpha+\frac 12}(\bbR),$  ($j=2$) resp. for any
   $\alpha<\frac 12.$
 
 \begin{table}[h]
   \begin{center}
  \begin{tabular}{c|c|c}
$c^j_{-+}\frac{e^{ik_1(|y_2-\gamma|+|x_2-\gamma|)}}{(|y_2-\gamma|+|x_2-\gamma|)^{\frac{j+1}{2}}}$ &
  $\frac{e^{ik_1|y_2-\gamma|}b^j_+(x_2-\gamma)}{|y_2-\gamma|^{\frac{j+1}{2}}}$ &
  $c^j_{++}\frac{e^{ik_1(|x_2-\gamma|+|y_2-\gamma|)}}{(|y_2-\gamma|+|x_2-\gamma|)^{\frac{j+1}{2}}}$\\
  \hline $\frac{e^{ik_1|y_2-\gamma|}c^j_-(y_2-\gamma)}{|x_2-\gamma|^{\frac{j+1}{2}}}$ & $c_{00}^j(x_2-\gamma,y_2-\gamma)$
  &$\frac{e^{ik_1|x_2-\gamma|}c^j_+(y_2-\gamma)}{|x_2-\gamma|^{\frac{j+1}{2}}}$ \\ \hline
  $c^j_{--}\frac{e^{ik_1(|x_2-\gamma|+|y_2-\gamma|)}}{(|y_2-\gamma|+|x_2-\gamma|)^{\frac{j+1}{2}}}$ &
  $\frac{e^{ik_1|y_2-\gamma|}b^j_-(x_2-\gamma)}{|y_2-\gamma|^{\frac{j+1}{2}}}$ &
  $c^j_{+-}\frac{e^{ik_1(|x_2-\gamma|+|y_2-\gamma|)}}{(|x_2-\gamma|+|y_2-\gamma|)^{\frac{j+1}{2}}}$
  \end{tabular}
  \end{center}
  \caption{Schematic for the structure of the leading terms of
    $W^j_{l,r}$ assuming the channel is centered on $\gamma.$ If the channel
    has width $2\delta,$ then $+$ is the requirement that a variable is greater than $\gamma+\delta,$ and
    $-$ is the requirement that a variable is less than $\gamma-\delta.$ }
  \label{tab1}
 \end{table}

 \begin{remark}\label{rmk10}
 Before proceeding with our analysis, we observe that the kernels for $C$ and
 $D$ have asymptotic expansions exactly of the form given in
 Theorem~\ref{thm00}. In the statement of the theorem we normalize the
 coordinates so that the support of the potential is a symmetric interval
 $[-d,d].$ Clearly if we have both a left and a right potential this may not be
 possible. We can pick coordinates so that $\supp q_l=[-\delta_l,\delta_l],$ but then $\supp
 q_r=[\gamma_r-\delta_r,\gamma_r+\delta_r],$ for some positive $\gamma_r$ and $\delta_r.$
 The terms of the asymptotic expansion coming from $q_r$ are of the form given
 in Table~\ref{tab1}. Since, for example, if $x_2,y_2>\gamma_r,$ $\alpha>0,$ then
 \begin{equation}
   \frac{1}{(x_2+y_2-2\gamma_r)^{\alpha}}=\frac{1}{(x_2+y_2)^{\alpha}}\cdot
   \sum_{j=0}^{\infty} C_{j,\alpha}\left(\frac{2\gamma_r}{x_2+y_2}\right)^j,
 \end{equation}
it is apparent that, for large $|x_2|+|y_2|$ these expansions can be rewritten
to take exactly the form in Theorem~\ref{thm00}.  Hence the kernels for $C$ and
$D$ also have such expansions, outside an interval contains $\supp q_l\cup\supp
q_r,$
 \end{remark}
 
To analyze the kernel of composition $D C,$ we choose $d>0,$ as above, so that the
supports of $q_l$ and $q_r$ are contained in $(-d,d).$ We choose a
function $\varphi\in\cC^{\infty}_c(\Int B_{d+4\epsilon}),$ which equals $1$ in
$B_{d+2\epsilon}.$ We let $C_0$ (resp. $D_0$) have the kernel
$\varphi(x_2,y_2)k_C(x_2,y_2),$ (resp. $\varphi(x_2,y_2)k_D(x_2,y_2)$), and $C_1$ (resp. $D_1$) have kernel
$(1-\varphi(x_2,y_2))k_C(x_2,y_2),$
(resp. $(1-\varphi(x_2,y_2))k_D(x_2,y_2)$). The composition then splits into the
terms
\begin{equation}
  DC=D_0C_0+D_0C_1+D_1C_0+D_1C_1.
\end{equation}
The kernels of $D_0$ and $C_0$ are compactly supported. The kernel of $C_0$ is
differentiable, whereas the kernel of $D_0$ has a $\log|x_2-y_2|$-singularity
for $(x_2,y_2)\in B_d,$ and a $\log(|x_2+d|+|y_2+d|)+\log(|x_2-d|+|y_2-d|)$ in
$B_d^c.$ Such a kernel maps bounded data into H\"older continuous data. The
kernels of $D_1,C_1$ are continuously differentiable and have specified rates of
decay; it is not difficult to show that
\begin{equation}\label{eqn163.57}
  D_0C_0+D_0C_1+D_1C_0:\cC_{\alpha}(\bbR)\to\cC_{\alpha}(\bbR)\text{ is compact
    for any }0<\alpha<\frac 12.
\end{equation}

Using smooth cut-off functions, we now divide $D_1,C_1$ into operators
$D_{11},C_{11},$ with kernels supported in the set
$\{(x_2,y_2):\:|x_2|>d+\epsilon,|y_2|>d+\epsilon\},$ and the remainders
$D_{10}=D_1-D_{11},C_{10}=C_1-C_{11}.$ This additional splitting is useful as
the kernels for $D_{11},C_{11}$ have simpler asymptotics outside of the channels
centered on the $x_2$ and $y_2$ axes. The leading terms in the expansions of the
kernels of $D_{10}, C_{10}$ define finite rank operators, and the remaining terms
are obviously compact. It therefore follows that
\begin{equation}\label{eqn164.57}
  D_{10}C_{10}+D_{10}C_{11}+D_{11}C_{10}:\cC_{\alpha}(\bbR)\to\cC_{\alpha}(\bbR)\text{ is compact
    for any }0<\alpha<\frac 12.
\end{equation}

This leaves just the $D_{11}C_{11}$ term. If we suppose that the left channel is
centered at $0$ and the right at $\gamma,$ then,
from Table~\ref{tab1}, it follows that  the leading terms of $k_{C_{11}},k_{D_{11}}$ are given by
   \begin{multline}\label{eqn164.95}
     \tk^{0}_{j}=\sum_{\chi_0,\chi_1\in\{-,+\}}
     \psi(\chi_0 x_2)
     \psi(\chi_1y_2)\Bigg[c_{\chi_0\chi_1}^{l,j}\frac{e^{ik_1(|x_2|+|y_2|)}}{(|x_2|+|y_2|)^{\frac{j+1}{2}}}-\\
       c_{\chi_0\chi_1}^{r,j}\frac{e^{ik_1(|x_2-\gamma|+|y_2-\gamma|)}}{(|x_2-\gamma|+|y_2-\gamma|)^{\frac{j+1}{2}}}\Bigg],
   \end{multline}
   with $j=0$ for $D$ and $j=2$ for $C,$ and $\psi(z)\in\cC^{\infty}(\bbR)$ is a
   non-negative,  function supported where $z\geq d+\epsilon$ and equal to $1$
   where $z>d+4\epsilon.$ Note that $d$ is selected so that $\supp q_l,\supp
   q_r$ are compact subsets of $(-d,d).$ The very simple form of these kernels
   allows us to estimate the leading order part of the kernel of the
   composition, $D_{11}C_{11}.$
   \begin{proposition}\label{prop2.96}
     There are positive constants $m,M$ so that
     \begin{equation}\label{eqn165.95}
       \left|\int_{-\infty}^{\infty}\tk^0_0(x_2,z)\tk^0_2(z,y_2)dz\right|\leq
       \frac{M}{(|x_2|+m)^{\frac 12}(|y_2|+m)^{\frac 32}}.
     \end{equation}
   \end{proposition}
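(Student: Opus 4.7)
The plan is to expand $\tk^0_0(x_2,z)\,\tk^0_2(z,y_2)$ as a finite sum of elementary terms, each of the form
\begin{equation*}
\psi(\epsilon_0 x_2)\,\psi(\epsilon z)^2\,\psi(\epsilon_1 y_2)\cdot\frac{e^{ik_1(|x_2-\gamma_0|+|z-\gamma_0|+|z-\gamma_1|+|y_2-\gamma_1|)}}{(|x_2-\gamma_0|+|z-\gamma_0|)^{\frac{1}{2}}(|z-\gamma_1|+|y_2-\gamma_1|)^{\frac{3}{2}}},
\end{equation*}
indexed by signs $\epsilon_0,\epsilon,\epsilon_1\in\{-,+\}$ and channel centers $\gamma_0,\gamma_1\in\{0,\gamma\}$; by the triangle inequality it suffices to prove \eqref{eqn165.95} for each such summand separately.

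Choose $R$ large enough that $[-R,R]$ contains both channels, the support of $\psi'(\pm\cdot)$ and the point $\gamma$, and fix a cutoff $\chi\in\cC^{\infty}_c(\bbR)$ with $\chi\equiv 1$ on $[-R,R]$ and $\supp\chi\subset[-R-1,R+1]$. Split the $z$-integral as $\chi(z)+(1-\chi(z))$. The $\chi$-piece integrates over a fixed compact set, and the pointwise bounds $|\tk^0_0(x_2,z)|\leq M(1+|x_2|)^{-\frac{1}{2}}$, $|\tk^0_2(z,y_2)|\leq M(1+|y_2|)^{-\frac{3}{2}}$ on this set (immediate from Section~\ref{sec4.4.72}) immediately yield a contribution of the desired form.

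On $\supp(1-\chi)$ the variable $z$ has a definite sign $\epsilon$ with $|z|>R$, so all absolute values in the phase resolve and the exponent in the numerator equals $2\epsilon k_1 z+F_\epsilon(x_2,y_2)$ with $F_\epsilon$ independent of $z$. Thus the phase is linear in $z$ with derivative $\pm 2k_1\neq 0$. A single integration by parts in $z$, with no boundary terms (thanks to the smooth cutoff and the amplitude's decay at infinity), replaces the integrand by $(2ik_1)^{-1}$ times the $z$-derivative of the remaining amplitude. That derivative is pointwise bounded by a constant multiple of
\begin{equation*}
(|x_2|+|z|+1)^{-\frac{3}{2}}(|z|+|y_2|+1)^{-\frac{3}{2}}+(|x_2|+|z|+1)^{-\frac{1}{2}}(|z|+|y_2|+1)^{-\frac{5}{2}},
\end{equation*}
together with a compactly supported remainder from $\chi'$ handled as in the $\chi$-piece.

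It remains to establish the elementary inequality, for any $m>0$,
\begin{equation*}
\int_{0}^{\infty}\!\!\frac{dt}{(|x_2|+t+m)^{\frac{3}{2}}(t+|y_2|+m)^{\frac{3}{2}}}+\int_{0}^{\infty}\!\!\frac{dt}{(|x_2|+t+m)^{\frac{1}{2}}(t+|y_2|+m)^{\frac{5}{2}}}\leq \frac{M}{(|x_2|+m)^{\frac{1}{2}}(|y_2|+m)^{\frac{3}{2}}},
\end{equation*}
which is proved by splitting $[0,\infty)$ at $\min(|x_2|,|y_2|)$ and $\max(|x_2|,|y_2|)$ and estimating each subinterval by replacing each factor with its value at the appropriate endpoint, treating the two sub-cases $|x_2|\leq|y_2|$ and $|y_2|<|x_2|$ separately. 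Combined with the $\chi$-piece bound this proves \eqref{eqn165.95}. The main obstacle is the integration-by-parts step: one must verify that across \emph{every} indexing choice $(\epsilon_0,\epsilon,\epsilon_1,\gamma_0,\gamma_1)$ the phase really does collapse to a linear function of $z$ with the common nonzero derivative $\pm 2k_1$ on $\supp(1-\chi)$, so that one IBP is both legitimate and effective. Once this is checked, no cancellation between left- and right-channel contributions is needed; the improvement from $(|x_2|+|z|)^{-\frac{1}{2}}(|z|+|y_2|)^{-\frac{3}{2}}$ (which is not integrable to give \eqref{eqn165.95} on its own) to the stated bound comes entirely from the oscillation $e^{\pm 2ik_1 z}$.
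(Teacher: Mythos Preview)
Your argument is correct and follows essentially the same route as the paper: reduce to elementary summands, observe that on each half-line the phase collapses to $e^{\pm 2ik_1 z}$, and integrate by parts. The only difference is cosmetic: the paper performs a second integration by parts on the terms with exponents $(-\tfrac32,-\tfrac32)$ and $(-\tfrac12,-\tfrac52)$, whereas you stop after one IBP and instead prove the elementary integral inequality directly; both yield the stated bound.
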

   \begin{proof}
     From~\eqref{eqn164.95} it follows that the integral
     in~\eqref{eqn165.95} is a sum of integrals over either
     $(-\infty,-d]$ or $[d,\infty)$ consisting of terms of the form
     \begin{equation}\label{eqn176.97}
               e^{ik_1(|x_2-a|+|y_2-b|)}\frac{\psi(|z|)e^{ik_1
                   (|z-a|+|z-b|)}}{(|x_2-a|+|z-a|)^{\frac
                   12}(|y_2-b|+|z-b|)^{\frac 32}},
     \end{equation}
     where $a$ and $b$ equal either $0$ or $\gamma.$ Within the domain of
     integration neither $z-a,$ nor $z-b$ changes sign. All of the various terms
     are estimated by integrating by parts. For example, in case $a=b=0,$ an
     integration by parts shows that the integral over $[d,\infty)$  equals
     \begin{multline}
      \int_{d}^{\infty}\frac{\psi(z)e^{2ik_1z}dz}{(|x_2|+z)^{\frac
                   12}(|y_2|+z)^{\frac 32}}=
     \frac{1}{2ik_1}\Bigg[ \int_{d}^{\infty}
      \frac{-e^{2ik_1z}\psi'(z)dz}{(|x_2|+z)^{\frac
                12}(|y_2|+z)^{\frac 32}}+\\
            \frac{1}{2}\int_{d}^{\infty}\frac{e^{2ik_1  z}\psi(z)}{(|x_2|+z)^{\frac
                12}(|y_2|+z)^{\frac 32}}\left(\frac{1}{|x_2|+z}+
            \frac{3}{|y_2|+z}\right)dz\Bigg].
     \end{multline}
     The first integral on the right is easily seen to satisfy an estimate like
     that in~\eqref{eqn165.95}.  Integrating by parts one more time in the other
     terms gives a similar formula from which the estimates follows easily. All
     other types of terms appearing in~\eqref{eqn176.97} are estimated using the same integrations by parts.
   \end{proof}

   We have the following corollary.
   \begin{corollary}
  For any $0<\alpha<\frac 12,$ the operator
  $D_{11}C_{11}:\cC_{\alpha}(\bbR)\to\cC_{\frac 12}(\bbR)$ is bounded,
  and therefore $DC:\cC_{\alpha}(\bbR)\to\cC_{\alpha}(\bbR)$ is a compact
  operator.
   \end{corollary}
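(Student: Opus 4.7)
The plan is to leverage the decomposition $DC=D_0C_0+D_0C_1+D_1C_0+D_1C_1$ together with the further splitting $D_1=D_{10}+D_{11}$, $C_1=C_{10}+C_{11}$ already introduced above. Everything in $DC$ except the product $D_{11}C_{11}$ has already been shown to be compact from $\cC_\alpha(\bbR)$ to $\cC_\alpha(\bbR)$, using smoothness, compact support, or improved decay together with Proposition~\ref{prop3.97}. Thus both claims reduce to a study of $D_{11}C_{11}$.

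My second step would be to write $D_{11}=\tD_{11}+R_D$ and $C_{11}=\tC_{11}+R_C$, where $\tD_{11},\tC_{11}$ are the operators whose kernels are given by the leading expressions $\tk^0_0,\tk^0_2$ from~\eqref{eqn164.95}, and $R_D,R_C$ are the remainders. Because the remainder kernels are $O((|x_2|+|y_2|)^{-(j+3)/2})$ with $j=0,2$, the three mixed pieces $\tD_{11}R_C$, $R_D\tC_{11}$, and $R_DR_C$ each carry strictly better decay than $\tD_{11}\tC_{11}$; by the same argument used for $D_{10}C_{10}+D_{10}C_{11}+D_{11}C_{10}$, they map $\cC_\alpha(\bbR)$ into a more rapidly decaying space and, via Proposition~\ref{prop3.97}, define compact operators on $\cC_\alpha(\bbR)$. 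Thus all compactness and boundedness questions reduce to the principal term $\tD_{11}\tC_{11}$.

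For the principal term I invoke Proposition~\ref{prop2.96}: its kernel $K(x_2,y_2)=\int \tk^0_0(x_2,z)\tk^0_2(z,y_2)\,dz$ satisfies
\begin{equation}
|K(x_2,y_2)|\leq \frac{M}{(|x_2|+m)^{\frac 12}(|y_2|+m)^{\frac 32}}.
\end{equation}
Given $\sigma\in\cC_\alpha(\bbR)$ with $|\sigma(y_2)|\leq |\sigma|_\alpha(1+|y_2|)^{-\alpha}$ and $0<\alpha<\frac 12$, a direct application of Lemma~\ref{lem1.0} (the relevant integral $\int dy_2/[(1+|y_2|)^\alpha(|y_2|+m)^{3/2}]$ converges uniformly) yields
\begin{equation}
|\tD_{11}\tC_{11}\sigma(x_2)|\leq \frac{M'|\sigma|_\alpha}{(1+|x_2|)^{\frac 12}},
\end{equation}
which proves boundedness $\cC_\alpha(\bbR)\to\cC_{\frac 12}(\bbR)$, and thence for the full operator $DC$.

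For compactness $\cC_\alpha(\bbR)\to\cC_\alpha(\bbR)$, since $\alpha<\frac 12$ the bound above together with Proposition~\ref{prop3.97} reduces matters to showing that, on any finite interval $[-X,X]$, the family $\{\tD_{11}\tC_{11}\sigma:\sigma\in B_r\}$ is uniformly equicontinuous. The main obstacle is exactly this step: naively differentiating $K$ in $x_2$ loses the decay in $z$ coming from the oscillation $e^{ik_1|z|}$. My approach would be to integrate by parts in $z$ exactly as in the proof of Proposition~\ref{prop2.96} — the cutoff $\psi(|z|)$ in~\eqref{eqn164.95} makes the boundary terms harmless and transfers a $z$-derivative onto the slowly varying amplitude, producing improved decay in both $x_2$ and $y_2$ for $\partial_{x_2}K$. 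Given such a bound, an estimate of the form $|\partial_{x_2}K(x_2,y_2)|\leq M''/(1+|y_2|)^{\frac 32}$ uniform in $x_2\in[-X,X]$ is enough, via the mean value theorem and dominated convergence against $(1+|y_2|)^{-\alpha}$, to yield the required uniform modulus of continuity. Proposition~\ref{prop3.97} then delivers compactness of $DC$ on $\cC_\alpha(\bbR)$, completing the corollary.
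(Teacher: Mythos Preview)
Your proof is correct and follows essentially the same approach as the paper's (which is quite terse: it invokes the kernel bound from Proposition~\ref{prop2.96} together with the remainder estimates for boundedness, and then remarks that $\partial_{x_2}K$ satisfies the same type of estimate, so Proposition~\ref{prop3.97} gives compactness). Your only misstep is the worry that differentiating $K$ in $x_2$ ``loses the decay in $z$'': since $\partial_{x_2}$ acts only on the factor $\tk^0_0(x_2,z)$ and leaves its $z$-oscillation $e^{ik_1|z-a|}$ and $z$-decay profile intact, the integration-by-parts argument of Proposition~\ref{prop2.96} goes through verbatim for $\partial_{x_2}K$ --- no special care is needed.
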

   \begin{proof}
     The first statement is an immediate consequence of the estimate
     in~\eqref{eqn165.95}, and the fact that the differences between
     $k_{D_{11}}$ and $k_{C_{11}}$ and their leading parts are bounded
     by $\frac{M}{(1+|x_2|+|y_2|)^{\frac 32}}$ and
     $\frac{M}{(1+|x_2|+|y_2|)^{\frac 52}},$ respectively. The
     $x_2$-derivative of the kernel of $D_{11}C_{11}$ is easily seen
     to satisfy the same type of estimates, hence
     Proposition~\ref{prop3.97} applies to show that
     $D_{11}C_{11}:\cC_{\alpha}(\bbR)\to\cC_{\alpha}(\bbR)$ is a
     compact operator.  The second statement follows from this
     observation along with~\eqref{eqn163.57} and~\eqref{eqn164.57}.
   \end{proof}

   This corollary  immediately implies:
   \begin{corollary}For any $0<\alpha<\frac 12$ the operator
     $(\Id-DC):\cC_{\alpha}(\bbR)\to \cC_{\alpha}(\bbR)$ is a Fredholm operator of
   index 0.
   \end{corollary}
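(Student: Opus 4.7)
The plan is to deduce this corollary directly from the previous one by invoking the classical Riesz--Fredholm theory for compact perturbations of the identity on a Banach space. The preceding corollary has already done the substantive work by showing that $DC:\cC_{\alpha}(\bbR)\to\cC_{\alpha}(\bbR)$ is a compact operator for every $0<\alpha<\frac 12$. Since $\cC_{\alpha}(\bbR)$ equipped with the norm $|\cdot|_{\alpha}$ is a Banach space, the statement reduces to a standard abstract fact.

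More precisely, I would first observe that $\cC_{\alpha}(\bbR)$ is complete under $|\cdot|_{\alpha}$, as this is essentially the space of continuous functions weighted by $(1+|x|)^{\alpha}$ equipped with the corresponding sup-norm. Then I would invoke the classical result that if $K$ is a compact operator on a Banach space $X$, then $\Id-K$ has finite-dimensional kernel, closed range, finite-dimensional cokernel, and $\dim\ker(\Id-K)=\dim\coker(\Id-K)$, i.e. $\Id-K$ is Fredholm of index $0$. Applying this to $X=\cC_{\alpha}(\bbR)$ and $K=DC$ yields the claim immediately.

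There is no substantial obstacle to overcome here; the work was absorbed into the previous corollary and into Proposition~\ref{prop3.97}. The only thing worth pointing out explicitly in the write-up is the reference for the Riesz--Fredholm alternative on a general Banach space (for instance the standard treatment in Kress's or Colton--Kress's functional-analytic chapters), together with a single sentence noting that $\cC_{\alpha}(\bbR)$ is Banach under $|\cdot|_{\alpha}$ so the theory applies verbatim.
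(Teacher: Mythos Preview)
Your proposal is correct and matches the paper's approach exactly: the paper simply states that the preceding corollary (compactness of $DC$ on $\cC_{\alpha}(\bbR)$) ``immediately implies'' the Fredholm index $0$ statement, i.e.\ it is just the Riesz--Fredholm alternative for $\Id-K$ with $K$ compact on a Banach space. Your added remarks about completeness of $\cC_{\alpha}(\bbR)$ and a reference for the alternative are reasonable embellishments, but nothing more is needed.
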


   \noindent
   From this corollary and~\eqref{eqn80.300} we deduce the following:
   \begin{corollary} For any $0<\alpha<\frac 12$ the operator
     \begin{equation}\label{eqn97.668}
       \left(\begin{matrix}\Id& D\\C&\Id\end{matrix}\right)
     \end{equation}
     acting from $ \cC_{\alpha}(\bbR)\oplus \cC_{\alpha+\frac
       12}(\bbR)$ to itself is a Fredholm operator of index 0.
   \end{corollary}
   \begin{remark}
     I want to thank Tristan Goodwill and Manas Rachh for pointing out
     a small gap in an earlier version of the proof of this corollary, and suggesting a
     correction.
   \end{remark}

   \begin{remark}
  To prove the solvability for arbitrary data in $\cC_{\alpha}(\bbR)$
  we still need to show that the operator in~\eqref{eqn97.668} has a
  trivial null-space. This is proved in Part III, where it is seen to
  follow from the uniqueness of the outgoing solution to the original
  scattering problem.
  \end{remark}
  
Suppose that $(g,h)\in \cC_{\alpha}(\bbR)\oplus \cC_{\alpha+\frac 12}(\bbR),$ and
$  (\Id-DC)\sigma=g-Dh$
is solvable for $\sigma\in\cC_{\alpha}(\bbR),$  if we set $\tau=h-C\sigma\in
\cC_{\alpha+\frac 12}(\bbR),$ then the pair
$(\sigma,\tau)$  solves~\eqref{eqn143.35}.  In this case the solution to the
transmission problem is given by~\eqref{eqn36.34},
 \begin{equation}
  u^{l,r}=-\cE^{l,r\,'}\sigma+\cE^{l,r}\tau,
\end{equation}
which implies that
\begin{equation}\label{eqn160.08}
  \begin{split}
  u^{l,r}(x)
 &=\cS_{k_1}\tau(x)+\int_{-\infty}^{\infty}w^{l,r}(x;0,y_2)\tau(y_2)dy_2-\\
 &\cD_{k_1}\sigma(x)-\int_{-\infty}^{\infty}\pa_{y_1}w^{l,r}(x;0,y_2)\sigma(y_2)dy_2,
  \end{split}
\end{equation}
where $\mp x_2>0.$  The kernel for
single layer satisfies the estimate
\begin{equation}
 \left| \frac{H^{(1)}_0(k_1|x-(0,y_2)|)}{4}\right|\leq
 \frac{M}{[x_1^2+(x_2-y_2)^2]^{\frac 14}}. 
\end{equation}
The kernel of the double layer is given by
\begin{equation}
  \pa_{y_1}\frac{iH^{(1)}_0(k_1|x-y|)}{4}\restrictedto_{y_1=0}=-i\frac{k_1x_1}{4|x-(0,y_2)|}
  \pa_zH^{(1)}_0(k_1|x-(0,y_2)|).
\end{equation}
For $x_1\neq 0,$ as $y_2\to\infty,$ it satisfies the estimate
\begin{equation}
  \left|\pa_{y_1}\frac{H^{(1)}_0(k_1|x-y|)}{4}\restrictedto_{y_1=0}\right|\leq
  \frac{Mx_1}{[x_1^2+(x_2-y_2)^2]^{\frac 34}}.
\end{equation}
The estimates proved for
$w^{l,r}(0,x_2;0,y_2),\pa_{y_1}w^{l,r}(0,x_2;0,y_2)$ also hold where
$x_1\neq 0,$ which, along with Proposition~\ref{prop1}, shows that
representations for $u^{l,r}(x_1,x_2)$ in~\eqref{eqn160.08} are given by
absolutely convergent integrals.

\begin{remark}
  One can imagine other uses for the fundamental solution, $\fE,$
  of operators like $(\Delta+q(x_2)+k_1^2)$ constructed above.  A
  simple example would be to change the electrical properties of a bi-infinite
  wave-guide in a compact set replacing $q(x_2)$ by $q(x_2)+Q(x_1,x_2)$,
  with $Q$ a compactly supported function.  Suppose that $u^{\In}$ is
  a solution to $(\Delta+q+k_1^2)u^{\In}=0,$ and we seek an outgoing
  solution, $u^{\out},$ to
\begin{equation}
  (\Delta+q+k_1^2+Q)[u^{\In}+u^{\out}]=0.
\end{equation}
Using the fundamental solution this can be rewritten as a Lipmann-Schwinger
type equation:
\begin{equation}
  (\Id+\fE Q)u^{\out}=-\fE Q u^{\In}.
\end{equation}
At least for small $Q,$ this equation can be solved using a Neumann series
\begin{equation}
  u^{\out}=-\fE Q\sum_{j=0}^{\infty}(-1)^j(\chi_Q\fE Q)^ju^{\In},
\end{equation}
where $\chi_Q$ is the characteristic function of $\supp Q.$ To compute the terms
of the sum only requires a knowledge of the kernel of $\fE$ on $\supp
Q\times\supp Q.$

A similar approach can be used to study the effect of placing an
non-transparent obstacle in the channel. For these cases the scattered
field can be represented in terms of the sum of a single and double
layer with respect to the kernel of $\fE$ over the boundary of the
obstacle. This will lead to a second kind Fredholm equation on the
boundary of the obstacle.

A more ambitious application might be to study a network of channels
meeting in a compact set.  Using an idea similar to that employed
in~\cite{BonChaFli_2022} one can decompose $\bbR^2$ into a collection
of truncated sectors, $\{S_1,\dots, S_N\}$ each containing a single
semi-infinite channel
$$\{x:\:|\langle x,v_j^{\bot}\rangle-c_j| \leq d_j,\, \langle
x,v_j\rangle>e_j\}, \text{ for }v_j\in\bbR^2\text{  unit vectors,}$$
with electrical properties modeled by an operator of the form
$$L_j=(\Delta+q_j(\langle x,v_j^{\bot}\rangle-c_j)+k_1^2).$$
Here $\langle\cdot,\cdot\rangle$ is the Euclidean inner product in $\bbR^2.$
The channels meet in a compact interaction zone, $D.$  See
Figure~\ref{fig3}. Using our construction we can build a fundamental
solution, $\fE_j=\cS_{k_1}+W_j,$ for each operator $L_j.$

\begin{figure}
  \centering
  \includegraphics[width= 8cm]{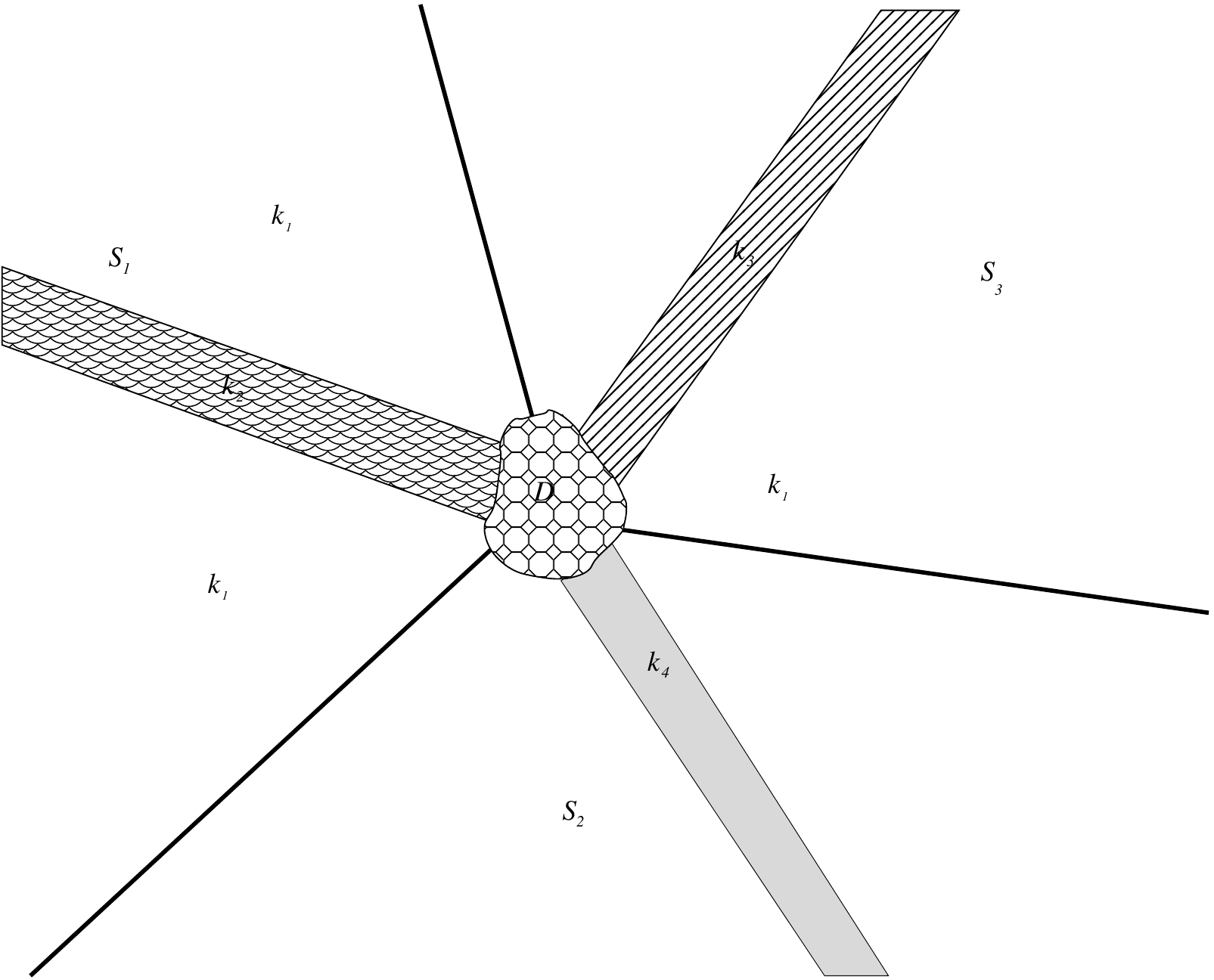}
    \caption{Three dielectric channels meeting in a compact interaction zone,
      $D,$ showing sectors $S_1, S_2,S_3.$}
    \label{fig3}
\end{figure}

Solutions to $L_ju_j=0$ in $S_j$ can then be written as sums of single and
double layers w.r.t. $\fE_j$ integrated over $\pa S_j.$ Imposing jump conditions
across the common boundaries of the sectors would then lead to systems of
integral equations over $\cup \pa S_j,$ analogous to~\eqref{eqn37.24}. These
would be supplemented with boundary conditions on $\pa D.$ As before, the
kernels of the various $W_j$ decay like $(|s|+|t|)^{-\frac 12}$ as one goes out
to infinity along components of the $\pa S_j.$ Unfortunately, unless the $\pa
S_j$ is orthogonal to the channel lying in $S_j,$ the normal derivatives of
these kernels will also decay at this rate. Thus it seems unlikely that these
integral equations will be well posed in any useful function space. Adding a
little dissipation does lead to tractable integral equations, which, in light of
the limiting absorption principle, provides a viable method for the approximate
solution of such problems.

\end{remark}

\section{Admissible Data}\label{sec_adm_data}

In general, our method for solving the transmission problem specified
in~\eqref{eqn5}--\eqref{eqn5.30} is applicable to data
$(g,h)\in\cC_{\alpha}(\bbR) \oplus\cC_{\alpha+\frac 12}(\bbR),$ for some
$0<\alpha<\frac 12.$ If the incoming fields $u^{\In}_{l,r}$ are sums of wave-guide
modes, then they decay exponentially as $|x_2|\to\infty,$ and are therefore
admissible as data for our method. In general, there are two other types of
incoming data that naturally arise in this context: plane waves, and point
sources.

In this setting, point sources will arise from taking a fundamental solution for a
bi-infinite wave-guide, $\fE^{l,r}(x;y),$ which is precisely what we have constructed
above. Using our representation, we have
\begin{equation}
  \fE^{l,r}(x;y)=g_{k_1}(x-y)+w^{l,r}(x;y).
\end{equation}
If we fix a point $y_0=(y_{01},y_{02})$ with $y_{01}<0,$ then we can use
$u^{\In}_l(x)=\fE^l(x;y_0)$ as point source in the left half plane at $y_0;$ we
can let $u^{\In}_r=0.$ The data for the transmission problem
is then
  \begin{equation}
    g(x_2)=\fE^{l}(0,x_2;y_0),\, h(x_2)=\pa_{x_1}\fE^{l}(0,x_2;y_0).
  \end{equation}
  The analysis in Appendix~\ref{AppWEsts}  is easily adapted to show that
  $g(x_2)=O(|x_2|^{-\frac 12}), $ and $h(x_2)=O(|x_2|^{-\frac 32}).$ In fact, these
  functions are `outgoing,' and have complete asymptotic expansions, as $\pm
  x_2\to\infty,$ of the form
  \begin{equation}
    \begin{split}
      &g(x_2)\sim
      \frac{e^{ik_1|x_2|}}{|x_2|^{\frac 12}}\sum_{j=0}^{\infty}\frac{a_j^{\pm}}{|x_2|^{j}},\\
        &h(x_2)\sim \frac{e^{ik_1|x_2|}}{|x_2|^{\frac 32}}\sum_{j=0}^{\infty}\frac{b_j^{\pm}}{|x_2|^{j}}.
    \end{split}
  \end{equation}
  This sort of asymptotic behavior is needed in order for the solutions we
  obtain to be outgoing.

 The case of incoming plane waves is similar. 
In the case that the wave-guide is a bi-infinite channel, with $k_1<k_2,$ as described
by~\eqref{eqn2.32}, the scattering problem for an incoming plane wave
`from above' has an elementary solution: Let $\bkappa=(\kappa_1,\kappa_2)$ satisfy
\begin{equation}
  \kappa_1^2+\kappa_2^2=k_1^2, \quad \kappa_2<0,
\end{equation}
then the function $v^{\In}=e^{i\bkappa\cdot\bx}$ is a reasonable incoming field
`from above' for the single channel modeled by $(\Delta+q(x_2)+k_1^2),$ with
$q(x_2)=\chi_{[-d,d]}(x_2)(k_2^2-k_1^2).$ We would like to find the outgoing
scattered wave, $v^{\scat},$ produced by this incoming field. We let
$\bkappa'=(\kappa_1,-\kappa_2),$ and $\tbkappa=(\kappa_1,\hat{\kappa}_2),
\tbkappa'=(\kappa_1,-\hat{\kappa}_2),$ where
$\hat{\kappa}_2=\sqrt{k_2^2-\kappa_1^2}>0.$ The scattered field can be found
using the jump conditions directly and takes the form predicted by the Fresnel
relations
\begin{equation}
  v^{\scat}(\bkappa;\bx)=\begin{cases}&\alpha^+(\bkappa)e^{i\bkappa'\cdot\bx}\text{ where
  }x_2>d,\\
  &\alpha^0(\bkappa)e^{i\tbkappa\cdot\bx}+\beta^0(\bkappa)e^{i\tbkappa'\cdot\bx}\text{ where }|x_2|<d,\\
  &\alpha^-(\bkappa)e^{i\bkappa\cdot\bx}\text{ where }x_2<-d.
  \end{cases}
\end{equation}
The determinant of the linear system that defines the coefficients,
$(\alpha^+(\bkappa),\alpha^0(\bkappa),$ $\beta^0(\bkappa),\alpha^-(\bkappa)),$
is a non-zero multiple of $2\kappa_2\hat{\kappa}_2\cos
2\hat{\kappa}_2d+i(k_2^2-k_1^2)\sin2\hat{\kappa}_2d,$ which does not vanish
provided that $\kappa_2\neq 0.$ Hence if $\bkappa=k_1(\cos\theta,-\sin\theta),$
then these coefficients depend smoothly on $\theta\in (0,\pi).$ A general field
incoming `from above' takes the form
\begin{equation}\label{eqn6.80}
  v^{\In}_{\mu}(x)=\int_{0}^{\pi} e^{ik_1(\cos\theta,-\sin\theta)\cdot\bx}d\mu(\theta),
\end{equation}
with $d\mu$ a finite measure on $(0,\pi).$
By linearity, it produces an `outgoing' scattered field of the form
\begin{equation}
  v^{\out}_{\mu}=\int_{0}^{\pi} v^{\scat}(k_1(\cos\theta,-\sin\theta);\bx) d\mu(\theta).
\end{equation}

While our method for analyzing a pair of intersecting semi-infinite wave-guides
does not apply directly to incoming fields that do not decay as $|x_2|\to\infty,$ if
$d\mu(\theta)=m(\theta)d\theta,$ with $m\in\cC^{\infty}_c((0,\pi)),$ then
$$v^{\tot}_{\mu}(x_1,x_2)=v^{\In}_{\mu}(x_1,x_2)\chi_{[d,\infty)}(x_2)+v^{\out}_{\mu}(x_1,x_2)$$
  satisfies the transmission boundary conditions and is a weak solution of the PDE
  $(\Delta+q(x_2)+k_1^2)v^{\tot}_{\mu}=0.$ A stationary phase computation shows
  that $v^{\out}_{\mu}(\bkappa;r\eta)$ satisfies the Sommerfeld radiation
  conditions if either $\eta,$ or $-\eta$ belongs to the $\supp m,$ and is
  rapidly  vanishing at infinity otherwise. Moreover we have asymptotic
  expansions
\begin{equation}\label{eqn174.210} 
  \begin{split}
 & v^{\tot}_{\mu}(0,x_2)\sim
  \begin{cases}
    & \frac{e^{-ik_1x_2}}{\sqrt{x_2}}\sum\limits_{j=0}^{\infty}\frac{a_j^-}{x_2^j}+
    \frac{e^{ik_1x_2}}{\sqrt{x_2}}\sum\limits_{j=0}^{\infty}\frac{a_j^+}{x_2^j}\text{ for }x_2>0,\\
    &  \frac{e^{-ik_1x_2}}{\sqrt{|x_2|}}\sum\limits_{j=0}^{\infty}\frac{b_j^+}{x_2^j}\quad\text{ for }x_2<0,
  \end{cases}\\
 &  \pa_{x_1}v^{\tot}_{\mu}(0,x_2)\sim
  \begin{cases}
    & \frac{e^{-ik_1x_2}}{x^{3/2}_2}\sum\limits_{j=0}^{\infty}\frac{a_j^{'-}}{x_2^j}+
    \frac{e^{ik_1x_2}}{x_2^{3/2}}\sum\limits_{j=0}^{\infty}\frac{a_j^{'+}}{x_2^j}\text{ for }x_2>0,\\
    &  \frac{e^{-ik_1x_2}}{|x_2|^{3/2}}\sum\limits_{j=0}^{\infty}\frac{b_j^{'+}}{x_2^j}\quad\text{ for }x_2<0.
  \end{cases}
  \end{split}
\end{equation}
and therefore our method of solution, with $q_l=q,$ and data determined by
$u^{\In}_l=v^{\tot}_{\mu}, u^{\In}_r=0$ does apply to this case.  To obtain the
needed estimates for our approach to apply it suffices for
$m\in\cC^{2}_c((0,\pi)).$

Unfortunately, as is clear from~\eqref{eqn174.210}, data of this type may not be
outgoing along the ray $\{x_1=0, x_2>0\},$ hence, from~\eqref{eqn14.300}, it is
clear that the solution found by our method also will not be outgoing. This is where
the symmetric formulation of the transmission problem proves its worth. If we
let $v^{\out}_{\mu;l},$ $v^{\out}_{\mu;r},$ denote the scattered fields obtained
using the foregoing method with $q=q_l,$ $q=q_r,$ respectively, then we can let
\begin{equation}
  u^{\In}_{l,r}(x_1,x_2)=
  \begin{cases}
    &v^{\In}_{\mu}(x_1,x_2)\chi_{[d_l^+,\infty)}(x_2)+v^{\out}_{\mu;l}(x_1,x_2),\text{
        for }l, x_1<0,\\
       &v^{\In}_{\mu}(x_1,x_2)\chi_{[d_r^+,\infty)}(x_2)+v^{\out}_{\mu;r}(x_1,x_2),\text{
        for }r, x_1>0,\\
  \end{cases}
\end{equation}
where $\supp q_{l,r}=[d_{l,r}^-,d_{l,r}^+].$ These two fields have the same
incoming component where $x_2\gg 0,$ given by $v^{\In}_{\mu}.$ Hence the data
for the transmission
problem $(u^{\In}_l(0,x_2)-u^{\In}_r(0,x_2),\pa_{x_1}[u^{\In}_l(0,x_2)-u^{\In}_r(0,x_2)]),$
has no incoming part and, for $x_2\gg 0,$ is given by
\begin{equation}
  \begin{split}
     &g(x_2)=\int_{0}^{\pi}
     (\alpha^+_l(\theta)-\alpha^+_r(\theta))e^{ik_1x_2\sin\theta} d\mu(\theta),\\
      &h(x_2)=ik_1\int_{0}^{\pi}
    (\alpha^+_l(\theta)-\alpha^+_r(\theta))\cos\theta e^{ik_1x_2\sin\theta} d\mu(\theta).
  \end{split}
\end{equation}
If $d\mu=m(\theta)d\theta,$ with $m\in\cC^{\infty}_c((0,\pi)),$ then $g$ and $h$
have asymptotic expansions like those in~\eqref{eqn174.210}, but with
$a_j^-=a_j^{'-}=0,$ for all $j.$ This data is therefore outgoing, and the
solution to the scattering problem produced by our method can also expected to
be.  It is worth mentioning that if $m\in\cC^{\infty}_c((0,\pi)),$ with $\supp
m\subset [\theta_0,\pi-\theta_0],$ for a $\theta_0>0,$ then the incoming wave
packet $v_{\mu}^{\In}(r\eta)=O(r^{-N})$ for any $N>0,$ if
$0<\eta_2<\sin\theta_0.$
\begin{remark}
  I want to thank Manas Rachh for explaining the trick used here for removing the incoming
  part of the data in a transmission problem.
\end{remark}

\section{The Projections onto Wave-Guide Modes}\label{sec8.91}
In the foregoing pages we have explained a method to find and
represent solutions to the scattering problem that results from two
semi-infinite wave-guides meeting along a common perpendicular
line. The solution is represented in each half plane by layer
potentials along this line, with sources $(\sigma,\tau),$
see~\eqref{eqn160.08}. The solutions in each half plane can be split
into a contribution from the wave-guide modes and `radiation,'
\begin{equation}
  u^{l,r}(x)=u^{l,r}_g(x)+u^{l,r}_{\rad}(x).
\end{equation}
If $\{(v_n^{l,r},\xi_n^{l,r}):\:n=1,\dots,N_{l,r}\}$ are the guided modes, which
are real valued and normalized to have $L^2$-norms 1, then the projection into
the guided modes is given by
\begin{equation}\label{eqn210.94}
     u^{l,r}_g(x_1,x_2)=\sum_{n=1}^{N_{l,r}}v_{n}^{l,r}(x_2)\int_{-\infty}^{\infty}u^{l,r}(x_1,y_2)v_n^{l,r}(y_2)dy_2
    \end{equation}
This has a very simple expression in terms of our representation, coming
entirely from the $w^g_{0+}$-term in the expression for $\fE^{l,r},$ see~\eqref{eqn35.91}.

We assume that $u^{l,r}$ is given by~\eqref{eqn160.08}, with sources
$(\sigma,\tau)\in\cC_{\alpha}(\bbR)\oplus\cC_{\alpha+\frac 12}(\bbR).$ As noted
in Section~\ref{sec8} the representations for $u^{l,r}$ and $\pa_{x_1}u^{l,r}$ are
in terms of absolutely convergent integrals.  The key observation is the
fact that the wave-guide modes are orthogonal to the continuous spectrum.
\begin{proposition}\label{prop2.94}
  Let $(\sigma,\tau)\in \cC_{\alpha}(\bbR)\oplus\cC_{\alpha+\frac 12}(\bbR)$ and
  let $v_{n}^{l,r}(x_2)e^{i\xi_n^{l,r}x_1}$ be a wave-guide mode for
  $\Delta+k_1^2+q_{l,r}(x_2),$ then, for all  $\pm x_1>0,$ 
  \begin{equation}
    \begin{split}
     &\int\limits_{-\infty}^{\infty}\int\limits_{-\infty}^{\infty}[g_{k_1}(|(x_1,x_2-y_2)|)+
        w^{c;l,r}_{0+}(x_1,x_2;0,y_2)]\tau(y_2)v_n^{l,r}(x_2)dy_2dx_2=0,\\
     &\int\limits_{-\infty}^{\infty}\int\limits_{-\infty}^{\infty}\pa_{y_1}[g_{k_1}(|(x_1-y_1,x_2-y_2)|)+\\
        &\phantom{mmmmmmmmmmmm}w^{c;l,r}_{0+}(x_1,x_2;y_1,y_2)]_{y_1=0}\sigma(y_2)v_n^{l,r}(x_2)dy_2dx_2=0.
    \end{split}
  \end{equation}
\end{proposition}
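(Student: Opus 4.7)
The strategy is to recognize the bracket $g_{k_1}(|(x_1,x_2-y_2)|)+w^{c;l,r}_{0+}(x_1,x_2;0,y_2)$ as the inverse Fourier transform along $\Gamma^+_\nu$ of the full 1D resolvent kernel $R^{l,r}_{\xi,0+}(x_2,y_2)$ of $L^{l,r}_\xi$, and then exploit the fact that wave guide modes are eigenfunctions of $L^{l,r}_{\xi_n^{l,r}}$. Specifically, for $x_1>0$, the Fourier representation of $g_{k_1}$ from~\eqref{eqn26.51} can be deformed from $\bbR$ to $\Gamma^+_\nu$ (since $\hat g_{k_1}(\xi,x_2-y_2)$ has only branch points at $\pm k_1$, and $\Gamma^+_\nu$ avoids these); adding it to~\eqref{eqn81.21} gives
\begin{equation*}
  g_{k_1}(|(x_1,x_2-y_2)|)+w^{c;l,r}_{0+}(x_1,x_2;0,y_2)=\frac{1}{2\pi}\int_{\Gamma^+_\nu}[\hat g_{k_1}(\xi,x_2-y_2)+\tw^{l,r}(\xi,x_2;0,y_2)]e^{ix_1\xi}d\xi,
\end{equation*}
and the bracket is (up to sign) precisely $R^{l,r}_{\xi,0+}(x_2,y_2)$, since $L^{l,r}_\xi$ applied to it produces $-\delta(x_2-y_2)$.

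With this representation in hand, I would interchange the $\xi$-, $x_2$-, and $y_2$-integrals (justified by the uniform bounds on $R^{l,r}_{\xi,0+}$ away from its poles, the rapid decay of $v_n^{l,r}$, and the decay of $\tau\in\cC_{\alpha+\frac12}(\bbR)$), reducing the claim to evaluating $\int R^{l,r}_{\xi,0+}(x_2,y_2)v_n^{l,r}(x_2)dx_2$ for fixed $y_2$ and $\xi\in\Gamma^+_\nu$. Because $L^{l,r}_{\xi_n^{l,r}}v_n^{l,r}=0$, I get $L^{l,r}_\xi v_n^{l,r}=((\xi_n^{l,r})^2-\xi^2)v_n^{l,r}$, and integrating by parts twice against the defining identity $L^{l,r}_\xi R^{l,r}_{\xi,0+}(\cdot,y_2)=\delta_{y_2}$ yields
\begin{equation*}
  \int_{\bbR}R^{l,r}_{\xi,0+}(x_2,y_2)v_n^{l,r}(x_2)dx_2=\frac{v_n^{l,r}(y_2)}{(\xi_n^{l,r})^2-\xi^2}.
\end{equation*}
After pulling out $\langle\tau,v_n^{l,r}\rangle$, the whole integral collapses to a constant multiple of $\int_{\Gamma^+_\nu}\frac{e^{ix_1\xi}}{(\xi_n^{l,r})^2-\xi^2}d\xi$; closing in the upper half-plane (Jordan's lemma applies since the rational factor decays like $|\xi|^{-2}$ and $e^{ix_1\xi}$ decays in the UHP for $x_1>0$), and noting that the poles at $\pm\xi_n^{l,r}$ lie \emph{below} $\Gamma^+_\nu$, gives zero. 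The case $x_1<0$ is identical with $\Gamma^-_\nu$ in place of $\Gamma^+_\nu$ and closure in the lower half-plane. For the second identity, $\partial_{y_1}$ brings down a factor $-i\xi$ from the $e^{-iy_1\xi}$ common to both $\hat g_{k_1}$ and $\tw^{l,r}$; the resulting integrand $\frac{-i\xi\,e^{ix_1\xi}}{(\xi_n^{l,r})^2-\xi^2}$ still decays like $|\xi|^{-1}$, which is adequate for Jordan's lemma, and the same closure argument produces zero.

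The main technical obstacle is the justification of the integration by parts in $x_2$: for $\xi$ on the real segments of $\Gamma^+_\nu$, the resolvent kernel $R^{l,r}_{\xi,0+}(x_2,y_2)$ is only bounded and oscillatory in $x_2$, so the vanishing of the boundary terms at $x_2=\pm\infty$ relies entirely on the exponential decay of $v_n^{l,r}$ at rate $\sqrt{(\xi_n^{l,r})^2-k_1^2}>0$ outside the channel (together with the corresponding exponential bound on $\partial_{x_2}v_n^{l,r}$). A secondary concern is verifying the Fubini interchange uniformly in $\xi\in\Gamma^+_\nu$, but this follows from the explicit formul\ae\ in~\eqref{eqn92.55} and the pointwise bounds~\eqref{eqn73.9} and~\eqref{eqn92.9} established in Section~\ref{sec6}, combined with the decay of $\tau$ from the assumption $(\sigma,\tau)\in\cC_\alpha(\bbR)\oplus\cC_{\alpha+\frac12}(\bbR)$.
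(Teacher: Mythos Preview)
Your strategy matches the paper's exactly: recognize $\hat g_{k_1}+\tw$ as the 1D resolvent kernel, compute $\int_{\bbR} R_{\xi,0+}(x_2,y_2)v_n(x_2)\,dx_2=v_n(y_2)/(\xi_n^2-\xi^2)$ via the eigenvalue equation and integration by parts (the exponential decay of $v_n$ kills the boundary terms, just as you say), and then close $\Gamma_\nu^+$ in the upper half plane where the poles $\pm\xi_n$ lie below it.

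The one place your outline goes wrong is the Fubini step for the \emph{double-layer} identity. You call this a ``secondary concern'' handled by~\eqref{eqn73.9} and~\eqref{eqn92.9}, but those estimates bound only $\tw$; for the free-kernel part the extra factor of $\xi$ gives $|\xi\,\hat g_{k_1}(\xi,x_2-y_2)|\sim \tfrac12 e^{-|\xi|\,|x_2-y_2|}$ for large real $|\xi|$, whose $\xi$-integral is $\sim|x_2-y_2|^{-1}$, and the subsequent $x_2$-integral against the locally nonvanishing eigenfunction $v_n$ diverges. So absolute convergence fails and you cannot simply interchange. The paper deals with this by truncating to $|\xi|\le R$, proving a short lemma (using $x_1>0$ via an integration by parts in $s=\sqrt{\xi^2-k_1^2}$) that the tail $\int_{|\xi|>R}\hat g_{k_1}\,\xi\,e^{ix_1\xi}\,d\xi$ integrated in $x_2$ against $v_n$ is $O(1/(x_1R))$, interchanging on the compact $\xi$-range, and only then letting $R\to\infty$. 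Your Jordan's-lemma remark for $\xi/(\xi_n^2-\xi^2)$ is correct once you reach that integrand, but getting there requires this additional argument.
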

\begin{proof}
  We give the details for the right half plane. The estimates proved in the
  previous sections show that these integrals are absolutely convergent and
  therefore we can change the order of the integrations. To prove the
  proposition we show that, for each $y_2$ and $x_1>0,$ we have
   \begin{equation}\label{eqn212.93} 
    \begin{split}
     &\int\limits_{-\infty}^{\infty}[g_{k_1}(|(x_1,x_2-y_2)|)+
        w^{c;r}_{0+}(x_1,x_2;0,y_2)]v_n^{r}(x_2)dx_2=0,\\
     &\int\limits_{-\infty}^{\infty}\pa_{y_1}[g_{k_1}(|(x_1-y_1,x_2-y_2)|)+
        w^{c;r}_{0+}(x_1,x_2;y_1,y_2)]_{y_1=0}v_n^{r}(x_2)dx_2=0.
    \end{split}
  \end{equation}
  To prove these statements we use the Sommerfeld integral representation for
  the free space fundamental solution, see~\eqref{eqn26.51}.

  We begin with the single layer term, which can be written as the iterated
  integral:
  \begin{equation}\label{eqn213.91}
    \frac{1}{2\pi}
    \int\limits_{-\infty}^{\infty}\left[\int\limits_{-\infty}^{\infty}\frac{ie^{i|x_2-y_2|\sqrt{k_1^2-\xi^2}}}{2\sqrt{k_1^2-\xi^2}}+
       \int\limits_{\Gamma_{\nu}^+} \tw^{c;r}(\xi,x_2;y_2)\right] e^{ix_1\xi }d\xi\,v_n^{r}(x_2)dx_2.
  \end{equation}
  We would like to change the order of integrations in this integral, which
  would be  easily justified if the integral in $\xi$ were over any finite
  interval. Note that
  \begin{equation}
    \int\limits_{k_1+1}^{\infty}\left|\frac{ie^{-|x_2-y_2|\sqrt{\xi^2-k_1^2}}e^{-iy_1\xi}}{2\sqrt{\xi^2-k_1^2}}\right|d\xi\leq
    M[1+|\log|x_2-y_2|].
  \end{equation}
  Combining this with the estimates~\eqref{eqn73.9} and~\eqref{eqn92.9}, and the fact that
  $|v_n^r(x_2)|\leq Me^{-|x_2|\sqrt{\xi_n^{r\,2}-k_1^2}}$ shows that these integrals are
  absolutely convergent and therefore we can interchange the order of the
  integrations. By analyticity we can also replace the integral in the first term with an
  integral over $\Gamma_{\nu}^+,$ to obtain
   \begin{equation}
        \frac{1}{2\pi}
        \int\limits_{\Gamma_{\nu}^+}\left[\int\limits_{-\infty}^{\infty}
          \frac{ie^{i|x_2-y_2|\sqrt{k_1^2-\xi^2}}}{2\sqrt{k_1^2-\xi^2}}+
       \tw^{c;r}(\xi,x_2;y_2)\right] e^{ix_1\xi}v_n^{r}(x_2)dx_2d\xi.
   \end{equation}
   Using the fact that $\xi^{r\, 2}_nv_n^r=(\pa_{x_2}^2+k_1^2+q_r(x_2))v_n^r$ and
   integrating by parts  it follows that
   \begin{equation}\label{eqn216.93}
     \int\limits_{-\infty}^{\infty}\left[\frac{ie^{i|x_2-y_2|\sqrt{k_1^2-\xi^2}}}{2\sqrt{k_1^2-\xi^2}}+
       \tw^{c;r}(\xi,x_2;y_2)\right] e^{ix_1\xi}v_n^{r}(x_2)dx_2=\frac{ie^{ix_1\xi}v_n^r(y_2)}{\xi_n^{r\,2}-\xi^2}.
   \end{equation}
   Hence the double integral is
   \begin{equation}
     \frac{1}{2\pi i}\int\limits_{\Gamma_{\nu}^+}\frac{ie^{ix_1\xi}v_n^r(y_2)d\xi}{\xi^2-\xi_n^{r\,2}},
   \end{equation}
   which, using Cauchy's theorem, is easily seen to vanish for $x_1>0.$

   The double layer is almost the same; the double integral in~\eqref{eqn213.91} is
   replaced by
     \begin{equation}\label{eqn218.93}
    \frac{i}{2\pi}
    \int\limits_{-\infty}^{\infty}\left[\,\int\limits_{\Gamma_{\nu}^+}
    \left[\frac{ie^{i|x_2-y_2|\sqrt{k_1^2-\xi^2}}}{2\sqrt{k_1^2-\xi^2}}+
       \tw^{c;r}(\xi,x_2;y_2)\right]\xi e^{ix_1\xi}d\xi\right] v_n^{r}(x_2)dx_2.
     \end{equation}
     The integral involving $\tw^{c;r}(\xi,x_2;y_2)$ is again easily seen to be
     absolutely convergent, but the additional factor of $\xi$ makes the other
     term more subtle. We need an estimate for this term that takes account of
     the fact that $x_1>0.$
     \begin{lemma}
       For $x_1>0, \lambda>0$ and $R>k_1$ let
       \begin{equation}
       f^{\pm}_R(x_1,\lambda)=\pm\int\limits_{\pm R}^{\pm \infty}
       \frac{e^{ix_1\xi-\lambda\sqrt{\xi^2-k_1^2}}\xi d\xi}{\sqrt{\xi^2-k_1^2}}.
       \end{equation}
       These functions satisfy the estimates
       \begin{equation}
         |f^{\pm}_R(x_1,\lambda)|\leq \frac{Me^{-\lambda\sqrt{R^2-k_1^2}}}{x_1}
       \end{equation}
     \end{lemma}
     \begin{proof}
       As $f^-_R(x_1,\lambda)=\overline{f^+_R(x_1,\lambda)}$ it suffices to do the $+$-case.
       If we let $s=\sqrt{\xi^2-k_1},$ then
       \begin{equation}
         f^+_R(x_1,\lambda)=\int\limits_{\sqrt{R^2-k_1^2}}^{\infty}e^{ix_1\sqrt{s^2+k_1^2}-\lambda s}ds.
       \end{equation}
       Noting that
       \begin{multline}\label{eqn222.93}
         \pa_s\left[\frac{\sqrt{s^2+k_1^2}}{s}e^{ix_1\sqrt{s^2+k_1^2}}\right]=\\
         ix_1e^{ix_1\sqrt{s^2+k_1^2}}-\frac{k_1^2}{s^2\sqrt{s^2+k_1^2}}e^{ix_1\sqrt{s^2+k_1^2}},
       \end{multline}
       integration by parts shows that
       \begin{multline}
         f^+_R(x_1,\lambda)=
         \frac{1}{ix_1}\Bigg[
           \frac{k_1e^{-\lambda\sqrt{R^2-k_1^2}+ix_1R}}{\sqrt{R^2-k_1^2}}
           +\\
           \int\limits_{\sqrt{R^2-k_1^2}}^{\infty}\left[\frac{\lambda\sqrt{s^2+k_1^2}}{s}+
             \frac{k_1^2}{s^2\sqrt{s^2+k_1^2}}\right]e^{ix_1\sqrt{s^2+k_1^2}-\lambda
             s}ds\Bigg].
       \end{multline}
      The integral is easily seen to be $O(e^{-\lambda\sqrt{R^2-k_1^2}}),$ which
      completes the proof of the lemma.
     \end{proof}

     We rewrite the integral in~\eqref{eqn218.93} as
     \begin{multline}
       \frac{i}{2\pi}
    \int\limits_{-\infty}^{\infty}\int\limits_{\Gamma_{\nu}^+}
    \left[\frac{ie^{i|x_2-y_2|\sqrt{k_1^2-\xi^2}}}{2\sqrt{k_1^2-\xi^2}}\right]\xi
    e^{ix_1\xi}d\xi\, v_n^{r}(x_2)dx_2=\\
    \frac{i}{2\pi}
    \int\limits_{-\infty}^{\infty}\left[\,\int\limits_{\Gamma_{\nu}^+\cap D_R}
      \left[\frac{ie^{i|x_2-y_2|\sqrt{k_1^2-\xi^2}}}{2\sqrt{k_1^2-\xi^2}}\right]\xi
      e^{ix_1\xi}d\xi+f_R^+(x_1,|x_2-y_2|)+
    f_R^-(x_1,|x_2-y_2|)\right] v_n^{r}(x_2)dx_2.
     \end{multline}
     In the part of the integral over $\Gamma_{\nu}^+\cap D_R$ we can
     interchange the order of the integrations. We use the lemma to estimate
     the other terms
     \begin{equation}
       \begin{split}
       \Bigg| \int\limits_{-\infty}^{\infty}[f_R^+(x_1,|x_2-y_2|)&+
         f_R^-(x_1,|x_2-y_2|)] v_n^{r}(x_2)dx_2\Bigg| \\
       &\leq
       \frac{M}{x_1}\int_{-\infty}^{\infty}e^{-\sqrt{R^2-k_1^2}|x_2-y_2|}e^{-\sqrt{\xi_n^{r\,2}-k_1^2}|x_2|}dx_2\\
       &\leq\frac{M}{x_1(\sqrt{R^2-k_1^2}-\sqrt{\xi_n^{r\,2}-k_1^2})}.
       \end{split}
     \end{equation}
     Thus we see that
       \begin{multline}\label{eqn227.93}
       \int\limits_{-\infty}^{\infty}\left[\,\int\limits_{\Gamma_{\nu}^+}
    \left[\frac{ie^{i|x_2-y_2|\sqrt{k_1^2-\xi^2}}}{2\sqrt{k_1^2-\xi^2}}+
       \tw^{c;r}(\xi,x_2;y_2)\right]\xi e^{ix_1\xi}d\xi\right]
    v_n^{r}(x_2)dx_2=\\
    \lim_{R\to\infty}
     \int\limits_{\Gamma_{\nu}^+\cap  D_R}\int\limits_{-\infty}^{\infty}\left[
    \left[\frac{ie^{i|x_2-y_2|\sqrt{k_1^2-\xi^2}}}{2\sqrt{k_1^2-\xi^2}}+
       \tw^{c;r}(\xi,x_2;y_2)\right]v_n^{r}(x_2)dx_2\right]\,\xi e^{ix_1\xi}
    d\xi.
       \end{multline}
       The $x_2$-integral is computed in~\eqref{eqn216.93}, giving
     \begin{multline}\label{eqn228.93}
      \int\limits_{-\infty}^{\infty}\left[\,\int\limits_{\Gamma_{\nu}^+}
    \left[\frac{ie^{i|x_2-y_2|\sqrt{k_1^2-\xi^2}}}{2\sqrt{k_1^2-\xi^2}}+
       \tw^{c;r}(\xi,x_2;y_2)\right]\xi e^{ix_1\xi}d\xi\right]
    v_n^{r}(x_2)dx_2=\\
    \lim_{R\to\infty}
     v_n^{r}(y_2)\int\limits_{\Gamma_{\nu}^+\cap  D_R}\frac{i\xi e^{ix_1\xi}}{\xi_n^{r\,2}-\xi^2} d\xi=0.   
     \end{multline}
    The last equality follows from Cauchy's theorem. It completes the proof
    of~\eqref{eqn212.93} and also of the proposition.
\end{proof}

Using this proposition we can compute the projections of $u^{l,r}$ to the
respective right and left wave-guide modes given in~\eqref{eqn210.94}. In our
representation
\begin{equation}
  u^{l,r}(x)=\int_{-\infty}^{\infty}\fE^{l,r}(x;0,y_2)\tau(y_2)dy_2-\int_{-\infty}^{\infty}\pa_{y_1}\fE^{l,r}(x;0,y_2)\sigma(y_2)dy_2.
\end{equation}
We give the details for $\{x_1>0\};$ 
Proposition~\ref{prop2.94} and~\eqref{eqn43.34}  show that
\begin{equation}\label{eqn229.94}
  u_g^r(x_1,x_2)=\sum_{n=1}^{N_r}[\langle
    \tau,v_n^r\rangle+   i\xi_n^r\langle \sigma,v_n^r\rangle]v_n^r(x_2)e^{i\xi_n^r x_1},
\end{equation}
where
\begin{equation}
  \langle f,g\rangle=\int\limits_{-\infty}^{\infty}f(x)\overline{g(x)}dx.
\end{equation}
Hence the projections of $u^{l,r}$ onto the wave-guide modes are completely
determined by the projections, $\{\langle\tau,v_n^{l,r}\rangle,\langle
\sigma,v_n^{l,r}\rangle:\: n=1,\dots,N_{l,r}\},$ of the source terms onto these modes.

If these projections could be determined directly from the data, then we could
determine the scattering relation, from incoming wave-guide modes to outgoing
modes, without having to solve the complete problem. Starting with the
equation~\eqref{eqn143.35} we can almost find equations for the coefficients
in~\eqref{eqn229.94}.  Projecting these equations into span of the wave-guide
modes we obtain
\begin{equation}
  \begin{split}
    P^{l,r}_g\sigma+ P^{l,r}_g D\tau&=P^{l,r}_gg\\
     P^{l,r}_gC\sigma+ P^{l,r}_g \tau&=P^{l,r}_gh,
  \end{split}
\end{equation}
where we let
\begin{equation}
  P^{l,r}_gf=\sum_{n=1}^{N_{l,r}}\langle f,v_n^{l,r}\rangle v_n^{l,r}(x_2).
\end{equation}

These equations can be rewritten as
\begin{equation}
  \begin{split}
    P^{l,r}_g\sigma+ P^{l,r}_g DP_g^{l,r}\tau&=P^{l,r}_gg-P^{l,r}_g D(\Id-P_g^{l,r})\tau\\
     P^{l,r}_gCP^{l,r}_g\sigma+ P^{l,r}_g \tau&=P^{l,r}_gh-P^{l,r}_gC(\Id-P^{l,r}_g)\sigma.
  \end{split}
\end{equation}
While this is not quite a system of equations for the projections
$(P^{l,r}_g\sigma,P^{l,r}_g\tau),$ the facts that $P^{l,r}_g(\Id-P^{l,r}_g)=0,$
and the norms of $\|D\|$ and $\|C\|$ are proportional to $m(q_l,q_r)$ suggests
that, at least for two channels with small contrast,  dropping these terms 
leads to equations
\begin{equation}
  \begin{split}
    P^{l,r}_g\tsigma+ P^{l,r}_g DP_g^{l,r}\ttau&=P^{l,r}_gg\\
     P^{l,r}_gCP^{l,r}_g\tsigma+ P^{l,r}_g \ttau&=P^{l,r}_gh,
  \end{split}
\end{equation}
whose solutions, $(P^{l,r}_g\tsigma,P_g^{l,r}\ttau),$  should be very close to
$(P^{l,r}_g\sigma,P_g^{l,r}\tau).$

\section{Some Concluding Remarks}
In the foregoing pages we have constructed outgoing fundamental
solutions for operators of the form $\Delta+k_1^2+q(x_2),$ and shown
how to use them to represent the solution to the scattering problem
defined by two semi-infinite wave-guides meeting along a common
perpendicular line. The construction of the `outgoing' fundamental
solution is in a form that lends itself to numerical
implementation, see~\cite{EpGo2024,EGHQR2025}. We have shown that the resultant system of integral
equations is Fredholm of index zero on the spaces
$\cC_{\alpha}(\bbR)\oplus\cC_{\alpha+\frac 12}(\bbR),$ with
$0<\alpha<\frac 12,$ and are therefore generically (w.r.t. $k_1$) solvable.

We have only presented a detailed analysis of this problem for the case of
potentials given by~\eqref{eqn2.32}, though it is clear that our approach will
apply, {\em mutatis mutandis}, if $q(x_2)$ is a bounded, measurable function
with bounded support. The principal difference will be that the basic solutions,
$\tu_{\pm}(\xi,0^+;x_2),$ of the ODE, and their Wronskian no longer have
explicit formul{\ae} in terms of elementary functions within the support of $q.$
These formul{\ae} need to be replaced by (standard) estimates. To implement the
method numerically, the functions $\tu_{\pm}$ have to be computed numerically
within the support of $q.$ This is done in~\cite{EGHQR2025}. The problem of
having 2 open wave-guides that are of the form considered here outside a compact
set, is a relatively compact perturbation which, while requiring further
analysis, should not pose serious additional difficulties.

In Part II we show that under reasonable hypotheses on the data, which are satisfied by
wave-guide modes, points sources and wave-packets, the sources found by solving
the integral equations along $\{x_1=0\}$ satisfy many additional estimates and
even admit asymptotic expansions, that is
\begin{equation}
  \begin{split}
   &\sigma(x_2)\sim\frac{e^{ik_1|x_2|}}{|x_2|^{\frac
       12}}\sum_{l=0}^N\frac{a^{\pm}_l}{|x_2|^l}+O\left(|x_2|^{-N-\frac
      32}\right),\\
    &\tau(x_2)\sim\frac{e^{ik_1|x_2|}}{|x_2|^{\frac 32}}\sum_{l=0}^N\left[\frac{b^{\pm}_l}{|x_2|^{l}}\right]+O\left(|x_2|^{-(N+5/2)}\right), \text{ as }|x_2|\to\infty.
    \end{split}
\end{equation}

Using these asymptotic expansions we show that the solutions given by $u^{l,r}$
also have complete expansions that are uniformly correct as $\eta_1,\, \eta_2\to
0^{\pm}.$ The existence of these expansions implies that the solutions satisfy
precisely the sort of outgoing radiation condition that one expects from the
work of Isozaki, Melrose, Vasy et al.  The proofs of the asymptotic expansions
use fairly classical techniques, combined with a novel contour deformation
argument. To complete this analysis, and prove the uniqueness of the solutions
found using our method, requires a much more sophisticated, microlocal analysis
of this class of problems, which is given in Part III.
\newpage
\appendix

\centerline{\LARGE\bf Appendix}
\bigskip

\noindent
In these appendices we collect a variety of background results, and study the
construction of the limiting absorption solution in the case of a bi-infinite
channel in $\bbR^2,$ and prove Theorem~\ref{thm00}.

 \section{The Bi-infinite Case}\label{sec1}
  In order to estimate the correction terms $w^{l,r}(x;y)$ we need to
  have a good description of the limit of the kernels for
  $(D_q+i\delta)^{-1}$ as $\delta\to 0^+,$ where
  \begin{equation}\label{eqn0}
    D_q=\Delta+k_1^2+q(x_2),
  \end{equation}
  acts on $H^2(\bbR^2).$ We are employing the limiting absorption principle
  limit, which, gives the outgoing solution to
  \begin{equation}\label{eqn0.0}
    (\Delta+k_1^2+q)u=f,
  \end{equation}
  for certain functions $f,$ which includes, but is not limited to compactly
  supported functions.
  
  In this section we use the Fourier transform in the $x_1$-variable and basic ODE theory to construct
  the kernels for these operators   where we usually take
  \begin{equation}\label{eqn172.40}
  q(x_2)=(k_2^2-k_1^2)\chi_{[-d,d]}(x_2).
  \end{equation}  
 It would be more standard to consider the spectral theory and
 resolvent of the operator $\Delta+q(x_2).$ However our analysis
 relies on detailed analyticity properties of the kernel of
 $(\Delta+q+k_1^2+i\delta)$ for $\delta>0,$ which is why we consider
 the shifted operator. The substance of these results generalizes
 easily to piecewise continuous functions, $q(x_2),$ with support in
 $[-d,d].$

With $H^2(\bbR^2)$ as the domain, $D_q$ defines an unbounded self
adjoint operator on $L^2(\bbR^2).$ The spectrum of this operator,
$\sigma(D_{q}),$ is well known to lie in the interval
$(-\infty,k_2^2].$ In this section we present a construction for the resolvent
    kernel of this operator, which allows for the construction of the
    perturbation terms $w^{l,r},$ by taking for $q$ either $q_l$ or
    $q_r.$ To that end we need compute the kernel of the limit
    $$\lim_{\delta\to 0^+}(D_{q}+i\delta)^{-1},$$
which we denote by $\cR_{0+}(x;y).$ By a small abuse of terminology, in the
    sequel we refer to $\cR_{0+}$ as the \emph{resolvent kernel}, or outgoing
    resolvent kernel.

Our construction of the resolvent kernel uses the partial Fourier transform in
the $x_1$-variable, which we denote by
\begin{equation}
  \tu(\xi,x_2)=\int_{-\infty}^{\infty}u(x_1,x_2)e^{-i\xi x_1}dx_1.
\end{equation}
 To construct the resolvent kernel we  the need
 kernels for inverses of the operators
 $$L_{\xi}+i\delta=
 \pa^2_{x_2}+k_1^2+q(x_2)-\xi^2+i\delta,\,\xi\in\bbR,$$
 with domain $H^2(\bbR).$ 
 These kernels are constructed from the \emph{basic solutions} to
\begin{equation}\label{eqn209.81}
  \pa^2_{x_2}\tu_{\pm}(\xi,\delta;x_2)+(k_1^2+q(x_2)-\xi^2+i\delta)
  \tu_{\pm}(\xi,\delta;x_2)=0,
\end{equation}
which satisfy
\begin{equation}\label{eqn7}
  \tu_{\pm}(\xi,\delta;x_2)=e^{\pm ix_2\sqrt{k^2_1-\xi^2+i\delta}}\text{
    for }\pm x_2>d.
\end{equation}
The  $\sqrt{z}$ is defined on $\bbC\setminus (-\infty,0],$ to be positive
on $(0,\infty);$  for $\delta,\xi\in\bbR,$ 
$$\Sgn\Im\sqrt{k_1^2-\xi^2+i\delta}=\Sgn\delta,$$
and therefore, for $\delta>0,$ and $\xi^2>k_1^2,$
\begin{equation}\label{eqn176.55}
  \sqrt{k_1^2-\xi^2+i\delta}=i\sqrt{\xi^2-k_1^2-i\delta}.
\end{equation}
Hence, for $\delta>0,$   $\tu_{\pm}(\xi,\delta;x_2)$ decays
exponentially as $\pm x_2\to\infty.$ 
Taking $\delta\to 0^+$ we get the basic solutions that
satisfy, as $\pm x_2\to\infty,$
\begin{equation}
  \tu_{\pm}(\xi,0^+;x_2)=\begin{cases} &e^{\pm i
    x_2\sqrt{k_1^2-\xi^2}}\text{ for }|\xi|\leq k_1,\\
  &e^{\mp
    x_2\sqrt{\xi^2-k_1^2}}\text{ for }|\xi|> k_1.
  \end{cases}
\end{equation}
The solutions $\tu_+(\xi,0^+;x_2)$ are outgoing as $x_2\to\infty,$ and
$\tu_-(\xi,0^+;x_2)$ are outgoing as $x_2\to-\infty.$  If $q(-x_2)=q(x_2),$ then it
is easy to show that 
\begin{equation}\label{eqn8}
  \tu_{-}(\xi,\delta;x_2)=\tu_{+}(\xi,\delta;-x_2).
\end{equation}

\begin{remark}\label{rmk1}
  It should be noted that while the families
  $\tu_{\pm}(\xi,\delta;x_2)$ are analytic as functions of $\xi^2,$ except at $\xi^2=k_1^2,$
  their analyticity properties as functions of $\xi$ are more
  complicated. This is because, with our choice of square-root, the
  composition $\xi\mapsto\sqrt{k_1^2-\xi^2}$ is analytic and
  single-valued in $\bbC\setminus (-\infty,-k_1]\cup [k_1,\infty).$
      We see that letting $\delta\to 0^+$ implies that, for $\xi$ real
   with for $|\xi|>k_1,$
   $$\sqrt{k_1^2-\xi^2}=i\sqrt{\xi^2-k_1^2},$$ where
   $\sqrt{\xi^2-k_1^2}\in(0,\infty).$ Fortunately our applications
   only require analyticity for $\xi$ in small neighborhoods of the
   intervals $(k_1,k_2),(-k_2,-k_1)\subset\bbR.$ Since this avoids the
   branch points at $\pm k_1,$ the $\sqrt{\xi^2-k_1^2}$ has a single
   valued, analytic determination in a neighborhood of these open
   intervals, which is positive for $k_1<|\xi|<k_2.$

   As solutions to an ODE,
   the functions $\tu_{\pm}(\xi,0^+;x_2)$ are specified by their
   behavior for $\pm x_2>d,$ and therefore have analytic extensions to
   $\tu_{\pm}(\zeta,0^+;x_2),$ for $\zeta$ in a neighborhood,
   $\cU\subset\bbC,$ of $(k_1,k_2)\cup (-k_2,-k_1).$ As these
   solutions are also determined by their asymptotics, for small enough
   $\delta,$ these extensions satisfy
   $$\tu_{\pm}(\xi,\delta;x_2)=\tu_{\pm}(\sqrt{\xi^2-i\delta},0^+;x_2).$$
   Note also that
   $\tu_{\pm}(\zeta,0^+;x_2)=\tu_{\pm}(-\zeta,0^+;x_2),$ for $\zeta\in
   \cU.$ This does not require $q(x_2)=q(-x_2).$

   The simple case of the $\Delta+k^2$ is instructive. Constructing
   the outgoing solution to $(\Delta+k^2)u=f,$ via a 1-dimensional
   Fourier transform gives the formula
   \begin{equation}
     u(x_1,x_2)=-\frac{i}{4\pi}\int_{-\infty}^{\infty}\int_{-\infty}^{\infty}
     \frac{e^{i\sqrt{k^2-\xi^2}|x_2-y_2|}e^{i\xi x_1}\tf(\xi,y_2)dy_2
       d\xi}{\sqrt{k^2-\xi^2-k^2}},
   \end{equation}
   where $\sqrt{k^2-\xi^2}=i\sqrt{\xi^2-k^2}$ if $|\xi|\geq k.$ The
   resolvent kernel of $\pa_{x_2}^2-\xi^2+k^2$ is an even function of $\xi,$
   and has an analytic extension to a neighborhood of
   $(-\infty,-k)\cup (k,\infty).$ This is {\em not} the restriction of its
   analytic extension to the upper, or lower half plane.
\end{remark}

In order to satisfy the equation and boundary conditions at $x_2=\pm
d,$ implied by $\Dom(L_{\xi})=H^2(\bbR),$ it is necessary for $\tu_+$
to have the form
\begin{equation}\label{eqn9}
  \tu_+(\xi,\delta;x_2)=
  \begin{cases}
    &a_0e^{ ix_2\sqrt{k_2^2-\xi^2+i\delta}}+b_0e^{
      -ix_2\sqrt{k^2_2-\xi^2+i\delta}}\text{ for }|x_2|<d,\\
    &a_-e^{ ix_2\sqrt{k_1^2-\xi^2+i\delta}}+b_-e^{
      -ix_2\sqrt{k^2_1-\xi^2+i\delta}}\text{ for }x_2<-d.
  \end{cases}
\end{equation}
The simple form of $\tu_+(\xi,\delta;x_2)$ for $|x_2|<d$ assumes that
$q$ is given by~\eqref{eqn172.40}, which we assume for the remainder
of this section.  If $\delta=0^+,$ and $|\xi|<k_1,$ then $\tu_+$
oscillates as $|x_2|\to \infty,$ whereas if $|\xi|>k_1,$ then this
solution decays exponentially as $x_2\to\infty,$ but typically grows
exponentially as $x_2\to-\infty.$

Using the solutions described in~\eqref{eqn7}--\eqref{eqn9}, we now
construct the inverse for $L_{\xi}+i\delta.$  The  inverse is given by
\begin{multline}\label{eqn18.02}
  R_{\xi,\delta}(\tf)(x_2)=\frac{1}{W(\xi,\delta)}
  \Bigg[\int_{-\infty}^{x_2}\tu_+(\xi,\delta;x_2)\tu_-(\xi,\delta;y_2)\tf(y_2)dy_2+\\
    \int_{x_2}^{\infty}\tu_-(\xi,\delta;x_2)\tu_+(\xi,\delta;y_2)\tf(y_2)dy_2\Bigg],
\end{multline}
where
\begin{equation}\label{eqn48.51}
  W(\xi,\delta)=u_-(\xi,\delta;x_2)\pa_{x_2}u_+(\xi,\delta;x_2)-u_+(\xi,\delta;x_2)\pa_{x_2}u_-(\xi,\delta;x_2),
\end{equation}
is the Wronskian, which is independent of $x_2.$ This operator agrees
with the bounded inverse of $L_{\xi}+i\delta,$ where it is defined,
but is also defined as an operator from $L^2_{\comp}(\bbR)\to
H^2_{\loc}(\bbR),$ even for $\delta=0^+,|\xi|< k_1.$ We denote the
operator by $R_{\xi,\delta},$  with
\begin{equation}
  R_{\xi,0^+}=\lim_{\delta\to 0^+}R_{\xi,\delta}.
\end{equation}

The operator $\pa_{x_2}^2+k_1^2+q(x_2),$ acting on
$H^2(\bbR)$ is self adjoint and its spectrum is
easily shown lie in the interval $(-\infty,k_2^2].$ Thus
  $L_{\xi}+i\delta$ is invertible on
$L^2(\bbR)$ provided $\delta\neq 0,$ and also if $\delta=0,$ but
$|\xi|>k_2.$ Indeed it is also invertible with $\delta =0,$ for all
but finitely many $\xi$ with $k_1<|\xi|<k_2.$ Because it is self adjoint we
have the norm estimate for
$(L_{\xi}+i\delta)^{-1}=R_{\xi,\delta}$
\begin{equation}\label{eqn28.20}
  \|R_{\xi,\delta}\|\leq
  \frac{1}{\dist(\xi^2-i\delta,(-\infty,k_2^2])}\leq
    \begin{cases}&\frac{1}{|\delta|}\text{ for }|\xi|<k_2,\\
      &\frac{1}{\sqrt{(\xi^2-k_2^2)^2+\delta^2}}\text{ for }|\xi|\geq
      k_2.
      \end{cases}
\end{equation}

If $f\in L^2(\bbR^2),$ then
the $L^2$-solution to $(D_{q}+i\delta)u=f$ is given by
\begin{equation}\label{eqn12}
  u(x_1,x_2)=\frac{1}{2\pi}\int_{-\infty}^{\infty}e^{i\xi x_1}R_{\xi,\delta}(\tf)(x_2)d\xi,
\end{equation}
where
\begin{equation}
  \tf(\xi,x_2)=\int_{-\infty}^{\infty}f(x_1,x_2)e^{-i\xi x_1}dx_1.
\end{equation}
Using Plancherel's formula and~\eqref{eqn28.20} we see that, for $\delta\neq 0,$
  \begin{equation}
    \int_{\bbR^2}|u(x_1,x_2)|^2dx\leq
    \frac{1}{2\pi}\int_{\bbR^2}\frac{|\tf(\xi,x_2)|^2}{\delta^2+(\xi^2-k_2^2)_+^2}dx_2d\xi
    \leq \frac{1}{\delta^2}\int_{\bbR^2}|f(x_1,x_2)|^2dx.
  \end{equation}
 If $q=0,$ then the limit of~\eqref{eqn12} as $\delta\to 0^+,$ is just
 Sommerfeld's integral expressing the outgoing fundamental solution to
 $\Delta+k_1^2$ as a Fourier transform in the $x_1$-variable.


Relation~\eqref{eqn8} shows that
\begin{equation}\label{eqn55.52}
  W(\xi,\delta)=2u_+(\xi,\delta;0)\pa_{x_2}u_+(\xi,\delta;0);
\end{equation}
from~\eqref{eqn9} we conclude that
\begin{equation}
   W(\xi,\delta)=2i (a_0^2-b_0^2)\sqrt{k^2_2-\xi^2+i\delta}.
\end{equation}
It follows from Remark~\ref{rmk1} and~\eqref{eqn55.52} that $W(\xi,0^+)=\lim_{\delta\to
  0^+}W(\xi,\delta)$ has an analytic extension to the open set,
$\cU;$  for $\pm\sqrt{\xi^2-i\delta}\in\cU,$
\begin{equation}\label{eqn225.81}
  W(\xi,\delta)= W(\sqrt{\xi^2-i\delta},0^+).
  \end{equation}

Suppose that $W(\xi_0,\delta)=0;$ this happens if and only if
\begin{equation}\label{eqn16}
  \tu_+(\xi_0,\delta;x_2)=c\tu_-(\xi_0,\delta;x_2)=c\tu_+(\xi_0,\delta;-x_2),
\end{equation}
for some non-zero constant $c.$
For $x_2<0,$ this shows
\begin{equation}
  \tu_-(\xi_0,\delta;x_2)=e^{-ix_2\sqrt{k_1^2-\xi_0^2+i\delta}}.
\end{equation}
If $\delta> 0,$ then $\Im \sqrt{k_1^2-\xi_0^2+i\delta}>0,$ and
therefore $\tu_+(\xi_0,\delta;x_2)\in L^2(\bbR),$ would be an
$L^2$-eigenvector with eigenvalue $k_1^2-\xi_0^2+i\delta.$ As
$\pa_{x_2}^2+k_1^2+q(x_2)$ is self-adjoint, its spectrum is real,
and therefore such roots cannot exist. If a relation like~\eqref{eqn16} holds, then $\delta=0.$

If $k_1^2-\xi_0^2<0,$ then
\begin{equation}
  \Im \sqrt{k_1^2-\xi_0^2}=\lim_{\delta\to 0^+}\Im \sqrt{k_1^2-\xi_0^2+i\delta}>0.
\end{equation}
hence
\begin{equation}
  \tu_-(\xi_0,0^+;x_2)=e^{x_2\sqrt{\xi_0^2-k_1^2}},
\end{equation}
which decays exponentially as $x_2\to-\infty.$  The
function, $\tu_+(\xi_0,0^+;x_2)$ is an $L^2$-eigenfunction of
$\pa_{x_2}^2+k_1^2+q(x_2)$ with eigenvalue $\xi_0^2.$

\begin{definition}
 A \emph{wave-guide} solution is a function $v\in H^2(\bbR)$ that is a solution to
\begin{equation}
  \pa_{x_2}^2v-\xi^2v+(k_1^2+q(x_2))v=0,
\end{equation}
which satisfies the estimate
\begin{equation}
  |v(x_2)|\leq Ce^{-\sqrt{\xi^2-k_1^2}|x_2|}\text{ for }|x_2|>d.
\end{equation}
\end{definition}
The functions $e^{\pm i\xi x_1}v(x_2)$ are solutions to the homogeneous equation
$D_{q}u=0.$ These solutions are strongly localized within the channel $|x_2|<d.$
The solution with $0<\xi$ is a right-ward moving wave, and that with $\xi<0$ is
left-ward moving.

In
Appendix~\ref{app0} we prove the following theorem.
\begin{theorem}\label{thm0}
  If $0<k_1<k_2,$ for $\xi\in\bbR,$ and $\delta\ge 0,$
  there are finitely many simple solutions, $\{\pm \xi_n:\:n=1,\dots, N\},$ to the equation
  \begin{equation}\label{eqn37.20}
    W(\xi,\delta)=0,
  \end{equation}
  all of which satisfy $\delta=0^+,$ and
  \begin{equation}
   k_1<|\xi_n|<k_2.
  \end{equation}
  These are the only solutions to equation~\eqref{eqn37.20}. For 
  $0<k_1<k_2,$ and any $d>0,$ there is at least one non-trivial solution to $W(\xi,0^+)=0.$
\end{theorem}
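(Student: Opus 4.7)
The plan is to identify zeros of $W(\xi,\delta)$ with eigenvalues of the self-adjoint Schrödinger operator $H := \pa_{x_2}^2 + k_1^2 + q(x_2)$ on $L^2(\bbR)$, and then to use spectral theory, one explicit obstruction computation, and a variational test to locate and produce them. First I would establish the equivalence: from the Wronskian identity~\eqref{eqn55.52}, $W(\xi_0,\delta)=0$ is equivalent to $\tu_+(\xi_0,\delta;\cdot)$ being a nonzero scalar multiple of $\tu_-(\xi_0,\delta;\cdot)$. By the asymptotics in~\eqref{eqn7} together with~\eqref{eqn176.55}, this common function decays exponentially at both $\pm\infty$ whenever $\Im\sqrt{k_1^2-\xi_0^2+i\delta}>0$, so it lies in $H^2(\bbR)$ and is a genuine eigenfunction of $H$ with eigenvalue $\xi_0^2-i\delta$. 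Self-adjointness of $H$ forces this eigenvalue to be real, giving $\delta=0$; the specific limit $\delta\to 0^+$ is what picks out the outgoing branch discussed in Remark~\ref{rmk1}. Conversely, any $L^2$-eigenfunction of $H$ must coincide with the unique (up to scale) decaying solution at each end, forcing the proportionality of $\tu_\pm$ and hence $W(\xi_0,0^+)=0$.

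Next I would locate the eigenvalues. Since $0\le q\le k_2^2-k_1^2$, the form bound $\langle Hv,v\rangle \le k_2^2\|v\|^2 - \|v'\|^2$ gives $\sigma(H)\subset(-\infty,k_2^2]$. Because $q$ has compact support, $H-(\pa_{x_2}^2+k_1^2)$ is a relatively compact perturbation, so Weyl's theorem yields $\sigma_{\mathrm{ess}}(H)=(-\infty,k_1^2]$, and every eigenvalue lies in $(k_1^2,k_2^2]$. To rule out $\lambda=k_2^2$, I would observe that an $L^2$-eigenfunction satisfies $v''=0$ on $(-d,d)$ and $v''=(k_2^2-k_1^2)v$ outside, and the matching at $x_2=\pm d$ (split into even/odd parity classes) has no nonzero solution: the even parity case forces the linear-interior constant to vanish, while the odd-parity case produces the impossible identity $d=-1/\sqrt{k_2^2-k_1^2}$. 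Hence $\xi_0^2\in(k_1^2,k_2^2)$, i.e., $k_1<|\xi_0|<k_2$. Observing also that $W(\pm k_1,0^+)\neq 0$ under the standing assumption~\eqref{eqn29.96}, one rules out the boundary points. Finiteness of the zero set in the bounded interval $(k_1,k_2)$ then follows either from a Birman–Schwinger bound for compactly supported bounded 1D potentials, or more concretely from the explicit transcendental equation arising from~\eqref{eqn87.55}, whose finitely many oscillations over the compact $\xi$-range admit only finitely many roots.

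Simplicity rests on two facts. The geometric eigenspace for $\lambda=\xi_n^2$ is automatically one-dimensional, since outside $[-d,d]$ any eigenfunction must be a scalar multiple of the unique decaying exponential $e^{-\sqrt{\xi_n^2-k_1^2}|x_2|}$ at each end. Upgrading this to the analytic simplicity $W'(\xi_n)\neq 0$ is a Hellmann--Feynman argument: differentiating $L_\xi\tu_\pm=0$ in $\xi$, pairing with the eigenfunction, and integrating by parts produces $W'(\xi_n)$ as a nonzero multiple of $\xi_n\|v_n\|_{L^2}^2$. The main technical obstacle is precisely the boundary-term bookkeeping in this calculation, because $\pa_\xi\tu_+$ generically \emph{grows} at $-\infty$ even though $\tu_+$ decays there; one must use the proportionality $\tu_+=c\tu_-$ at $\xi_n$ to cancel the growing component before taking limits.

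Finally, for existence of at least one root for every $d>0$, I would apply the min--max principle using the trial function $v_a(x_2)=e^{-a|x_2|}$ for small $a>0$. A direct computation with $\|v_a\|^2=1/a$, $\|v_a'\|^2=a$, and $\int q\, v_a^2=(k_2^2-k_1^2)(1-e^{-2ad})/a$ yields
\[
  \frac{\langle H v_a, v_a\rangle}{\|v_a\|^2} \;=\; k_1^2+(k_2^2-k_1^2)\bigl(1-e^{-2ad}\bigr)-a^2,
\]
which for $a$ sufficiently small (specifically, with $2ad(k_2^2-k_1^2)$ dominating $a^2$) strictly exceeds $k_1^2=\sup\sigma_{\mathrm{ess}}(H)$. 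The min--max principle then produces an eigenvalue of $H$ in $(k_1^2,k_2^2)$, equivalently a root $\xi_1\in(k_1,k_2)$ of $W(\cdot,0^+)$; by the reflection symmetry $W(\xi,0^+)=W(-\xi,0^+)$ noted after~\eqref{eqn225.81}, its partner $-\xi_1$ is also a root.
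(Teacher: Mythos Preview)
Your spectral route is sound in spirit and genuinely different from the paper's, but there is one real gap. The equivalence you set up between zeros of $W(\cdot,0^+)$ and $L^2$-eigenvalues of $H$ only goes through when $|\xi_0|>k_1$: for $|\xi_0|<k_1$ the basic solutions $\tu_\pm(\xi_0,0^+;\cdot)$ are purely oscillatory at infinity (here $\Im\sqrt{k_1^2-\xi_0^2}=0$), so $W=0$ would produce a bounded but non-$L^2$ solution, and self-adjointness yields no contradiction. Your argument therefore locates the \emph{eigenvalues} in $(k_1^2,k_2^2)$ but never excludes Wronskian zeros in $(-k_1,k_1)$. The patch is short: for $\delta=0$ and $|\xi|<k_1$ the ODE has real coefficients, so the Wronskian of $\tu_+$ with its complex conjugate $\overline{\tu_+}$ is constant; evaluating as $x_2\to+\infty$ gives $-2iB$ with $B=\sqrt{k_1^2-\xi^2}>0$, whereas if $\tu_+=c\,\tu_-$ then evaluating as $x_2\to-\infty$ gives $+2iB|c|^2$, forcing the contradiction $|c|^2=-1$. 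Alternatively one can inspect the real and imaginary parts of the bracket in~\eqref{eqn87.55} directly: for $|\xi|<k_1$ the coefficient $2\xi^2-k_1^2-k_2^2$ is nonzero, so vanishing would require $\sin 2dA=\cos 2dA=0$.

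By contrast, the paper works entirely with the explicit formula~\eqref{eqn87.55}, reducing to the transcendental equation
\[
\tan 2d\sqrt{k_2^2-\xi^2}\;=\;\frac{2\sqrt{(k_2^2-\xi^2)(\xi^2-k_1^2)}}{k_1^2+k_2^2-2\xi^2},
\]
and arguing case by case: $|\xi|<k_1$ is excluded because the right side is purely imaginary while the left is real; $|\xi|>k_2$ because both sides are purely imaginary of opposite sign; existence comes from an intermediate-value argument comparing the two graphs over $(k_1,k_2)$. Your approach via Weyl's theorem and min--max is more conceptual, extends without change to non-piecewise-constant $q$, and, via the Hellmann--Feynman derivative, actually addresses analytic simplicity of the roots---something the paper's appendix asserts but does not prove. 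One minor remark: your trial function $e^{-a|x_2|}$ lies in the form domain $H^1$ but not in $H^2=\Dom(H)$; this is harmless for the variational argument but worth saying.
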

\noindent
The number of solutions, $N,$ is a non-decreasing function of $d,$ the
width of the channel.
\begin{remark}
  As it reduces to the study of 2nd order ODEs with compactly
  supported potentials, the analysis of guided modes in $2d$ is quite
  classical, and this result is well known. The analysis in the higher
  dimensional case is more recent. See~\cite{JolyPoirier} and the
  references therein.
\end{remark}

\begin{remark}\label{rmk5.9}
  In the sequel we assume that $0<k_1<\xi_n<k_2,$ for $n=1,\dots,N.$  We can show that
  \begin{equation}
    \lim_{\delta\to 0^+}(\xi-\xi_n)R_{\xi_n,\delta}\tf(x_2)=
    \frac{\tu_{+}(\xi_n,0^+;x_2)\langle \tu_{+}(\xi_n,0^+;\cdot),\tf\rangle}{c W_{\xi}(\xi_n)},
  \end{equation}
  where $c$ is defined in~\eqref{eqn16}. Following \S 2.6
  of~\cite{Titchmarsh1946_ODEs}, we conclude that
  \begin{equation}
    v_n(x_2)=\frac{ \tu_{+}(\xi_n,0^+;\cdot)}{\sqrt{|c W_{\xi}(\xi_n)|}}
  \end{equation}
  has $L^2$-norm 1.
\end{remark}

Our main interest is in analyzing the behavior of solutions to
$(D_{q}+i\delta) u_{\delta}=f$ as $\delta\to 0^+.$ In general, the
limiting solution does not belong to $L^2(\bbR^2)$ and will not exist
unless $f$ satisfies certain conditions.  For $\delta\neq 0,$
$u_{\delta}\in H^2(\bbR^2),$ and the PDE, $(D_q+i\delta)u_{\delta}=f,$
is equivalent to
 \begin{equation}
  (L_{\xi}+i\delta)\tu_{\delta}=\tf,\text{ for }\xi\in\bbR,
 \end{equation}
 which implies that
 \begin{equation}
   \tu_{\delta}(\xi,\cdot)=R_{\xi,\delta}\tf(\xi,\cdot).
 \end{equation}
 We  consider what happens where the Wronskian
vanishes.  Using~\eqref{eqn16}, we see that at such a root the
limit of $W(\xi,\delta)R_{\xi,\delta}\tf$ satisfies
\begin{equation}
  \lim_{\delta\to 0^+}W(\xi_0,\delta)R_{\xi_0,\delta}(\tf)=\frac{1}{c}
  \int_{-\infty}^{\infty}\tu_+(\xi_0,0^+;x_2)\tu_+(\xi_0,0^+;y_2)\tf(\xi_0,y_2)dy_2.
  \end{equation}
From this relation it is clear that in order for $\lim_{\delta\to
  0^+}R_{\xi_0,\delta}\tf(\xi_0,\cdot)$ to exist, where $W(\xi_0,0^+)=0,$ it would be
necessary for
\begin{equation}\label{eqn28}
  \int_{-\infty}^{\infty}\tu_+(\xi_0,0^+;y_2)\tf(\xi_0,y_2)dy_2=0.
\end{equation}
In fact, we are only interested in the inverse Fourier transform,
\begin{equation}\label{eqn173.31}
\lim_{\delta\to 0^+} \frac{1}{2\pi} \int_{-\infty}^{\infty}R_{\xi,\delta}(\tf)(\xi,x_2)e^{i\xi x_1}d\xi,
\end{equation}
which may well have a limit even if
$\lim_{\delta\to
  0^+}R_{\xi,\delta}(\tf)(\xi,x_2)$ does not exist for all real $\xi.$
This is because for data analytic in $\xi$ we can deform the contour
and avoid the singularities at $\{\pm\xi_n\}.$
\begin{proposition}\label{prop5.230}
 Suppose that $\pm k_1$ are not  roots of $W(\xi,0^+).$  If $f$ satisfies
 the following properties:
 \begin{enumerate}
   \item The $\supp f\subset \bbR\times [-L,L]$ for some finite $L.$
 \item For each $x_2\in[-L,L],$ the distributional partial Fourier
   transform of $f$ in the $x_1$-variable, $\tf(\cdot,x_2),$ is in
   $L^1_{\loc}(\bbR),$ and in $L^2(\bbR\setminus [-k_2,k_2]).$
   \item The function $\xi\to\tf(\xi,x_2)$ has an analytic extension
     to $\cU,$ a complex neighborhood of $(-k_2,-k_1)\cup (k_1,k_2).$
     \item $\tf(\xi,x_2)\in L^1_{\loc}(\bbR^2).$
 \end{enumerate}
In this case the limit
\begin{equation}\label{eqn43.23}
v_{0+}(x_1,x_2)=  \lim_{\delta\to 0^+}
\frac{1}{2\pi}   \int_{-\infty}^{\infty}R_{\xi,\delta}(\tf)(\xi,x_2)e^{i\xi x_1}d\xi
\end{equation}
exists in $H^2_{\loc}(\bbR^2)$ and defines a solution to $D_{q}v_{0+}=f.$
\end{proposition}

\begin{proof}
  For $\delta>0,$ the integral on the right hand side
  of~\eqref{eqn43.23}, which we denote $v_{\delta},$ satisfies
  \begin{equation}\label{eqn75.51}
    (D_q+i\delta)v_{\delta}=f.
  \end{equation}
Under the hypotheses on $f$  we can express the Fourier transform in
$x_1$ as
\begin{equation}
  R_{\xi,\delta}[\tf(\xi,\cdot)](x_2)=
  \frac{U_{\xi,\delta}[\tf(\xi,\cdot)](x_2)}
{W(\xi,\delta)}=
\frac{U_{\sqrt{\xi^2-i\delta},0^+}[\tf(\xi,\cdot)](x_2)}
{W(\sqrt{\xi^2-i\delta},0^+)},
\end{equation}
for $\xi\in\cU.$ In light of Remark~\ref{rmk1}, and the fact that the
$y_2$-integrals defining the numerator, $U_{\sqrt{\xi^2-i\delta},0^+}[\tf(\xi,\cdot)](x_2),$
extend over a compact interval, this function has an analytic extension as a
function of $\xi$ to the neighborhood, $\cU,$ of $(-k_2,-k_1)\cup (k_1,k_2).$
The denominator also has an analytic extension, with simple zeroes at
$\{\pm\sqrt{\xi^2_n+i\delta}:\:n=1,\dots,N\}.$ Those with $+$-signs lie in the
upper half plane near to $\{|\xi_n|\};$ those with $-$-signs lie in the lower
half plane near to $\{-|\xi_n|\}.$

Let $\nu>0$ be a small number so that, for all small enough $\delta,$
tending to $0^+,$ the numerator,
$U_{\xi,\delta}[\tf(\xi,\cdot)](x_2),$ is analytic in $ \cB_{2\nu},$ where
\begin{equation}
 \cB_{\epsilon}= \bigcup_{n=1}^ND_{\epsilon}(\xi_n)\cup
 D_{\epsilon}(-\xi_n)\text{ for }0<\epsilon.
\end{equation}
Furthermore, assume that $\nu>0$ is less than $1/4$ the minimum
distance between successive values of $\{\pm\xi_n\}\cup\{\pm k_1,\pm
k_2\}.$ Let $\Gamma^+_{\nu}$ be the contour, which lies along the real
axis, except for semi-circles, in the upper half plane, of radius
$\nu$ centered on the zeros (both positive and negative) of
$W(\xi,0^+).$ See Figure~\ref{fig2}.  For $\delta>0,$ the numerator of
the integrand is analytic in the region between the real axis and
$\Gamma^+_{\nu},$ and the denominator has simple zeroes at
$\{\pm\sqrt{\xi^2_n+i\delta}:\:n=1,\dots,N\},$ which, for small enough
$\delta,$ lie in $\cB_{\nu}.$

For small enough $\delta>0,$ we can therefore replace the integration along $\bbR$
in~\eqref{eqn43.23}  with an integral over the contour
$\Gamma_{\nu}^+.$ The residue theorem implies that 
\begin{equation}\label{eqn47.24}
  \begin{split}
v_{\delta}(x_1,x_2)&= \frac{1}{2\pi}
\int_{-\infty}^{\infty}R_{\xi,\delta}(\tf)(\xi,x_2)e^{i\xi x_1}d\xi
=\frac{1}{2\pi}
\int_{\Gamma^+_{\nu}}R_{\xi,\delta}(\tf)(\xi,x_2)e^{i\xi x_1}d\xi  +\\
&i\sum_{n=1}^{N}\frac{U_{\xi_n,0^+}(\tf)
  (\sqrt{\xi_n^2+i\delta},x_2)e^{i\sqrt{\xi_n^2+i\delta}\, x_1}}{
  W_{\xi}(\xi_n,0^+)}.
\end{split}
\end{equation}
From this expression it is quite clear that we can let $\delta\to
0^+,$ and $v_{\delta}$ and its derivatives converge locally uniformly
to $v_{0+}=\cR_{0+}f$ and its derivatives, with
\begin{equation}\label{eqn48.24}
 v_{0+}(x_1,x_2)=\frac{1}{2\pi}
\int_{\Gamma^+_\nu}R_{\xi,0^+}(\tf)(\xi,x_2)e^{i\xi x_1}d\xi  +
i\sum_{n=1}^{N}  v_n(x_2)\langle \tf(\xi_n,\cdot), v_n\rangle e^{i\xi_n  x_1},
\end{equation}
where the $\{v_n\}$ are defined in Remark~\ref{rmk5.9}. 
Taking the limit in~\eqref{eqn75.51} in the weak sense we deduce that
$D_{q}v_{0+}=f,$ weakly.  The fact that $v_{0+}\in H^2_{\loc}$ follows
easily from~\eqref{eqn48.24}, and standard estimates. By elliptic regularity it follows that
$D_{q}v_{0+}=f$ holds in the $H^2_{\loc}$--sense.
\end{proof}

As $\Gamma^+_{\nu}$ lies in the closed upper half plane, $e^{i\xi x_1}$ is
exponentially decaying along the semi-circles as $x_1\to\infty.$ Hence this
gives the correct asymptotics as $x_1\to\infty,$ showing that the guided modes
are outgoing to the right. To get the asymptotics as $x_1\to -\infty$ we need to
replace the contour with $\Gamma_{\nu}^-,$ its reflection across the real axis
into the lower half plane. Using the analyticity properties of the integrand in
the set $\cB_{\nu},$ we see that this replaces the sum in~\eqref{eqn48.24} with
\begin{equation}
  -i\sum_{n=1}^{N}  v_n(x_2)\langle \tf(-\xi_n,\cdot), v_n\rangle e^{-i\xi_n
      x_1},
\end{equation}
which shows that the guided wave contributions are also outgoing to the left.

  \subsection{Proof of Theorem~\ref{thm0}}\label{app0}
  In this appendix we prove:
\begin{theorem*}
  If $0<k_1<k_2,$ for $\xi\in\bbR,$ and $\delta\ge 0,$
  there are finitely many simple solutions, $\{\pm \xi_n:\:n=1,\dots, N\},$ to the equation
  \begin{equation}\label{eqn37.201}
    W(\xi,\delta)=0,
  \end{equation}
  all of which satisfy $\delta=0^+,$ and
  \begin{equation}
   k_1<|\xi_n|<k_2.
  \end{equation}
  These are the only solutions to equation~\eqref{eqn37.201}. For 
  $0<k_1<k_2,$ and any $d>0,$ there is at least one non-trivial solution to $W(\xi,0^+)=0.$
\end{theorem*}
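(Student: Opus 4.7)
The plan is to analyze the explicit formul\ae\ for the Wronskian given in~\eqref{eqn87.55}, exploiting the evenness $W(\xi,\delta) = W(-\xi,\delta)$ (manifest because each formula depends on $\xi$ only through $\xi^2$) to reduce to $\xi \geq 0$. The exclusion of $\delta > 0$ is the argument already sketched in the text following~\eqref{eqn16}: $W(\xi_0,\delta) = 0$ forces $\tu_+(\xi_0,\delta;\cdot)$ to be a scalar multiple of $\tu_-(\xi_0,\delta;\cdot)$; for $\delta > 0$ both decay exponentially at their respective infinities (since $\Im\sqrt{k_1^2 - \xi_0^2 + i\delta} > 0$), yielding an $L^2$-eigenvector of the self-adjoint operator $L_{\xi_0}$ with non-real eigenvalue $-i\delta$, a contradiction. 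With $\delta = 0^+$ I would eliminate $|\xi| > k_2$ using the third line of~\eqref{eqn87.55}: there $\tA, \tB > 0$, $\cosh, \sinh > 0$, and $2\xi^2 - k_1^2 - k_2^2 > 0$, so both summands inside the bracket are strictly positive and $W(\xi) \neq 0$; the endpoint $|\xi| = k_2$ is handled by direct substitution into the limiting expression. For $|\xi| < k_1$, the bracket in the first line of~\eqref{eqn87.55} has real part $(2\xi^2 - k_1^2 - k_2^2)\sin(2dA)/A$ and imaginary part $-2B\cos(2dA)$ with $B > 0$; simultaneous vanishing would force $\cos(2dA) = 0$ and then $\sin(2dA) = \pm 1$, which would demand $2\xi^2 = k_1^2 + k_2^2$, contradicting $\xi^2 < k_1^2$. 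The boundary points $\xi = \pm k_1$ are excluded by the standing hypothesis~\eqref{eqn29.96}.

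To produce at least one root in $(k_1,k_2)$, I would use the factorization
\begin{equation*}
W(\xi) = -\frac{2 e^{-2\tB d}}{A}\bigl(\tB\sin(dA) + A\cos(dA)\bigr)\bigl(\tB\cos(dA) - A\sin(dA)\bigr),
\end{equation*}
obtained by substituting the double-angle identities for $\sin(2dA)$ and $\cos(2dA)$ together with $2\xi^2 - k_1^2 - k_2^2 = \tB^2 - A^2$ into the middle line of~\eqref{eqn87.55}. The second, ``even-mode'' factor vanishes iff $A\tan(dA) = \tB$. As $\xi$ increases from $k_1^+$ to $k_2^-$, $A$ decreases monotonically from $\sqrt{k_2^2 - k_1^2}$ to $0$ and $\tB$ increases monotonically from $0$ to $\sqrt{k_2^2 - k_1^2}$. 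At $\xi = k_2^-$ the left side behaves like $d A^2 \to 0$ while $\tB \to \sqrt{k_2^2 - k_1^2}$, so $A\tan(dA) - \tB$ is strictly negative near $\xi = k_2$; near $\xi = k_1^+$ the same quantity is either positive (when $\tan(d\sqrt{k_2^2 - k_1^2}) > 0$, producing a sign change directly), or $A\tan(dA)$ has a pole inside $(k_1,k_2)$ where, as $dA$ crosses $(m+\tfrac12)\pi$, it jumps from $-\infty$ to $+\infty$, producing a sign change just after the pole. In every case the intermediate value theorem yields a root of the even-mode factor, hence of $W$, strictly inside $(k_1,k_2)$.

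For simplicity, each root $\xi_n$ corresponds via~\eqref{eqn16} to an $L^2$-eigenvalue $\xi_n^2$ of the self-adjoint $1$-dimensional Schr\"odinger operator $\pa_{x_2}^2 + k_1^2 + q(x_2)$; the eigenspace is $1$-dimensional because the decaying solutions $\tu_{\pm}(\xi_n, 0^+;\cdot)$ at $\pm\infty$ each span a $1$-dimensional subspace of the solution space, and a Feynman--Hellmann differentiation of $L_\xi \tu_+ = 0$ in $\xi$, paired against the normalized eigenfunction $v_n$, then yields $W_\xi(\xi_n,0^+) \neq 0$ and simplicity of the zero of $W$. Finiteness follows from the classical fact that a bounded, compactly supported potential admits only finitely many bound states; it can also be read directly off the equation $A\tan(dA) = \tB$, which has only finitely many solutions in $A \in (0, \sqrt{k_2^2 - k_1^2})$ because $\tan(dA)$ has only finitely many poles in that interval. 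The main obstacle is the existence argument: the sign of $A\tan(dA) - \tB$ at $\xi = k_1^+$ flips each time $d\sqrt{k_2^2 - k_1^2}$ crosses $(m+\tfrac12)\pi$, so a short case analysis is required to verify that, regardless of $d > 0$, either the even-mode factor or the odd-mode factor $\tB\sin(dA) + A\cos(dA)$ supplies a zero inside $(k_1,k_2)$.
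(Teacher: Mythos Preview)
Your argument is correct. The treatment of $\delta>0$, $|\xi|<k_1$, and $|\xi|>k_2$ matches the paper's, only phrased directly in terms of the explicit Wronskian~\eqref{eqn87.55} rather than the reduced equation~\eqref{eqn22}. The genuine difference is in the existence step: the paper works with the single equation $\tan(2dA)=2A\tB/(A^2-\tB^2)$ and argues that its right side sweeps from $0$ to $+\infty$ and then from $-\infty$ back to $0$ over $(k_1,k_2)$ while the left side vanishes at $\xi=k_2$, forcing a crossing. You instead factor the middle line of~\eqref{eqn87.55} into the even/odd-mode product $(\tB\sin dA+A\cos dA)(\tB\cos dA-A\sin dA)$ and track the even-mode condition $A\tan(dA)=\tB$. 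The two routes are equivalent via the tangent double-angle identity; yours makes the parity structure of the symmetric well explicit and ties the root count directly to the number of branches of $\tan(dA)$ on $(0,d\sqrt{k_2^2-k_1^2})$, at the price of the short case split on the sign of $\tan(d\sqrt{k_2^2-k_1^2})$. You also supply a Feynman--Hellmann sketch for simplicity of the zeros, which the paper's own proof does not spell out (it records only discreteness via analyticity).
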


\begin{proof}
From the discussion preceding the statement of the theorem in the
previous section, it is clear that we only need to consider the case
that $k_1^2<\xi^2<k_2^2;$ we let
\begin{equation}
     A=\sqrt{k_2^2-\xi^2},\tB=\sqrt{\xi^2-k^2_1}.
\end{equation}
Both $A$ and $\tB$ are positive real numbers.  To find $a_0,
b_0$ we need to solve the linear system:
\begin{equation}
  \left(\begin{matrix}
    e^{iAd}&e^{-iAd}\\iAe^{iAd}&-iAe^{-iAd}\end{matrix}\right)\left(\begin{matrix}
    a_0\\ b_0\end{matrix}\right)=
  \left(\begin{matrix} e^{-\tB d}\\-\tB e^{-\tB d}\end{matrix}\right).
\end{equation}
Solving we see that
\begin{equation}\label{eqn182.32}
  \left(\begin{matrix}
        a_0\\ b_0\end{matrix}\right)=\frac{ie^{-\tB d}}{2A}\left(\begin{matrix}
    (A+i\tB)e^{-iAd}\\ (A-i\tB)e^{iAd}\end{matrix}\right).
\end{equation}
Since $W(\xi,0^+)=2iA(a_0^2-b_0^2);$ we have
\begin{equation}\label{eqn184.31}
  2iA(a_0^2-b_0^2)=2ie^{-2\tB d}[(A^2-\tB^2)\frac{\sin 2dA}{2A}-\tB\cos 2dA].
\end{equation}
But for the points $\xi=\pm k_2,\pm\sqrt{(k_1^2+k_2^2)/2}\,$ the equation
$a_0^2-b_0^2=0$ is equivalent to
\begin{equation}\label{eqn22}
  \tan 2
  d\sqrt{k^2_2-\xi^2}=\frac{2\sqrt{(k^2_2-\xi^2)(\xi^2-k^2_1)}}
    {k^2_1+k^2_2-2\xi^2}.
\end{equation}
Since this is an analytic equation (at least on a Riemann surface covering
$\bbC$) the set of solutions is discrete. Indeed, for a given $d,$ there is a
finite set of points
$$\{k_1\leq \xi_j\leq k_2:\: j=1,\dots,N\},$$
which solve this equation. Clearly $\{-\xi_j\}$ are also solutions.

The right hand side of~\eqref{eqn22} goes from $0$ to $+\infty$ for
$\xi\in[k_1,\sqrt{(k_1^2+k_2^2)/2}),$ and from $-\infty$ to $0$ 
  $\xi \in(\sqrt{(k_1^2+k_2^2)/2},k_2].$ The left hand side of~\eqref{eqn22} vanishes
at $\xi=k_2,$ and therefore, no matter how small $d\sqrt{k_2^2-k_1^2}$ is, the
graph of the left hand side crosses that of the right hand side for some value
of $\xi\in (k_1,k_2).$ See Figure~\ref{fig1}(a).

If
\begin{equation}
  d\sqrt{2(k_2^2-k_1^2)}=\frac{\pi}{2}+(n-1)\pi,\text{ for some }n\in\bbN,
\end{equation}
then $\xi=\pm\sqrt{(k_1^2+k_2^2)/2}$ are roots of $A(a_0^2-b_0^2)=0.$ It is easy to
see that $\xi=\pm k_2$ are never solutions to $A(a_0^2-b_0^2)=0.$ It is possible
that $\xi=\pm k_1$ are roots, but only if
\begin{equation}
  2d\sqrt{k_2^2-k_1^2}=n\pi\text{ for some }n\in\bbN.
\end{equation}

That~\eqref{eqn22} has no positive solutions less than $k_1$ follows from the
observation that, for $|\xi|<k_1,$ the right hand side is a non-zero imaginary
number, and the left hand side is real. For $\xi>k_2,$ the left and right hand
sides of~\eqref{eqn22} are purely imaginary numbers with opposite signs. Hence,
all roots of $W(\xi)$ lie in the intervals $(-k_2,-k_1]\cup [k_1,k_2).$

\end{proof}

Figure~\ref{fig1} shows plots of the two sides of~\eqref{eqn22} (left in blue,
right in red) on a single
graph, with $k_1=16, k_2=18$ and $d=0.05$ and $0.8.$ 

\begin{figure}
  \centering
  \begin{subfigure}[t]{.45\textwidth}
    \includegraphics[width= 6cm]{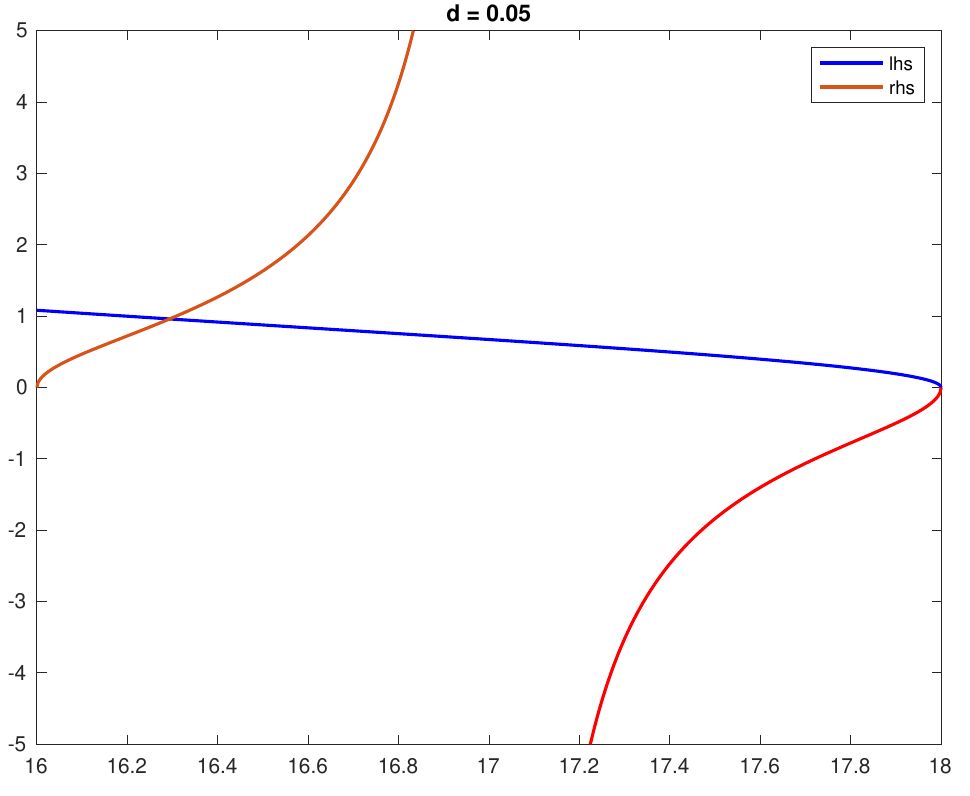}
    \caption{Plot showing left and right sides of~\eqref{eqn22} with $d=0.05.$}  
  \end{subfigure}\quad
  \begin{subfigure}[t]{.45\textwidth}
     \includegraphics[width=6cm]{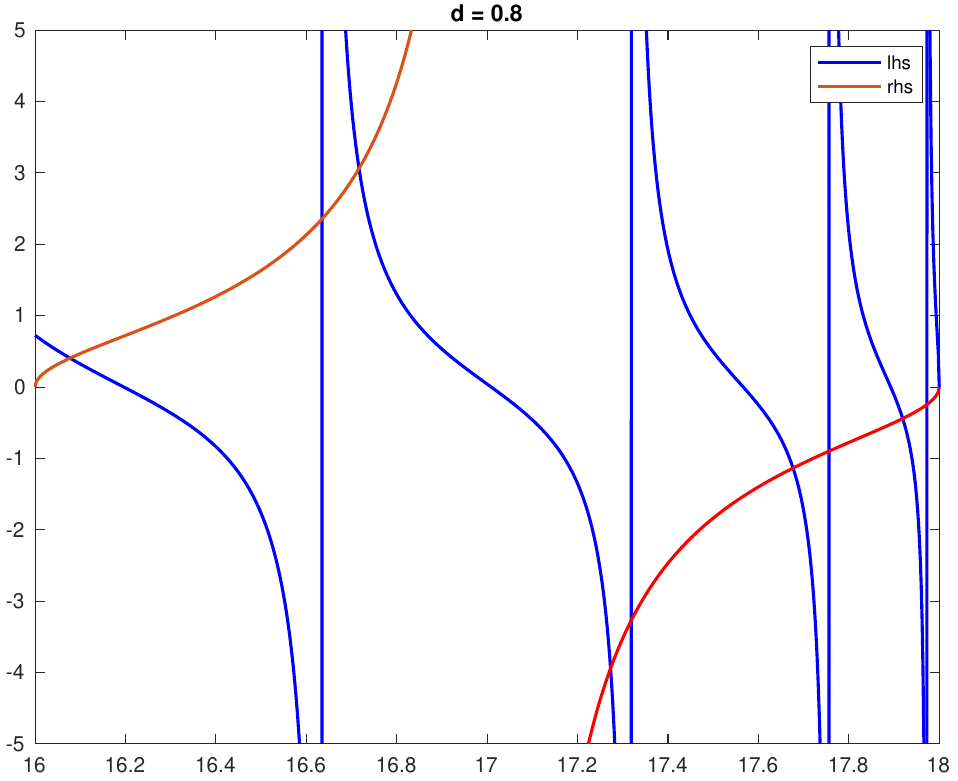}
\caption{Plot showing left and right sides of~\eqref{eqn22} with $d=0.8.$}
\end{subfigure}
    \caption{Zeros of the Wronskian with $k_1=16, k_2=18.$}  \label{fig1}
\end{figure}

\section{Proof of Theorem~\ref{thm00}}\label{AppWEsts}
In this section we give the details of the proof of Theorem~\ref{thm00}.  We
first need to give  a precise description of the limiting absorption resolvents
$\lim_{\delta\to 0^+}(\pa_{x_2}^2+k_1^2+q(x_2)-\xi^2+i\delta)^{-1},$ which are
constructed in Appendix~\ref{sec1}, see~\eqref{eqn18.02}.

\subsection{Estimates on the Wronskian and Basic Solutions of $L_{\xi}$}\label{sec4.1.205}
We  spell out the properties of the basic solutions,
$\tu_{\pm}(\xi,0+;x_2),$ and their Wronskian, $W(\xi),$ in greater
detail.  For these calculations we assume that $q$ is given
in~\eqref{eqn45.32}. Similar estimates hold for any piecewise
continuous, bounded potential, which is supported in an interval,
though we can no longer rely on explicit formul{\ae} for $\tu_{\pm}$
within the support of $q,$ or for the Wronskian.

The Wronskian is given by
\begin{equation}\label{eqn87.55}
  W(\xi)=
  \begin{cases}
    &-e^{2i Bd}\left[(2\xi^2-k_1^2-k_2^2)\frac{\sin 2d A}{A}-2iB\cos
      2dA\right]\text{ for }|\xi|<k_1,\\
    &-e^{-2\tB d}\left[(2\xi^2-k_1^2-k_2^2)\frac{\sin 2d A}{A}+2\tB\cos
      2dA\right]\text{ for }k_1<|\xi|<k_2,\\
    &-e^{-2 \tB d}\left[(2\xi^2-k_1^2-k_2^2)\frac{\sinh 2d \tA}{\tA}+2\tB\cosh
      2d\tA\right]\text{ for }k_2<|\xi|.
  \end{cases}
\end{equation}
Throughout this appendix
\begin{equation}
  \begin{split}
    A=\sqrt{k_2^2-\xi^2},\quad& B=\sqrt{k_1^2-\xi^2},\\
        \tA=\sqrt{\xi^2-k_2^2},\quad& \tB=\sqrt{\xi^2-k_1^2}.
  \end{split}
\end{equation}
These formul{\ae} and those for $\tu_+$ below are provided for the convenience of the reader, as all can be
derived from the formula for $|\xi|<k_1$ by choosing the correct branch of the
square root:  $\sqrt{z}$ is defined in $\bbC\setminus (-\infty,0],$
  with $\sqrt{x}\in (0,\infty)$ for $x\in (0,\infty).$

The Wronskian is analytic in a neighborhood of $\pm k_2,$ (where $A,\tA$
vanish), but has square root singularities at $\pm k_1.$ It never vanishes at
$\pm k_2,$ and does not vanish at $\pm k_1,$ provided that
\begin{equation}
  2d\sqrt{k_2^2-k_1^2}\neq n\pi\text{ for any }n\in\bbZ.
\end{equation}
As $|\xi|\to\infty,$ we have the lower bound
\begin{equation}
  |W(\xi)|>M|\xi|.
\end{equation}

Since $q$ is supported in $[-d,d]$ formul{\ae} for $\tw(\xi,x_2;y)$ as $|x_2|\to\infty$
are given by
\begin{equation}\label{eqn92.55}
  \begin{split}
    \tw(\xi,x_2;y)&=-\frac{\tu_+(\xi,0+;x_2) e^{-iy_1\xi}}{2W(\xi)\sqrt{\xi^2-k_1^2}}\times
    \\&\int_{-d}^{d}\tu_+(\xi,0+;-z_2)q(z_2)
    e^{-|z_2-y_2|\sqrt{\xi^2-k_1^2}}dz_2,\text{ for }x_2>d,\\
     \tw(\xi,x_2;y)&=-\frac{\tu_+(\xi,0+;-x_2)e^{-iy_1\xi}}{2W(\xi)\sqrt{\xi^2-k_1^2}}\times
     \\&\int_{-d}^{d}\tu_+(\xi,0+;z_2)q(z_2)
    e^{-|z_2-y_2|\sqrt{\xi^2-k_1^2}}dz_2,\text{ for }x_2<-d,
  \end{split}
\end{equation}
which shows that we need formul{\ae} for $\tu_+(\xi,0+;z_2)$ for all $z_2\geq
-d.$ We also include the formula for $z_2<-d,$ with $|\xi|<k_1:$ 
\begin{equation}
   \tu_+(\xi,0+;z_2)=
    \begin{cases}
      &e^{iBz_2},\text{ for }z_2>d,\\
      &e^{iBd}\left[\cos A(d-z_2)-iB\frac{\sin
          A(d-z_2)}{A}\right]\text{ for }|z_2|<d,\\
      &e^{iBd}\Big[\left(\cos 2dA \cos B(d+z_2)+A\sin 2dA\frac{\sin
          B(d+z_2)}{B}\right)
        +\\ &iB\left(\cos 2dA\frac{\sin B(d+z_2)}{B}-\frac{\sin 2d
          A}{A}\cos B(d+z_2)\right)\Big]\text{ for }z_2<-d.
    \end{cases}
\end{equation}
Note that for all $z_2<d,$
$$\tu_+(\xi,0+;z_2)=e^{i\sqrt{k_1^2-\xi^2}d}\left[\theta(\xi,z_2)+
  \sqrt{k_1^2-\xi^2}\varphi(\xi,z_2)\right],$$
where $\theta$ and $\varphi$ are entire functions $\xi.$ For the
convenience of the reader we include certain formul{\ae} for $k_1<|\xi|<k_2,$
\begin{equation}
   \tu_+(\xi,0+;z_2)=
    \begin{cases}
      &e^{-\tB z_2},\text{ for }z_2>d\\
      &e^{-\tB d}\left[\cos A(d-z_2)+\tB\frac{\sin A(d-z_2)}{A}\right]\text{ for }|z_2|<d;
    \end{cases}
\end{equation}
and for $k_2<|\xi|,$
\begin{equation}
   \tu_+(\xi,0+;z_2)=
    \begin{cases}
      &e^{-\tB z_2},\text{ for }z_2>d\\
      &e^{-\tB d}\left[\cosh \tA(d-z_2)+\tB\frac{\sinh \tA(d-z_2)}{\tA}\right]\text{ for }|z_2|<d.
    \end{cases}
\end{equation}

If $\{\xi_n\}$ are the positive roots of the Wronskian, then
\begin{equation}
  \gamma^{-}_n=\{-\xi_n+\nu e^{i\theta}:\:\theta\in [0,2\pi)\}\text{ and }
     \gamma^{+}_n=\{\xi_n+\nu e^{i\theta}:\:\theta\in [0,2\pi)\}
\end{equation}
are circles of radius $\nu$ centered in these roots.  We let $-\gamma^{\pm}_{n+}$
denote the intersection of these circles with the upper half plane, oriented from left to
right. Before proceeding with this analysis, we observe that
$\tw(\xi,x_2;0,y_2)=\tw(-\xi,x_2;0,y_2),$ which implies that, in the integral defining
$\fw^{[1]},$ the contributions from $\Gamma_{\nu}^+\cap\bbR$ cancel exactly, and
therefore
\begin{equation}
  \fw^{[1]}=\frac{1}{2\pi}\sum_{n=1}^N\left[\int_{-\gamma^{-}_{n+}}\xi\tw(\xi,x_2;0,y_2)d\xi+
    \int_{-\gamma^{+}_{n+}}\xi\tw(\xi,x_2;0,y_2)d\xi\right].
\end{equation}
Using the symmetries of the integrand it is not difficult to show that
\begin{equation}
  \fw^{[1]}=\frac{1}{2\pi}\sum_{n=1}^N\int_{-\gamma^{+}_n}\xi\tw(\xi,x_2;0,y_2)d\xi;  
\end{equation}
where $-\gamma^+_n$ indicates that the circles about the $\{\xi_n\}$ are oriented in
the clockwise direction. This gives $-i$ times the sum of the residues of
$\xi\tw(\xi,x_2;0,y_2)$ at  the $\{\xi_n\}.$ It is not difficult to see that
\begin{equation}
  i\fw^{[1]}+\pa_{x_1}w_{0+}^g(0,x_2;0,y_2)=0,
\end{equation}
as the function $\pa_{x_1}w_{0+}^g$ is $i$ times the sum of the residues of
$i\xi\tw(\xi,x_2;0,y_2)$ at  the $\{\xi_n\}.$  This proves~\eqref{eqn51.300}.

While the kernel $\fw^{[1]}$ does not play any role in the integral equations
derived in Section~\ref{sec8}, we nonetheless include estimates for its behavior
as we need these results for the representation formula where $x_1\neq 0,$ which
involves $\pa_{y_1}w_{0+}(x;0,y_2).$

\subsection{Asymptotics for $|x_2|,|y_2|>d$}\label{sec6.1}
For $x_2>d,$ we write these functions in the form
\begin{multline}\label{eqn207.667}
    \fw^{[j]}(x_2,y_2)=\frac{1}{2\pi}\int_{\Gamma_{\nu}^+}\xi^j\tw(\xi,x_2;0,y_2)d\xi\\ =\frac{1}{2\pi}\int_{\Gamma_{\nu}^+}\xi^j\frac{\tu_+(\xi,0+;x_2)\fA(\xi,\sqrt{\xi^2-k_1^2};y_2)}{\fW(\xi,\sqrt{k_1^2-\xi^2})\sqrt{k_1^2-\xi^2}}d\xi,
\end{multline}
where, for $|\xi|<k_1,$ $\sqrt{\xi^2-k_1^2}=-i\sqrt{k_1^2-\xi^2}$ and we write
\begin{equation}
  \fA(\xi,\upsilon;y_2)=\int_{-d}^{d}q(z_2)e^{-\upsilon|y_2-z_2|}
  e^{-\upsilon d}\left[\cos A(d+z_2)+\upsilon\frac{\sin A(d+z_2)}{A}\right]dz_2.
\end{equation}
We rewrite the Wronskian as $W(\xi)=\fW(\xi,\sqrt{k_1^2-\xi^2}),$ with
\begin{equation}
  \fW(\xi,\upsilon)=-e^{2i \upsilon d}\left[(2\xi^2-k_1^2-k_2^2)\frac{\sin 2d A}{A}-2i\upsilon\cos
      2dA\right]\text{ for }|\xi|<k_1.
\end{equation}
These formul{\ae} hold for other ranges of $\xi$ by using the correct
branch of $\sqrt{k_1^2-\xi^2};$ as noted these formul{\ae} do not
depend on the choice of branch for $A=\sqrt{k_2^2-\xi^2}.$ We use this
formulation, as $\fA(\xi,\upsilon;y_2), $ and $\fW(\xi,\upsilon)$ are entire
functions of $(\xi,\upsilon),$ which allows us to better keep track of the
square root singularities at $\pm k_1$ in our subsequent computations.

If $d<x_2$ and $d<y_2,$ then the formula in~\eqref{eqn207.667} simplifies to
\begin{multline}\label{eqn98.201}
  \fw^{[j]}(x_2,y_2)=\frac{1}{2\pi}\int_{\Gamma_{\nu}^+}\xi^j
  \frac{e^{i\sqrt{k_1^2-\xi^2}(x_2+y_2)}\fA_0(\xi,-i\sqrt{k_1^2-\xi^2})}
  {\fW(\xi,\sqrt{k_1^2-\xi^2})\sqrt{k_1^2-\xi^2}}d\xi,
\end{multline}
where
\begin{equation}
  \fA_0(\xi,\upsilon)=\int_{-d}^{d}q(z_2)e^{\upsilon z_2}
  e^{-\upsilon d}\left[\cos A(d+z_2)+\upsilon\frac{\sin A(d+z_2)}{A}\right]dz_2.
\end{equation}
From this formula it is evident that the $ \{\fw^{[j]}(x_2,y_2)\}$ are functions only of
$x_2+y_2.$ With other choices of signs, $x_2+y_2$ in this formula is replaced
with $|x_2|+|y_2|,$ verifying the claim in the theorem.

To analyze $\fw^{[j]}$ we split the integral over $\xi$ into the segment where
$|\xi|<k_1+2\epsilon,$ for an $\epsilon<\min\{\nu/2,k_1/2\}$, a segment with
$k_1+\epsilon<|\xi|<k_2,$ and finally the segment where $k_2<|\xi|.$ The leading
order behavior is determined, via stationary phase, by the integral over
a small neighborhood of $\xi=0;$ whereas the diagonal singularities come from
large $|\xi|.$

We begin with $\xi\in [-(k_1+2\epsilon),k_1+2\epsilon],$ and
denote these contributions by $\fw^{[j]}_0(x_2,y_2).$ With $x_2,y_2>d$ we have
\begin{multline}\label{eqn98.20}
  \fw^{[j]}_{0}(x_2,y_2)
  =\frac{i}{2\pi}\int_{-(k_1+2\epsilon)}^{k_1+2\epsilon}\xi^j\frac{e^{i\sqrt{k_1^2-\xi^2}(x_2+y_2)}\fA_0(\xi,-i\sqrt{k_1^2-\xi^2})
  \varphi(\xi)}
  {\fW(\xi,\sqrt{k_1^2-\xi^2})\sqrt{k_1^2-\xi^2}}d\xi.
\end{multline}
Here $\varphi\in\cC^{\infty}_{c}((-(k_1+2\epsilon),k_1+2\epsilon)),$ with
$\varphi(t)=1,$ for $|t|<k_1+\epsilon.$

To treat this integral we use analyticity to deform the portions of the contour
near to $\pm k_1:$ we replace an interval $[-\epsilon-k_1,\epsilon-k_1]$ with a
smooth contour meeting the real axis smoothly, lying in the upper half plane and
an interval $[-\epsilon+k_1,\epsilon+k_1]$ with a smooth contour meeting the
real axis smoothly, lying in the lower half plane. We call this contour
$\Gamma^+_{\nu,\epsilon}.$ An example is shown in Figure~\ref{fig22}. On this
contour the analytic continuation of $\sqrt{k_1^2-\xi^2}$ is smooth, has a
positive imaginary part. Hence the integrand is smooth, compactly supported and
has a single non-degenerate critical point at $\xi=0.$ Using Cauchy's theorem to
deform the contour in the integral, and a standard stationary phase argument we
see that there are complete asymptotic expansions
\begin{multline}\label{eqn58.55}
 \fw^{[j]}_{0}(x_2,y_2)= \frac{i}{2\pi}\int_{\Gamma_{\nu,\epsilon}^+}\xi^j\frac{e^{i\sqrt{k_1^2-\xi^2}(x_2+y_2)}\fA_0(\xi,-i\sqrt{k_1^2-\xi^2})\varphi(\xi)}
       {\fW(\xi,\sqrt{k_1^2-\xi^2})\sqrt{k_1^2-\xi^2}}d\xi\\ \sim
      \frac{e^{ik_1(x_2+y_2)}}{(x_2+y_2)^{\frac{j+1}{2}}}\left[M_{j0}+\sum_{l=1}^{\infty}\frac{M_{jl}}{(x_2+y_2)^l}\right]
      \text{  for }j=0,1,2.
\end{multline}

A similar argument applies in all cases considered in the subsequent
sections. To treat the right half plane we use the contour
$\Gamma^-_{\nu,\epsilon},$ which is obtained by modifying $\Gamma^-_{\nu}$ is
the same way, i.e.~the change to the contour in the left half plane lies in
$\Im\xi>0,$ and that in the right half plane lies in $\Im\xi <0.$

\begin{figure}
  \centering
    \includegraphics[width= 10cm]{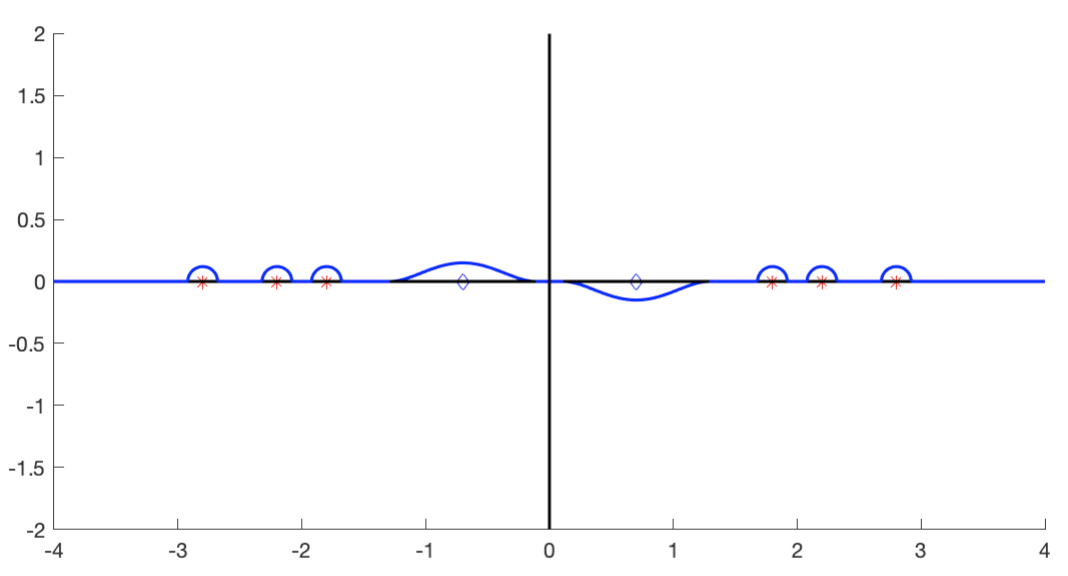}
    \caption{The blue contour is $\Gamma^+_{\nu,\epsilon}$ showing the smooth curves
      replacing intervals
      $[-\epsilon-k_1,\epsilon-k_1]\cup[-\epsilon+k_1,\epsilon+k_1].$ The roots
      of the Wronskian are shown as asterisks, and $\pm k_1$ as diamonds.}  
   \label{fig22}
\end{figure}

  Now we need to estimate the contributions from the remainder of
  $\Gamma_{\nu,\epsilon}^+$ lying over $[k_1+\epsilon,k_2]\cup
  [-k_2,-(k_1+\epsilon)].$ Other than the semi-circles centered on the
  roots of the Wronskian,
  \begin{equation}
    \bigcup_{n=1}^N\{\pm \xi_n+\nu e^{i\theta}:\:\theta\in
           [\pi,0]\}=\bigcup_{n=1}^N
           C^{\pm}_{n,\nu},
  \end{equation}
  we use the evenness of the integrand to restrict our attention to
  $[k_1+\epsilon,k_2].$   
  
  To estimate the contributions from the semi-circles we need to
  estimate both $|\fA_0(\xi,\sqrt{\xi^2-k_1^2})|,$ from above, and
  $|\fW(\xi,i\sqrt{\xi^2-k_1^2})|,$ from below, on these
  arcs. As the roots of the Wronskian are all simple, there
  is a constant $M$ so that
  \begin{equation}
    |\fW(\xi,i\sqrt{\xi^2-k_1^2})|\geq M\nu\text{ for }\xi \in C^{\pm}_{n,\nu}.
  \end{equation}
  On the other hand, for $\xi\in C^{\pm}_{n,\nu},$ 
  \begin{equation}
  |\fA_0(\xi,\upsilon)|\leq \int_{-d}^{d}q(z_2)e^{\alpha(z_2-d)}
  \left[\exp\beta(d-z_2)+|\upsilon|\frac{\exp\beta(d-z_2)}{|\sqrt{k_2^2-\xi^2}|}\right]dz_2,
  \end{equation}
  where
  \begin{equation}
    \alpha=\Re\left[\sqrt{(\pm\xi_n+\nu e^{i\theta})^2-k_1^2}\right]\text{ and }
    \beta=\left|\Im\left[\sqrt{k_2^2-(\pm\xi_n+\nu e^{i\theta})^2}\right]\right|.
  \end{equation}
  There is a constant, $m$ so that, for $\theta\in [0,2\pi],$ we have
  \begin{equation}
    \alpha>\sqrt{\xi_n^2-k_1^2}(1-m\nu)\text{ and }
    \beta<m\nu.
  \end{equation}
  These inequalities rely on the fact that $k_1<\xi_j<k_2.$ If we choose $0<\nu$
  sufficiently small so that
  \begin{equation}
    m\nu-\sqrt{\xi_n^2-k_1^2}(1-m\nu)<0\text{ for }n=1,\dots,N,
  \end{equation}
  then it follows that there exists a constant, $M,$ so that
  \begin{equation}
    |\fA_0(\xi,\sqrt{\xi^2-k_1^2})|\leq M\text{ for }\xi\in C^{\pm}_{n,\nu}\text{
      for }n=1,\dots, N.
  \end{equation}
  For sufficiently small $\nu$ we therefore have the estimates
  \begin{multline}
    \left|\sum_{n=1}^N\int_{C_{n,\nu}^{\pm}}\xi^j\frac{e^{-\sqrt{\xi^2-k_1^2}(x_2+y_2)}\fA_0(\xi,\sqrt{\xi^2-k_1^2})}
              {\fW(\xi,i\sqrt{\xi^2-k_1^2})\sqrt{\xi^2-k_1^2}}d\xi\right|\leq\\
              M\sum_{n=1}^N\frac{e^{-\sqrt{\xi_n^2-k_1^2}(1-m\nu)(x_2+y_2)}}{\nu}.
  \end{multline}
  To treat $\Gamma_{\nu}^+\cap [k_1+\epsilon,k_2],$ we see that $|\fA_0(\xi,\upsilon)|$ is
  bounded by a constant on this set and therefore
  \begin{multline}
    \left|\,\int\limits_{\Gamma_{\nu}^+\cap
      [k_1+\epsilon,k_2]}\xi^j\frac{e^{-\sqrt{\xi^2-k_1^2}(x_2+y_2)}\fA_0(\xi,\sqrt{\xi^2-k_1^2})(1-\varphi(\xi))}
                       {\fW(\xi,i\sqrt{\xi^2-k_1^2})\sqrt{\xi^2-k_1^2}}d\xi\right|\leq\\ M\frac{e^{-\mu
                           k_1(x_2+y_2)/2}}{\nu}.
  \end{multline}

  To summarize, we have shown that for $x_2,y_2>d,$ the contributions to the
  integrals defining $\fw^{[j]}(x_2,y_2),$ from frequencies $\xi\in [-k_2,k_2]$
  are given by the asymptotic expansions in~\eqref{eqn58.55}, with errors
  $O((x_2+y_2)^{-N}),$ for any $N>0.$ We conclude the discussion of this case by
  estimating the contribution from $|\xi|>k_2,$ which we denote by
  $\fw^{[j]}_2(x_2,y_2).$ As before, by evenness, it suffices to consider
  $[k_2,\infty).$ Using the formul{\ae} above we see that
    \begin{equation}
      \fw_2^{[j]}(x_2,y_2)=
      \int_{k_2}^{\infty}\xi^j\frac{e^{-\sqrt{\xi^2-k_1^2}(x_2+y_2)}\fA_0(\xi,\sqrt{\xi^2-k_1^2})}
                     {\fW(\xi,i\sqrt{\xi^2-k_1^2})\sqrt{\xi^2-k_1^2}}d\xi,
    \end{equation}
    where
    \begin{multline}
      \fA_0(\xi,\sqrt{\xi^2-k_1^2})=\int_{-d}^dq(z_2)e^{\sqrt{\xi^2-k_1^2}(z_2-d)}\times
      \\\left[\cosh\sqrt{\xi^2-k_2^2}(d+z_2)+\sqrt{\xi^2-k_1^2}
        \frac{\sinh\sqrt{\xi^2-k_2^2}(d+z_2)}{\sqrt{\xi^2-k_2^2}}\right]dz_2.
    \end{multline}
    From this formula we can easily show that there is a constant $M$ so that
    \begin{equation}
      |\fA_0(\xi,\sqrt{\xi^2-k_1^2})|\leq
      \frac{Me^{2d\sqrt{\xi^2-k_1^2}}}{1+|\xi|}\text{ for }\xi\in
           [k_2,\infty),
    \end{equation}
    and therefore, for  $x_2,y_2>d,\, |\xi|>k_2$ we have the estimate
    \begin{equation}\label{eqn73.9}
      |\tw(\xi,x_2;0,y_2)|\leq M\frac{e^{-\sqrt{\xi^2-k_1^2}(x_2+y_2-2d)}}{(1+|\xi|)^{3}}.
    \end{equation}
    Hence
    \begin{equation}
      \begin{split}
        | \fw_2^{[j]}(x_2,y_2)|&\leq 
      M\int_{k_2}^{\infty}\frac{\xi^{j}e^{-\sqrt{\xi^2-k_1^2}(x_2+y_2-2d)}}{(1+|\xi|)^3}d\xi\\
      &\leq\begin{cases} &Me^{-\sqrt{k_2^2-k_1^2}(x_2+y_2-2d)}\text{ for }j=0,1\\
      &Me^{-\sqrt{k_2^2-k_1^2}(x_2+y_2-2d)}\left|\log\frac{2(x_2+y_2)}{x_2+y_2-2d}\right|\text{
        for }j=2.
      \end{cases}
      \end{split}
    \end{equation}
  The calculation above indicates that the diagonal singularity should be
  $|x-y|^2\log|x-y|,$ and this implies that we should expect a $\log$-singularity
  in $\fw^{[2]}$ where $x_2+y_2=2d.$

    Altogether we have shown that, for $x_2,y_2>d$ we have the asymptotic
    formul{\ae}
    \begin{equation}\label{eqn581.55}
 \fw^{[j]}(x_2,y_2)\sim \\
      \frac{e^{ik_1(x_2+y_2)}}{(x_2+y_2)^{\frac{j+1}{2}}}\left[M_{j0}+\sum_{l=1}^{\infty}\frac{M_{jl}}{(x_2+y_2)^l}\right]
       \text{  for }j=0,1,2. 
    \end{equation}
     It is not difficult to see that the same arguments apply if
    $\pm x_2,\pm y_2>d$
    to show that the same expansions hold with $(x_2,y_2)$ replaced with
    $(|x_2|,|y_2|),$ for example:
    \begin{equation}\label{eqn80.56}
      \fw^{[j]}(x_2,y_2)\sim
      \frac{e^{ik_1(|x_2|+|y_2|)}}{(|x_2|+|y_2|)^{\frac{j+1}{2}}}\left[M^{\pm,\pm}_{j0}+\sum_{l=1}^{\infty}
        \frac{M^{\pm,\pm}_{jl}}{(|x_2|+|y_2|)^l}\right],
       \text{  for }j=0,1,2. 
    \end{equation}
    The function $\fw^{[0]}(x_2,y_2)$ is continuous as
$(x_2,y_2)\to (\pm d,\pm d),$ and
   \begin{equation}\label{eqn230.667}
     |\fw^{[2]}(x_2,y_2)|\leq M|\log(|x_2+y_2|-2d)|,\text{ as }|x_2+y_2|\to 2d.
   \end{equation}
   There is no singularity as $(x_2,y_2)\to (\pm d ,\mp d).$ 
    
    \Rd In sets where $|x_2|, |y_2|>d,$ the kernel is an infinitely
    differentiable function of $|x_2|+|y_2|.$ As the integrand is exponentially
    decaying, provided $|x_2|+|y_2|>2d,$ the formul{\ae} in~\eqref{eqn98.201}
    can be differentiated under the integral sign. Applying
    $\pa_{x_2}^l\pa_{y_2}^m$ introduces a factor of
    $[i\sqrt{k_1^2-\xi^2}]^{l+m}$ into the numerator of the integrand. The
    argument above applies {\em mutatis mutandis} to show that these functions
    also have asymptotic expansions, and by the theorem of Coddington and
    Levinson they must be obtained by differentiating the expansion
    in~\eqref{eqn581.55}. Provided $|x_2|+|y_2|>2d+\eta,$ for an
    $\eta>0,$ it is not difficult to see that the contributions from $|\xi|>k_2$
    are uniformly exponentially decaying. \Bk

  From our observations about the derivatives of these kernels we conclude
     that  for $j=0,1,2,$ with $x_2>d, y_2>d$ we have the expansions.
    \begin{equation}\label{eqn79.55}
      \begin{split}
        \pa_{x_2}\fw^{[j]}(x_2,y_2)-ik_1\fw^{[j]}(x_2,y_2)&\sim \frac{e^{ik_1(x_2+y_2)}}{(x_2+y_2)^{\frac{j+3}{2}}}\left[M'_{j0}+\sum_{l=1}^{\infty}\frac{M'_{jl}}{(x_2+y_2)^l}\right],
      \\
       \pa_{y_2}\fw^{[j]}(x_2,y_2)-ik_1\fw^{[j]}(x_2,y_2)&\sim \frac{e^{ik_1(x_2+y_2)}}{(x_2+y_2)^{\frac{j+3}{2}}}\left[M''_{j0}+\sum_{l=1}^{\infty}
         \frac{M''_{jl}}{(x_2+y_2)^l}\right].
        \end{split}
    \end{equation}
  Similarly, we can show that the functions
   \begin{multline}
             \pa_{x_2}\fw^{[j]}(x_2,y_2)\mp  ik_1\fw^{[j]}(x_2,y_2),
        \text{  for }\pm x_2>d,\text{ and }\\
          \pa_{y_2}\fw^{[j]}(x_2,y_2)\mp
          ik_1\fw^{[j]}(x_2,y_2)\text{ for }\pm y_2>d,
        \end{multline}
have asymptotic expansions like those in~\eqref{eqn79.55}, with $(x_2,y_2)$
replaced by $(|x_2|,|y_2|),$ obtained by applying the appropriate operator,
$\pa_{x_2}\mp ik_1,$ or $\pa_{y_2}\mp ik_1,$ to the expansions
in~\eqref{eqn80.56}.

\subsection{Asymptotics for $|x_2|$ or $|y_2|<d$}\label{sec6.2}
    We now consider what happens if either $|x_2|<d,$ or $|y_2|<d.$ We start by
    assuming that $y_2>d,$ but $|x_2|<d.$ For this case the integrals defining
    $\fw^{[j]}$ take the form
    \begin{equation}\label{eqn233.667}
      \fw^{[j]}(x_2,y_2)=
      \frac{i}{2\pi}\int_{\Gamma_{\nu,\epsilon}^+}\xi^j\frac{e^{i\sqrt{k_1^2-\xi^2}(y_2+2d)}
        \fB(\xi,\sqrt{k_1^2-\xi^2};x_2)}
         {\fW(\xi,\sqrt{k_1^2-\xi^2})\sqrt{k_1^2-\xi^2}}d\xi,
    \end{equation}
    where   
    \begin{multline}\label{eqn86.86}
      \fB(\xi,\upsilon;x_2)=\int\limits_{-d}^{x_2}\Bigg[\cos A(d-x_2)\cos
        A(d+z_2)-i\frac{\upsilon}{A}\sin A(2d+z_2-x_2)-\\
        \frac{\upsilon^2}{A^2}\sin A(d-x_2)\sin
        A(d+z_2)\Bigg] e^{-iz_2\upsilon}q(z_2)dz_2+\\
      \int\limits_{x_2}^{d}\Bigg[\cos A(d+x_2)\cos
        A(d-z_2)-i\frac{\upsilon}{A}\sin A(2d-z_2+x_2)-\\
        \frac{\upsilon^2}{A^2}\sin A(d+x_2)\sin
        A(d-z_2)\Bigg]e^{-iz_2\upsilon}q(z_2)dz_2.      
    \end{multline}
 Again $\fB(\xi,\upsilon;x_2)$ is an entire
 function of $(\xi,\upsilon).$ In the integral over $\Gamma_{\nu,\epsilon}^+$ we
 take $\upsilon=\sqrt{k_1^2-\xi^2},$ which equals $i\sqrt{\xi^2-k_1^2}$ for $k_1<|\xi|.$

 As before the principal contribution to $\fw^{[j]}(x_2,y_2)$, as
 $y_2\to\infty,$ comes from the stationary phase at $\xi=0.$ The
 function $|\fB(\xi,\sqrt{k_1^2-\xi^2};x_2)|$ is easily seen to be
 uniformly bounded where $|\xi|<k_1, |x_2|<d.$
 We separate the contributions from the stationary point at zero, and
 the contributions from endpoints $\pm k_1.$ The leading contributions from
 the stationary point are
 \begin{equation}\label{eqn132.25}
   \fw^{[j]}_{0}(x_2,y_2)=C'_jk_1^{\frac{j-1}{2}}\frac{e^{ik_1y_2}\fB(0,k_1;x_2)}
      {\fW(0,k_1)\,y_2^{\frac{j+1}{2}}}+O\left(|y_2|^{-{\frac{j+3}{2}}}\right)\text{
       for }j=0,1,2.
 \end{equation}
 Indeed, there are complete asymptotic expansions of the form
  \begin{equation}\label{eqn132.251}
   \fw^{[j]}_{0}(x_2,y_2)=k_1^{\frac{j-1}{2}}\frac{e^{ik_1y_2}}
      {\fW(0,k_1)\,y_2^{\frac{j+1}{2}}}\left[\sum_{l=0}^{\infty}\frac{b_{jl}(x_2)}{y_2^l}\right],
       \text{ for }j=0,1,2.
  \end{equation}
 Using the deformed contour $\Gamma^+_{\nu,\epsilon},$ we again see that the
 remainder of the interval $[-k_1-2\epsilon,k_1+2\epsilon]$ contributes
 $O(y_2^{-N})$ for any $N>0.$

 We next consider the integral over $\Gamma_{\nu,\epsilon}^+$ lying above
 $k_1+\epsilon<|\xi|<k_2,$ which is of the form
 \begin{multline}
    \fw^{[j]}_{1}(x_2,y_2)=\\ \frac{1}{2\pi}\int\limits_{\{k_1+\epsilon<|\xi|<k_2\}\cap
      \Gamma_{\nu,\epsilon}^+}\xi^j\frac{e^{-\sqrt{\xi^2-k_1^2}(y_2+2d)}
      \fB(\xi,i\sqrt{\xi^2-k_1^2};x_2)\varphi(\xi)}
       {\fW(\xi,i\sqrt{\xi^2-k_1^2})\sqrt{\xi^2-k_1^2}}d\xi,
 \end{multline}
 for a suitable cut-off function $\varphi.$ As before we use the lower bound on
 the Wronskian, $|\fW(\xi,i\sqrt{\xi^2-k_1^2})|>m\nu.$ We also need to bound
 $\fB(\xi,i\sqrt{\xi^2-k_1^2};x_2)$ from above. For
 $|x_2|<d,$ we can easily show that there is a constant $M$ so that
   \begin{equation}
     |\fB(\xi,i\sqrt{\xi^2-k_1^2};x_2)|\leq Me^{\rho(\xi) d},\text{
       where }\rho(\xi)=\Re\left(\sqrt{\xi^2-k_1^2}\right).
   \end{equation}
   It then follows that
   \begin{equation}
     | \fw^{[j]}_{1}(x_2,y_2)|\leq \frac{M}{\nu}
     \int\limits_{k_1+\epsilon<|\xi|<k_2}e^{-\rho(\xi)(y_2+d)}d\xi
   \end{equation}
   If we set
   \begin{equation}
   \alpha=\min\{\rho(\xi):\:
   \xi\in\Gamma_{\nu}^+\text{ with }k_1+\epsilon<|\xi|<k_2 \}, 
   \end{equation}
 then $\alpha>0,$ provided that $\nu$ is sufficiently small,
   from which it follows easily that
   \begin{equation}
     |\fw^{[j]}_{1}(x_2,y_2)|\leq \frac{Me^{-\alpha y_2}}{\nu}.
   \end{equation}
   
   This leaves the integral over the set $\{|\xi|>k_2\}.$ In this set
   we can prove the estimate
   \begin{equation}
     \left|\fB(\xi,i\sqrt{\xi^2-k_1^2};x_2)\right|\leq
     Me^{(2d+x_2)\sqrt{\xi^2-k_1^2}}\left[\frac{1}{1+|\xi|}+(d-x_2)\right],
   \end{equation}
   and therefore, if either $|x_2|<d$ or $|y_2|<d,$ and $|\xi|>k_2,$ we have the estimate
   \begin{equation}\label{eqn92.9}
     |\tw(\xi,x_2;0,y_2)|<M\frac{e^{-\sqrt{\xi^2-k_2^2}|x_2-y_2|}}{(1+|\xi|)^2}\left[\frac{1}{(1+|\xi|)}+|y_2-x_2|\right],
   \end{equation}
   and therefore for $|x_2|<d<y_2,$ we have
   \begin{equation}
     \begin{split}
       |\fw^{[j]}_2(x_2,y_2)|&=
      \left|\frac{1}{2\pi}\int_{k_2<|\xi|}\xi^j\frac{e^{\sqrt{\xi^2-k_1^2}(y_2+2d)}
        \fB(\xi,i\sqrt{\xi^2-k_1^2};x_2)}
                  {\fW(\xi,i\sqrt{\xi^2-k_1^2})\sqrt{\xi^2-k_1^2}}d\xi\right|\\
     &\leq
                  M\int_{k_2}^{\infty}e^{-\sqrt{\xi^2-k_1^2}(y_2-x_2)}
                      \left[\frac{1}{(1+|\xi|)^{3-j}}+\frac{y_2-x_2}{(1+|\xi|)^{2-j}}\right]d\xi.
     \end{split}
   \end{equation}
   An elementary calculation now shows that
   \begin{equation}
     |\fw^{[j]}_2(x_2,y_2)|\leq \begin{cases}
       &Me^{-\sqrt{k_2^2-k_1^2}(y_2-x_2)}\text{ for }j=0,1\\
       &Me^{-\sqrt{k_2^2-k_1^2}(y_2-x_2)}\left|\log|y_2-x_2|\right|\text{ for }j=2.
       \end{cases}
   \end{equation}
   
   We can again show that this analysis applies equally well
   if $|x_2|<d,y_2<-d,$ and therefore we have the asymptotic expansions
   \begin{equation}\label{eqn246.667}
     \fw^{[j]}(x_2,y_2)\sim
  k_1^{\frac{j-1}{2}}\frac{e^{ik_1|y_2|}}
      {|y_2|^{\frac{j+1}{2}}}\left[\sum_{l=0}^{\infty}\frac{b^{\pm}_{jl}(x_2)}{|y_2|^l}\right],
       \text{ for }\pm y_2>d,\,j=0,1,2.
  \end{equation}
    The function $\fw^{[0]}(x_2,y_2)$ is continuously differentiable as
    $y_2\to x_2,$ and
   \begin{equation}
     |\fw^{[2]}(x_2,y_2)|\leq M|\log|y_2-x_2||,\text{ as }y_2\to x_2.
   \end{equation}
   Approaching from $|x_2|<d, |y_2|>d,$  this term has a logarithmic singularity at
   the points $[\pm d,\pm d].$
   
  \Rd As before we see that $\fw^{[j]}(x_2,y_2)$ are infinitely differentiable
  functions of $y_2,$ where $|x_2|\leq d<|y_2|,$ as the integral
  in~\eqref{eqn233.667} can differentiated as often as we please.  Applying
  $\pa_{y_2}^l$ introduces a factor of $[i\sqrt{k_1^2-\xi^2}]^l$ in the numerator
  of the integrand. The argument showing the existence of an asymptotic
  expansion applies equally well if $l>0,$ and once again the theorem of
  Coddington and Levinson shows that the expansions are obtained by
  differentiating~\eqref{eqn246.667}. The contributions from $|\xi|>k_2$ are
  uniformly exponentially decaying provided that $d+\eta<y_2$ for an
  $\eta>0.$ \Bk
  In particular, we can applying the operator $\pa_{y_2}\mp ik_1$ to the
  expansions above to see that
   \begin{equation}
     \pa_{y_2}\fw^{[j]}(x_2,y_2)\mp ik_1 \fw^{[j]}(x_2,y_2)=O\left(|y_2|^{-\frac{j+3}{2}}\right),
   \end{equation}
   and in fact these functions have complete asymptotic expansions.

   The final case we need to treat is $|y_2|<d, |x_2|>d.$ We begin with $x_2>d.$
   For this case we write
   \begin{equation}\label{eqn249.667}
     \fw^{[j]}(x_2,y_2)=
     \frac{1}{2\pi}\int_{\Gamma_{\nu,\epsilon}^+}\frac{\xi^j
       e^{i\sqrt{k_1^2-\xi^2}x_2}\fC(\xi,\sqrt{k_1^2-\xi^2};y_2)}
          {\fW(\xi,\sqrt{k_1^2-\xi^2})
       \sqrt{k_1^2-\xi^2}}d\xi,
   \end{equation}
   where
   \begin{multline}
     \fC(\xi,\upsilon;y_2)=\\
     e^{i\upsilon(d+y_2)}\int_{-d}^{y_2}e^{-i\upsilon z_2}\left[\cos
       A(d+z_2)-i\upsilon\frac{\sin A(d+z_2)}{A}\right]q(z_2)dz_2+\\
     e^{i\upsilon(d-y_2)}\int_{y_2}^{d}e^{i\upsilon z_2}\left[\cos
       A(d+z_2)-i\upsilon\frac{\sin A(d+z_2)}{A}\right]q(z_2)dz_2,
   \end{multline}
  with $A=\sqrt{k_2^2-\xi^2};$ is an entire function of $(\xi,\upsilon).$

   As in the earlier cases, there is a stationary phase contribution
   from $\xi=0,$ which we denote by $\fw^{[j]}_{0}(x_2,y_2).$ The function
   $\fC(\xi,\sqrt{k_1^2-\xi^2};y_2)$ is an analytic function of $\xi$ in
   $|\xi|<k_1-\epsilon,$ for any $\epsilon>0.$ The stationary phase
   contributions are
   \begin{equation}\label{eqn148.25}
     \fw^{[j]}_{0}(x_2,y_2)\sim
     \frac{e^{ik_1x_2}}{\fW(0,k_1)\,
       x_2^{\frac{j+1}{2}}}\left[\sum_{l=0}^{\infty}
       \frac{c_{jl}(y_2)}{x_2^l}\right]      \text{ for }j=0,1,2.
             \end{equation}
 The analysis for the contribution from $\delta<|\xi|<k_1+\epsilon$  proceeds as before,
 and can be shown to be $O(x_2^{-N}),$ for any $N>0.$

      The remaining portion for $k_1+\epsilon<|\xi|<k_2$ is given by
      \begin{multline}
      \fw^{[j]}_{1}(x_2,y_2)=\\
     \frac{1}{i\pi}\int\limits_{\{k_1+\epsilon<|\xi|<k_2\}\cap
       \Gamma_{\nu,\epsilon}^+}
     \frac{e^{-\sqrt{\xi^2-k_1^2}x_2}\,\xi^j
       \fC(\xi,i\sqrt{\xi^2-k_1^2};y_2)(1-\varphi(\xi))}
          {\fW(\xi,i\sqrt{\xi^2-k_1^2})\sqrt{\xi^2-k_1^2}}d\xi,  
      \end{multline}
      where
      \begin{multline}
        \fC(\xi,i\sqrt{\xi^2-k_1^2};y_2)=
        e^{-\sqrt{\xi^2-k_1^2}(y_2+d)}\times \\
        \int_{-d}^{y_2}e^{\sqrt{\xi^2-k_1^2}
          z_2}\left[\cos A(d+z_2)+\sqrt{\xi^2-k_1^2}\frac{\sin
            A(d+z_2)}{A}\right]q(z_2)dz_2+\\ e^{-\sqrt{\xi^2-k_1^2}(d-y_2)}\times \\
        \int_{y_2}^{d}e^{-\sqrt{\xi^2-k_1^2}z_2}\left[\cos
          A(d+z_2)+\sqrt{\xi^2-k_1^2}\frac{\sin
            A(d+z_2)}{A}\right]q(z_2)dz_2\\
        \text{ with }A=\sqrt{k_2^2-\xi^2}.
      \end{multline}
      It is easy to see that for $|y_2|<d,$ 
      $|\fC(\xi,i\sqrt{\xi^2-k_1^2};y_2)|$ is bounded, and  the Wronskian satisfies
      $|\fW(\xi,i\sqrt{\xi^2-k_1^2})|>M\nu,$ on the part of
      $\Gamma_{\nu}^+$ lying over $k_1+\epsilon<|\xi|<k_2.$ Moreover
      $\Re\sqrt{\xi^2-k_1^2}>\alpha>0,$  therefore
      \begin{equation}
        |\fw^{[j]}_{1}(x_2,y_2)|\leq
        \frac{Me^{-\alpha x_2}}{\nu}.
      \end{equation}

      Where $|\xi|>k_2$ we have
     \begin{multline}\label{eqn155.22}
        \fC(\xi,i\sqrt{\xi^2-k_1^2};y_2)=
        e^{-\sqrt{\xi^2-k_1^2}(y_2+d)}\times \\
        \int_{-d}^{y_2}e^{\sqrt{\xi^2-k_1^2}
          z_2}\left[\cosh A(d+z_2)+\sqrt{\xi^2-k_1^2}\frac{\sinh
            A(d+z_2)}{A}\right]q(z_2)dz_2+\\ e^{-\sqrt{\xi^2-k_1^2}(d-y_2)}\times \\
        \int_{y_2}^{d}e^{-\sqrt{\xi^2-k_1^2}z_2}\left[\cosh
          A(d+z_2)+\sqrt{\xi^2-k_1^2}\frac{\sinh
            A(d+z_2)}{A}\right]q(z_2)dz_2\\ \text{ with }A=\sqrt{\xi^2-k_2^2}.
     \end{multline}
     The first term in~\eqref{eqn155.22} is bounded by
     \begin{equation}
       M\frac{e^{\sqrt{\xi^2-k_2^2}y_2}}{1+|\xi|};
     \end{equation}
     the other term requires somewhat more care. It is bounded by
     \begin{multline}
       Me^{\sqrt{\xi^2-k_1^2}y_2}\int_{y_2}^{d}e^{(\sqrt{\xi^2-k_2^2}-\sqrt{\xi^2-k_1^2})z_2}dz_2=\\
       Me^{\sqrt{\xi^2-k_1^2}y_2}\left[\frac{e^{(\sqrt{\xi^2-k_2^2}-\sqrt{\xi^2-k_1^2})d}-
           e^{(\sqrt{\xi^2-k_2^2}-\sqrt{\xi^2-k_1^2})y_2}}
         {\sqrt{\xi^2-k_2^2}-\sqrt{\xi^2-k_1^2}}\right].
     \end{multline}
     An elementary argument then shows that the expression in the bracket is
     bounded by $|d-y_2|$ and therefore
     \begin{equation}
       | \fC(\xi,i\sqrt{\xi^2-k_1^2};y_2)|\leq
       Me^{\sqrt{\xi^2-k_1^2}y_2}\left[\frac{1}{1+|\xi|}+|d-y_2|\right].
     \end{equation}
     These estimates show that
     \begin{equation}
       \begin{split}
       |\fw^{[j]}_2(x_2,y_2)|&=\frac{1}{\pi}
       \left|\int_{k_2<|\xi|}\frac{e^{-\sqrt{\xi^2-k_1^2}x_2}\,\xi^j
       \fC(\xi,i\sqrt{\xi^2-k_1^2};y_2)}
                 {\fW(\xi,i\sqrt{\xi^2-k_1^2})\sqrt{\xi^2-k_1^2}}d\xi\right|\\
                 &\leq
                 M\int_{k_2}^{\infty}e^{-\sqrt{\xi^2-k_1^2}(x_2-y_2)}\xi^{j-2}\left[\frac{1}{\xi}+(x_2-y_2)\right]d\xi.
      \end{split}           
     \end{equation}
     In the second line we use the fact that $x_2-y_2>d-y_2.$ It is easy to show
     that the second term is bounded by $Me^{-\sqrt{k_2^2-k_1^2}(x_2-y_2)}$ for $j=0,2.$
     We estimate the contribution from the first term, as before, to obtain that
     for $|y_2|<d, x_2>d,$ we have
     \begin{equation}
      |\fw^{[j]}_2(x_2,y_2)|\leq 
      \begin{cases}
        &Me^{-\sqrt{k_2^2-k_1^2}(x_2-y_2)}\text{ for }j=0,1\\
        &Me^{-\sqrt{k_2^2-k_1^2}(x_2-y_2)}\cdot\left|\log|x_2-y_2|\right|\text{ for }j=2.
      \end{cases}
     \end{equation}

     It is not difficult to show that the analogous estimates hold with
     $x_2<-d,$ so that
   \begin{equation}
      |\fw^{[j]}_2(x_2,y_2)|\leq 
      \begin{cases}
        &Me^{-\sqrt{k_2^2-k_1^2}|x_2-y_2|}\text{ for }j=0,1\\
        &Me^{-\sqrt{k_2^2-k_1^2}|x_2-y_2|}\cdot\left|\log|x_2-y_2|\right|\text{ for }j=2.
      \end{cases}
   \end{equation}
   Altogether we have, for $|y_2|<d,|x_2|>d,$
    \begin{equation}\label{eqn112.55}
      \fw^{[j]}(x_2,y_2)\sim
       \frac{e^{ik_1|x_2|}}{ |x_2|^{\frac{j+1}{2}}}\left[\sum_{l=0}^{\infty}
       \frac{c^{\pm}_{jl}(y_2)}{|x_2|^l}\right],    \text{ for }\pm x_2>d,\, j=0,1,2.
    \end{equation}
        The function $\fw^{[0]}(x_2,y_2)$ is continuously differentiable as
    $|x_2-y_2|\to 0,$ and 
    \begin{equation}
      |\fw^{[2]}(x_2,y_2)|\leq M|\log|x_2-y_2||.
    \end{equation}
Approaching from $|x_2|>d, |y_2|<d,$ this term has a logarithmic singularity at the points $[\pm d,\pm d].$
    
   \Rd As before we see the kernels $\fw^{[j]}(x_2,y_2)$ are infinity
   differentiable as functions of $x_2,$ where $|x_2|>d,$ as the formula
   in~\eqref{eqn249.667} can be differentiated as often as we please. The
   argument showing that the derivatives have asymptotic expansions proceeds as
   in the previous case. \Bk We can therefore show that
   \begin{equation}
     \pa_{x_2}\fw^{[j]}(x_2,y_2)\mp ik_1
     \fw^{[j]}(x_2,y_2)=O\left(|x_2|^{-\frac{j+3}{2}}\right)\text{ as }\pm x_2 \to\infty,
   \end{equation}
   have asymptotic expansions obtained by applying $\pa_{x_2}\mp ik_1$ to the
   expansions in~\eqref{eqn112.55}.

   \subsection{The Diagonal Singularity}
   The singularities of the function $w(x;y)$ are confined to the set $x=y$
   where $|x_2|,|y_2|\leq d.$ In this section we use the Fourier representation
   to study the nature of this singularity.  As expected from the discussion
   surrounding~\eqref{eqn87.24}, the principal singularity behaves like
   $(x_2-y_2)^2\log|x_2-y_2|.$ Note that the discussion leading
   to~\eqref{eqn87.24} assumed that the potential $q(x_2)$ is smooth, whereas
   here we continue working with $q$ as defined in~\eqref{eqn45.32}.

   We start by assuming that $-d<y_2<x_2<d,$ so that
   \begin{equation}\label{eqn166.24}
     \fw^{[j]}(x_2,y_2)=\int_{\Gamma_{\nu}^+}\frac{\xi^j\fD(\xi,\sqrt{\xi^2-k_1^2};x_2,y_2)}
        {\fW(\xi,\sqrt{k_1^2-\xi^2})\sqrt{k_1^2-\xi^2}}d\xi,
   \end{equation}
   where
   \begin{multline}\label{eqn266.667}
  \fD(\xi,\upsilon;x_2,y_2)=
  \int_{-d}^{x_2}\tu_+(x_2,0+;\xi)\tu_+(-z_2,0+;\xi)e^{-\upsilon|z_2-y_2|}q(z_2)dz_2+\\
  \int_{x_2}^{d}\tu_+(-x_2,0+;\xi)\tu_+(z_2,0+;\xi)e^{-\upsilon|z_2-y_2|}q(z_2)dz_2\\
  =\int_{-d}^{y_2}e^{-2\upsilon d}\left[\cosh A(d-x_2)+\upsilon\frac{\sinh
      A(d-x_2)}{A}\right]\times \\\left[\cosh A(d+z_2)+\upsilon\frac{\sinh
      A(d+z_2)}{A}\right]
  e^{-\upsilon(y_2-z_2)}q(z_2)dz_2+\\
  \int_{y_2}^{x_2}e^{-2\upsilon d}\left[\cosh A(d-x_2)+\upsilon\frac{\sinh
      A(d-x_2)}{A}\right]\times \\\left[\cosh A(d+z_2)+\upsilon\frac{\sinh
      A(d+z_2)}{A}\right]
  e^{-\upsilon(z_2-y_2)}q(z_2)dz_2+\\
   \int_{x_2}^{d}e^{-2\upsilon d}\left[\cosh A(d+x_2)+\upsilon\frac{\sinh
      A(d+x_2)}{A}\right]\times \\\left[\cosh A(d-z_2)+\upsilon\frac{\sinh
      A(d-z_2)}{A}\right]
  e^{-\upsilon(z_2-y_2)}q(z_2)dz_2,
   \end{multline}
   with $A=\sqrt{\xi^2-k_2^2}.$ As before, $\fD(\xi,\upsilon;x_2,y_2)$ is an entire
   function of $(\xi,\upsilon);$ we use $\upsilon=-i\sqrt{k_1^2-\xi^2},$ for
   $|\xi|<k_1,$ and $\upsilon=\sqrt{\xi^2-k_1^2},$ for $|\xi|>k_1.$ In the second,
   more explicit expression, we use the conditions that $-d<y_2<x_2<d,$ and
   $|\xi|>k_2.$

   The parts of the integrals in~\eqref{eqn166.24} where
   $|\xi|<k_2+2,$ define  $\cC^2$-functions of $(x_2,y_2),$ so we focus our
   attention on $|\xi|\geq k_2+1.$ In this set we have
   \begin{equation}\label{eqn168.24}
     \fw^{[j]}_2(x_2,y_2)=-i\int_{|\xi|>k_2+1}\frac{\xi^j\fD(\xi,\sqrt{\xi^2-k_1^2};x_2,y_2)\varphi(\xi)}
        {\fW(\xi,i\sqrt{\xi^2-k_1^2})\sqrt{\xi^2-k_1^2}}d\xi.
   \end{equation}
   Here $\varphi\in\cC^{\infty}(\bbR)$ is an even non-negative function with
   $\supp\varphi\subset (-\infty,-(k_2+1))\cup (k_2+1,\infty)$ with $\varphi(t)=1$ if $|t|>k_2+2.$

   Using the definition of $\fD$ we see that,  with $B=\sqrt{\xi^2-k_1^2},$
   \begin{multline}
     |\fD(\xi,B;x_2,y_2)|\leq
     M\Bigg[e^{-(Ax_2+By_2)}\int_{-d}^{y_2}e^{(A+B)z_2}dz_2+\\
         e^{-(Ax_2-By_2)} \int_{y_2}^{x_2}  e^{(A-B)z_2}dz_2+
           e^{Ax_2+By_2} \int_{x_2}^{d}  e^{-(A+B)z_2}dz_2\Bigg].
   \end{multline}
   Performing these integrals and using elementary estimates we see that
   \begin{equation}
      |\fD(\xi,B;x_2,y_2)|\leq Me^{-\sqrt{\xi^2-k_2^2}|x_2-y_2|}\left[\frac{1}{1+|\xi|}+|x_2-y_2|\right].
   \end{equation}
   and therefore~\eqref{eqn92.9} holds if both $|x_2|<d,$ and $|y_2|<d.$
      Inserting this into~\eqref{eqn168.24},  we see that
   \begin{equation}\label{eqn171.24}
     \begin{split}
     |\fw^{[j]}_2(x_2,y_2)|&\leq
     M\int_{k_2+1}^{\infty}\xi^{j-2}e^{-\sqrt{\xi^2-k_2^2}|x_2-y_2|}\left[\frac{1}{1+|\xi|}+
       |x_2-y_2|\right]d\xi\\
     &\leq\begin{cases} &M[1+|x_2-y_2|^2|\log|x_2-y_2||]\text{ for }j=0,\\
     &M[1+|x_2-y_2||\log|x_2-y_2||]\text{ for }j=1,\\
     &M[1+|\log|x_2-y_2||]\text{ for }j=2.
     \end{cases}
     \end{split}
   \end{equation}

   It is straightforward to see that $|\fD(\xi,B;x_2,y_2)|$ satisfies the same
   estimate if $-d<x_2<y_2<d,$ and therefore the estimates in~\eqref{eqn171.24}
   hold in this case as well. It is easy to see that differentiating
   $\fw^{[j]}(x_2,y_2)$ with respect to $x_2$ or $y_2$ has the effect of
   increasing $j$ by 1. Thus we see that $\pa_{x_2}\fw^{[0]}(x_2,y_2),
   \pa_{y_2}\fw^{[0]}(x_2,y_2),$ are uniformly bounded, and
   $\pa_{x_2}\fw^{[1]}(x_2,y_2), \pa_{y_2}\fw^{[1]}(x_2,y_2),$ along with the second
   derivatives of $\fw^{[0]}(x_2,y_2)$ have a $\log|x_2-y_2|$-singularity along the
   diagonal.

   \Rd Exercising somewhat more care in evaluating the terms
   in~\eqref{eqn166.24}, which appear in~\eqref{eqn266.667}, one can show that
   these kernels have expansions near the diagonal within $B_d$ of the form
   \begin{multline}\label{eqn271.678}
     \fw^{[j]}_2(x_2,y_2)=|x_2-y_2|^{2-j}\log|x_2-y_2|\,\psi_0^{[j]}(|x_2-y_2|)+\psi_1^{[j]}(|x_2-y_2|)+\\
     |x_2+y_2-2d|^{2-j}\log|x_2+y_2-2d|\,\chi_0^{[j]}(x_2+y_2-2d)+\\
     |x_2+y_2+2d|^{2-j}\log|x_2+y_2+2d|\,\chi_1^{[j]}(x_2+y_2+2d)+\chi_2^{[j]}(x_2+y_2).
   \end{multline}
   Here $\psi_0^{[j]},\psi_1^{[j]},\chi_0^{[j]},\chi_1^{[j]},\chi_1^{[j]}$ are
   smooth functions defined near $x=0.$ The existence of the additional
   singularities in the corners $(\pm d,\pm d)$ is suggested
   by~\eqref{eqn230.667}, though these singularities do not extend beyond $B_d.$
   From this expansion it is clear that these kernels with $j=0,1,2$ define
   smoothing operators. For $j=0,1$ this is obvious; for $j=2$ classical
   estimates in potential theory show that $\fw^{[2]}$ maps data in
   $\cC_{\alpha}(\bbR),$ for $0<\alpha<\frac 12,$ into H\"older continuous
   functions.  \Bk


\end{document}